\newcommand{\Paaras}[1]
{}%
\newcommand{\sonia}[1]
{}%
\newcommand{\todo}[1]
{}%
\newcommand{\newpaaras}[1]
{}
\newcommand{\AND}{\mathbin{\wedge}}
\newcommand{\OR}{\mathbin{\vee}}
\newcommand{\IMPLIES}{\mathbin{\rightarrow}} 
\newcommand{\IMP}{\IMPLIES}
\newcommand{\DIA}[1][]{\Diamond_{#1}}
\newcommand{\BOX}[1][]{\Box_{#1}}
\newcommand{\bottom}{\mathord{\bot}}
\newcommand{\BOT}{\mathord{\bottom}}
\newcommand{\TOP}{\mathord{\top}}
\newcommand{\BNFequals}{\coloncolonequals}
\newcommand{\proves}[2]{#1 \vdash #2}
\newcommand{\nec}{\mathsf{nec}}
\newcommand{\logic}[1]{\mathsf{#1}}
\newcommand{\LP}{\logic{LP}}
\newcommand{\sfour}{\logic{S4}}
\newcommand{\sfive}{\logic{S5}}
\newcommand{\CK}{\logic{CK}}
\newcommand{\IK}{\logic{IK}}
\newcommand{\ISfour}{\logic{IS4}}
\newcommand{\IKfour}{\logic{\IK4}}
\newcommand{\IKt}{\logic{IKt}}
\newcommand{\prop}{\mathsf{Prop}}
\newcommand{\IPL}{\logic{IPL}}
\newcommand{\kax}[1][]{\mathsf{{k}_{#1}}}
\newcommand{\tax}{\mathsf{t}}
\newcommand{\taxb}{\mathsf{t}_{\BOX}}
\newcommand{\taxd}{\mathsf{t}_{\DIA}}
\newcommand{\fax}{\mathsf{4}}
\newcommand{\faxb}{\mathsf{4}_{\BOX}}
\newcommand{\faxd}{\mathsf{4}_{\DIA}}
\newcommand{\langmod}{\mathcal{L}_{\BOX}}
\newcommand{\langann}{\mathcal{L}_{\BOX}^{\mathsf{ann}}}
\newcommand{\langjust}{\mathcal{L}_{\mathsf{J}}}
\newcommand{\J}{\logic{J}}
\newcommand{\Log}{\logic{L}}
\newcommand{\JL}{\logic{JL}}
\newcommand{\nL}{\logic{nL}}
\newcommand{\nIK}{\logic{nIK}}
\newcommand{\nIKt}{\logic{nIKt}}
\newcommand{\nIKfour}{\logic{nIK4}}
\newcommand{\nISfour}{\logic{nIS4}}
\newcommand{\JIL}{\logic{JIL}}
\newcommand{\JIK}{\logic{JIK}}
\newcommand{\JIKfour}{\logic{JIK4}}
\newcommand{\JIKt}{\logic{JIKt}}
\newcommand{\JISfour}{\logic{JIS4}}
\newcommand{\ILP}{\logic{ILP}}
\newcommand{\jkax}[1]{\mathsf{{jk}_{#1}}}
\newcommand{\jtaxb}{\mathsf{jt}_{\BOX}}
\newcommand{\jtaxd}{\mathsf{jt}_{\DIA}}
\newcommand{\jfaxb}{\mathsf{j4}_{\BOX}}
\newcommand{\jfaxd}{\mathsf{j4}_{\DIA}}
\newcommand{\jsumax}{\mathsf{j}{\jsum{}{}}}
\newcommand{\juniax}{\mathsf{j}{\suni{}{}}}
\newcommand{\HA}{\logic{HA}}
\newcommand{\PA}{\logic{PA}}
\newcommand{\just}[2]{#1 {:} #2}
\newcommand{\sat}[2]{#1 {:} #2}
\newcommand{\prfterms}{\mathsf{PrfTm}}
\newcommand{\satterms}{\mathsf{SatTm}}
\newcommand{\prfvars}{\mathsf{PrfVar}}
\newcommand{\satvars}{\mathsf{SatVar}}
\newcommand{\pvar}[1][]{x_{#1}}
\newcommand{\svar}[1][]{\alpha_{#1}}
\newcommand{\respvar}[1][]{y_{#1}}
\newcommand{\ressvar}[1][]{\beta_{#1}}
\newcommand{\pconst}[1][]{c_{#1}}
\newcommand{\prfconst}{\mathsf{PrfConst}}
\newcommand{\jsum}[2]{#1 \mathbin{+} #2}
\newcommand{\jappl}[2]{#1 \mathbin{\cdot} #2}
\newcommand{\jbang}[1]{{!}#1}
\newcommand{\suni}[2]{#1 \mathbin{\sqcup} #2}
\newcommand{\sappl}[2]{#1 \mathbin{\star} #2}
\newcommand{\jupdt}[2]{#1 \mathbin{\triangleright} #2} 
\newcommand{\can}{\mathsf{can}}
\newcommand{\fseq}{\Gamma}
\newcommand{\lseq}{\Lambda}
\newcommand{\oseq}{\Pi}
\newcommand{\dseq}{\Delta}
\newcommand{\inp}[1]{{#1}^{\bullet}}
\newcommand{\out}[1]{{#1}^{\circ}}
\newcommand{\br}[2][]{\left[ #2 \right]_{#1}}
\newcommand{\diabr}[2][]{\langle #2 \rangle_{#1}}
\newcommand{\emptyseq}{\varnothing}
\newcommand{\fm}[1]{\text{\normalfont form} (#1)}
\newcommand{\defseq}{\Gamma}
\newcommand{\prunesymb}{\downarrow}
\newcommand{\seq}[2][\defseq]{
    \def\cxs@{#2}
    #1 \{\ifx\cxs@\empty\mkern 3mu\else #2\fi\}
    }
\newcommand{\seqprune}[2][\defseq]{
    \def\cxs@{#2}
    #1^{\prunesymb} \{\ifx\cxs@\empty\mkern 3mu\else #2\fi\}}
\newcommand{\depth}[1]{\text{\normalfont depth}(#1)}
\newcommand{\id}{\mathsf{id}}
\newcommand{\botrule}{\inp{\BOT}}
\newcommand{\andl}{\inp{\AND}}
\newcommand{\andr}{\out{\AND}}
\newcommand{\orl}{\inp{\OR}}
\newcommand{\orr}[1][]{\out{\OR}_{#1}}
\newcommand{\impl}{\inp{\IMPLIES}}
\newcommand{\constrimpl}{\inp{\IMPLIES}_\mathsf{s}}
\newcommand{\impr}{\out{\IMPLIES}}
\newcommand{\boxl}{\inp{\BOX}}
\newcommand{\boxldia}{\inp{\BOX}_{\diabr{\cdot}}}
\newcommand{\boxlbox}{\inp{\BOX}_{\br{\cdot}}}
\newcommand{\boxr}{\out{\BOX}}
\newcommand{\dial}{\inp{\DIA}}
\newcommand{\diar}{\out{\DIA}}
\newcommand{\cont}{\inp{\mathsf{c}}}
\newcommand{\tr}{\out{\tax}}
\newcommand{\tl}{\inp{\tax}}
\newcommand{\fourl}{\inp{\fax}}
\newcommand{\fourldia}{\inp{\fax}_{\diabr{\cdot}}}
\newcommand{\fourlbox}{\inp{\fax}_{\br{\cdot}}}
\newcommand{\fourr}{\out{\fax}}
\newcommand{\updtrle}{\jupdt{}{}}
\newcommand{\rle}{\mathsf{rule}}
\newcommand{\set}[2][]{%
	\ifthenelse{\equal{#1}{}}%
	{\{#2\}}%
	{\{#1 \mid #2\}}%
}
\newcommand{\subst}[2][\sub]{#2 #1}
\newcommand{\sub}{\sigma}
\newcommand{\function}[3]{#1 : #2 \rightarrow #3}
\newcommand{\forget}[2][\forgetmap]{{#2}^{#1}}
\newcommand{\real}[2][r]{{#2}^{\rfunc{#1}}} 
\newcommand{\rfunc}[1]{#1}
\newcommand{\nat}{\mathbb{N}}
\newcommand{\restr}[2][r]{#1 |_{#2}}
\newcommand{\UNI}{\cup}
\newcommand{\INT}{\cap}
\newcommand{\EMPTY}{\emptyset}
\newcommand{\comp}{\mathbin{\circ}}
\newcommand{\dom}[1]{\text{dom}(#1)}
\newcommand{\sset}{\subseteq}
\newcommand{\var}[1]{\text{var}(#1)}
\newcommand{\negvar}[1]{{\text{negvar}(#1)}}
\newcommand{\ann}[1]{\text{ann}(#1)}
\newcommand{\Idnty}{\mathsf{Id}}
\newcommand{\vlstr}[3]{\vltr{#1}{#2}{\vlhy{}}{#3}{\vlhy{}}}
\newcommand{\vlhtr}[2]{\vlstr{#1}{#2}{\vlhy{\hskip1.5em}}}
\newcommand{\vlderivationauxnc}[1]{#1{\box\derboxone}\vlderivationterm}
\newcommand{\vlderivationnc}{\vlderivationinit\vlderivationauxnc}
\let\c@proposition\c@theorem
\let\c@corollary\c@theorem
\let\c@lemma\c@theorem
\let\c@definition\c@theorem
\let\c@example\c@theorem
\let\c@remark\c@theorem
\begin{document}
	\title{Justification Logic for Intuitionistic Modal Logic (Extended Technical Report)}
	\titlerunning{Justification Logic for Intuitionistic Modal Logic}
	%
	\author{Sonia Marin \and
		Paaras Padhiar}
	\authorrunning{S. Marin and P. Padhiar}
	%
	\institute{School of Computer Science, University of Birmingham, Birmingham, United~Kingdom \\
		\email{s.marin@bham.ac.uk}\\
		\email{pxp367@student.bham.ac.uk}}
	\maketitle              
	\begin{abstract}
		Justification logics are an explication of modal logic; boxes are replaced with proof terms formally through realisation theorems.
This can be achieved syntactically using a cut-free proof system e.g. using sequent, hypersequent or nested sequent calculi.
In constructive modal logic, boxes and diamonds are decoupled and not De Morgan dual.
Kuznets, Marin and Stra{\ss}burger provide a justification counterpart to constructive modal logic~$\CK$ and some extensions by making diamonds explicit by introducing new terms called satisfiers. 
We continue the line of work to provide a justification counterpart to Fischer Servi’s intuitionistic modal logic~$\IK$ and its extensions with the~$\tax$ and~$\fax$ axioms. 
We: extend the syntax of proof terms to accommodate the additional axioms of intuitionistic modal logic; provide an axiomatisation of these justification logics; provide a syntactic realisation procedure using a cut-free nested sequent system for intuitionistic modal logic introduced by Stra{\ss}burger which can be further adapted to other nested sequent systems.
		\keywords{Justification logic  \and Intuitionistic modal logic \and Realisation theorem \and Nested sequents}
	\end{abstract}

	\section{Introduction}\label{sec:intro}
	Justification logics are a family of logics which refine modal logics by replacing modal operators with explicit \emph{proof terms}.
In a similar way to how a modal formula~$\BOX A$ can be read as \emph{$A$ is provable} in provability logics,  it can be given an explicit justification counterpart~$\just{t}{A}$ for some proof term~$t$, to be read as \emph{there exists a proof $t$ of $A$}.

The first such justification logic is the Logic of Proofs,~$\LP$, introduced by Artemov~\cite{artemov_operational_1995,artemov_explicit_2001} which is an explicitation of the modal logic~$\sfour$.
Artemov showed that this logic, which can be seen as provability semantics to intuitionistic logic~($\IPL$) via the G\"odel translation of $\IPL$ into $\sfour$, enjoys an arithmetic interpretation into Peano Arithmetic~($\PA$).
The formal connection between~$\LP$ and~$\sfour$ is made through a \emph{realisation theorem}, that is by translating each theorem of~$\sfour$ into a corresponding theorem of~$\LP$ by \emph{realising} every modalities with proof terms.
The first realisation theorem was proved syntactically by Artemov using a cut-free Gentzen sequent calculus. 
Fitting~\cite{fitting_logic_2005} later provided a semantic method for realisation.

$\LP$ was also further generalised, and realisation theorems were obtained for instance for the modal~$\sfive$ cube and the family of Geach logics (see~\cite{artemov_justification_2024} for a survey).
In parallel, Artemov's method for realisation can be adapted to modal logics if they have a cut-free Gentzen sequent calculus (see~\cite{kuznets_complexity_2008}).
However, this syntactic method was also expanded to more exotic cut-free proof systems: based on the formalism of hypersequents~\cite{artemov_logic_1999} and of nested sequents~\cite{brunnler_syntactic_2010,goetschi_realization_2012,borg_realization_2015}.

Since the early days of justification logic, there has been an interest in their intuitionistic variants.
Originally as a way to unify the treatment of terms between $\LP$ and the $\lambda$-calculus~\cite{artemov_unified_2002}.
Another line of research that has brought justification logic into the realm of intuitionistic modal logics, is the investigation of the provability logic of Heyting Arithmetic ($\HA$). 
Some propositions for an intuitionistic version of $\LP$ have been put forward in~\cite{artemov_basic_2007,dashkov_arithmetical_2011}, as a building block towards a justification logic for $\HA$.
Other intuitionistic variants of justification logics have been considered as a basis for type systems for computations that have access to some of their execution~\cite{steren_intuitionistic_2014}.
The semantics and the proof theory of some of these logics have also been partially studied~\cite{marti_intutionistic_2016,marti_internalized_2018,hill_analytic_2019}; however none of these work provide counterparts to an intuitionistic modal logic which contains both the box and the diamond modalities.

The first justification logic for an intuitionistic modal logic which makes the diamond explicit was postulated in~\cite{kuznets_justification_2021} as a justification counterpart to \emph{constructive modal logic} $\CK$~\cite{prawitz_natural_1965,bierman_intuitionistic_2000,bellin_extended_2001} via a syntactic realisation procedure.
As~$\BOX$ is replaced by a proof term~$t$; so is the diamond operator~$\DIA$ replaced with a \emph{satisfier} term~$\mu$.

We expand this line of work to provide a justification counterpart to the intuitionistic variant of modal logic $\IK$, as defined originally by~\cite{fischer_servi_modal_1977,fischer_servi_semantics_1980,fischer_servi_axiomatizations_1984,plotkin_framework_1986}.
Interestingly, intuitionistic modal logic $\IK$ is remarkably more expressive than constructive modal logic $\CK$ as it not only provides more validities about the $\DIA$ operator (in particular, that it is normal), but it also permits the derivation of more $\DIA$-free theorems~\cite{das_intuitionistic_2023}.
One such theorem is the G\"odel-Gentzen (double-negation) translation, which makes for example the relationship of $\IK$ with classical $\logic K$ more akin to the relationship of $\HA$ with $\PA$ than could be obtained with $\CK$, and is also more amenable to functional interpretations.
 
We present an axiomatisation of our justification counterpart to~$\IK$ and extensions with the $\tax$ and $\fax$ axioms and establish a formal connection through a syntactic realisation procedure using a cut-free system for~$\IK$ and extensions~\cite{strasburger_cut_2013} using nested sequents~\cite{bull_cut_1992,kashima_cut-free_1994,brunnler_deep_2006,brunnler_deep_2009,poggiolesi_method_2009,poggiolesi_gentzen_2011}.

    \section{Modal and Justification Logics}\label{sec:logics}
    \subsection{Intuitionistic Modal Logic}
	\emph{Formulas} of intuitionistic modal logic, $\langmod$, are given by the grammar:
$$A \BNFequals \bottom \mid p \mid (A \AND A) \mid (A \OR A) \mid (A \IMPLIES A) \mid \BOX A \mid \DIA A$$
where~$p$ ranges over a countable set of propositional atoms~$\prop$. 
We will drop the outermost brackets when it is not required. 
We will assume~$\AND$ and~$\OR$ are associative, $\AND$ is commutative, and~$\IMPLIES$ is right associative.

\begin{figure}[t]
\fbox{
    \begin{minipage}{.95\textwidth}
	\centering
    $
    \begin{array}{r@{\quad:\quad}l}
	\kax[1] & \BOX(A \IMPLIES B) \IMPLIES (\BOX A \IMPLIES \BOX B) 
    \\
	\kax[2] & \BOX(A \IMPLIES B) \IMPLIES (\DIA A \IMPLIES \DIA B) 
    \\
    \kax[3] &  \DIA(A \OR B) \IMPLIES (\DIA A \OR \DIA B) 
    \\
	\kax[4] & (\DIA A \IMPLIES \BOX B ) \IMPLIES \BOX (A \IMPLIES B) 
    \\
	\kax[5] &  \DIA \bottom \IMPLIES \bottom
    \end{array}
    \qquad
    \begin{array}{r@{\quad:\quad}l}
    \taxb & \BOX A \IMPLIES A
    \\
    \taxd & A \IMPLIES \DIA A 
    \\
    \faxb & \BOX A \IMPLIES \BOX \BOX A
    \\
    \faxd & \DIA \DIA A \IMPLIES \DIA A
    \\
    \end{array}
    $
    \qquad
    $\vlinf{\nec}{}{\proves{}{\BOX A}}{\proves{}{A}}$
    \end{minipage}
    }
    \caption{Intuitionistic modal axioms}
    \label{fig:mod-axioms}
\end{figure}

On Fig.~\ref{fig:mod-axioms} we list the modal axioms that we consider in this work. 
\emph{Constructive} modal logic $\CK$~\cite{bellin_extended_2001} is an extension of intuitionistic propositional logic~$\IPL$ with axioms~$\kax[1]$ and~$\kax[2]$ and the necessitation inference.
\emph{Intuitionistic} modal logic $\IK$~\cite{fischer_servi_modal_1977,fischer_servi_semantics_1980,fischer_servi_axiomatizations_1984,plotkin_framework_1986} is an extension of~$\CK$ with axioms~$\kax[3]$, $\kax[4]$ and~$\kax[5]$.
We will also study extensions of $\IK$: with axioms $\taxb$ and $\taxd$ to logic $\IKt$, with axioms $\faxb$ and $\faxd$ to logic $\IKfour$, and finally to logic~$\ISfour$ with all four axioms axioms.

    \subsection{Justification Logic}
    \emph{Formulas} of intuitionistic justification logic,~$\langjust$, are given by the grammar:
$$A \BNFequals \BOT \mid p \mid (A \AND A) \mid (A \OR A) \mid (A \IMPLIES A) \mid \just{t}{A} \mid \sat{\mu}{A}$$
where~$p$ ranges over a countable set propositional atoms~$\prop$, $t$~ranges over \emph{proof terms}~$\prfterms$ and $\mu$~ranges over \emph{satisfiers}~$\satterms$.

\emph{Proof terms}~$t,s,\dots$ and \emph{satisfiers}~$\mu,\nu,\dots$ are generated as follows:
\begin{align*}
	t &\BNFequals \pvar,\respvar \mid \pconst \mid (\jsum{t}{t}) \mid (\jappl{t}{t}) \mid (\jupdt{\mu}{t}) \mid \jbang{t} \\
	\mu &\BNFequals \svar,\ressvar \mid (\suni{\mu}{\mu}) \mid (\sappl{t}{\mu})
\end{align*}
where $\pvar,\respvar$ range over proof variables~$\prfvars$, $\svar,\ressvar$ over satisfier variables $\satvars$, and $c$ over proof constants $\prfconst$.
A term is called \emph{ground} if it contains neither proof nor satisfier variables.
For a formula $A$, we write $\var{A}$ for the set $\set[\pvar \in \prfvars]{\text{$\pvar$ occurs in A}}\UNI \set[\svar \in \satvars]{\text{$\svar$ occurs in A}}$.

\begin{figure}[t]
\fbox{
    \begin{minipage}{.95\textwidth}
    \centering
    $
    \begin{array}{r@{\ :\ }l}
	\jkax{1} & \just{s}{(A \IMPLIES B)} \IMPLIES (\just{t}{A} \IMPLIES \just{\jappl{s}{t}}{B})
	\\
	\jkax{2} & \just{s}{(A \IMPLIES B)} \IMPLIES (\sat{\mu}{A} \IMPLIES \just{\sappl{s}{\mu}}{B})
	\\
	\jkax{3} & \sat{\mu}{(A \OR B)} \IMPLIES (\sat{\mu}{A} \OR \sat{\mu}{B})
	\\
	\jkax{4} & (\sat{\mu}{A} \IMPLIES \just{t}{B}) \IMPLIES \just{\jupdt{\mu}{t}}{(A \IMPLIES B)}
	\\
	\jkax{5} & \sat{\mu}{\BOT} \IMPLIES \BOT
    \end{array}
\quad
\begin{array}{r@{\ :\ }l@{\qquad}r@{\ :\ }l}
	\jsumax & \just{s}{A} \IMPLIES \just{(\jsum{s}{t})}{A} 
    &
    \jtaxb & \just{t}{A} \IMPLIES A
    \\
	\jsumax & \just{t}{A} \IMPLIES \just{(\jsum{s}{t})}{A} 
    &
    \jtaxd & A \IMPLIES \sat{\mu}{A}
    \\
	\juniax & \sat{\mu}{A} \IMPLIES \sat{(\suni{\mu}{\nu})}{A} 
    &
    \jfaxb & \just{t}{A} \IMPLIES \just{\jbang{t}}{\just{t}{A}}
    \\
	\juniax & \sat{\nu}{A} \IMPLIES \sat{(\suni{\mu}{\nu})}{A} 
    &
    \jfaxd & \sat{\mu}{\sat{\nu}{A}} \IMPLIES \sat{\nu}{A}
\end{array}
$
\end{minipage}
}
    \caption{Intuitionistic justification axioms}
    \label{fig:justif-axioms}
\end{figure}

%
We define the \emph{justification counterpart of modal logic $\IK$}, which we call $\JIK$, as the extension of $\IPL$ with axioms $\jkax{1}-\jkax{5}$, $\jsumax$ and $\juniax$ on Fig.~\ref{fig:justif-axioms},
together with the \emph{constant axiom necessitation rule}:
$$
\vlderivationnc{\vlin{\can}{}{\just{\pconst[n]}{\just{\dots}{\just{\pconst[1]}{A}}}}{\vlhy{}}}
$$
where $\pconst[1], \dots, \pconst[n] \in \prfconst$ and $A$ is any axiom instance in Fig.~\ref{fig:justif-axioms}.

We furthermore define justification counterparts to the extensions of $\IK$ introduced  previously using the additional axioms on Fig.~\ref{fig:justif-axioms}:
$\JIKt = \JIK + \jtaxb + \jtaxd$, $\JIKfour = \JIK + \jfaxb + \jfaxd$ and $\JISfour = \JIK + \jtaxb + \jtaxd + \jfaxb + \jfaxd$.


The operations \emph{proof sum}~$\jsum{}{}$, \emph{application}~$\jappl{}{}$ and \emph{proof checker}~$\jbang{}$ are the standard justification operations relating to proof manipulations.

The operations of \emph{propagation}~$\sappl{}{}$ and \emph{disjoint union}~$\suni{}{}$ were introduced in~\cite{kuznets_justification_2021} to consolidate the intuition that in $\sat{\mu}{A}$, $\mu$ is some \emph{model} of $A$.
The operation~$\sappl{}{}$ can be seen as a combination of local and global reasoning, e.g. suppose $A$ is a local fact, i.e. we have $\sat{\mu}{A}$, we can use the proof $t$ of $A \IMPLIES A \OR B$ to locally reason $\sat{\sappl{t}{\mu}}{A \OR B}$.
The operation $\suni{}{}$ is akin to that of a disjoint union of models.

We introduce the operation \emph{local update}~$\jupdt{}{}$ which carries the intuition that if a local fact implies global knowledge, one can update the global knowledge with local data.

\subsection{From $\JL$ to $\Log$ -- and back}

In the rest of this paper, we will write $\Log$ for any logic in $\set{\IK, \IKt, \IKfour, \ISfour}$ and $\JL$ for the corresponding logic in $\set{\JIK, \JIKt, \JIKfour, \JISfour}$.
The main point of this work is to establish a syntactic correspondence between the two logics introduced in the previous sections.

\begin{definition}[Forgetful projection]
	The forgetful projection is a map~$\function{\forget{(\cdot)}}{\langjust}{\langmod}$ 
    inductively defined as follows:
    
    $
    \begin{array}{rl@{\hspace{5cm}}rl}
        \forget{\BOT} & \colonequals \BOT
        &  
        \forget{(\just{t}{A})} & \colonequals \BOX \forget{A}
        \\
        \forget{p} & \colonequals p
        & 
        \forget{(\sat{\mu}{A})} & \colonequals \DIA \forget{A}
        \\
        \forget{(A \ast B)} &
        \multicolumn{3}{l}{
        \colonequals (\forget{A} \ast \forget{B})$ where~$\ast \in \set{\AND, \OR, \IMPLIES}
        }
    \end{array}
    $
    
\end{definition}

\begin{theorem}
	%
	Let~$A \in \langjust$.
	%
	If $\proves{\JL}{A}$ then $\proves{\Log}{\forget{A}}$.
\end{theorem}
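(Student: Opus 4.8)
The plan is to argue by induction on the structure (equivalently, the length) of a derivation of $A$ in $\JL$, showing at each step that the forgetful projection of the conclusion is derivable in $\Log$. Since $\forget{(\cdot)}$ commutes with all propositional connectives, every instance of an $\IPL$ axiom scheme in $\langjust$ is sent to an instance of the same scheme in $\langmod$, which is available in $\Log$ because $\Log$ extends $\IPL$. For modus ponens one simply observes that $\forget{(A \IMPLIES B)} = (\forget{A} \IMPLIES \forget{B})$, so applying the induction hypothesis to the two premises together with modus ponens in $\Log$ yields the conclusion.

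The bulk of the argument is then a case analysis checking that each justification axiom projects onto a theorem of the matching modal logic. For the core axioms: $\jkax{1}$ projects to $\kax[1]$, $\jkax{2}$ to $\kax[2]$, $\jkax{3}$ to $\kax[3]$, $\jkax{4}$ to $\kax[4]$ and $\jkax{5}$ to $\kax[5]$, since in each case erasing the term annotations and using that $\forget{(\cdot)}$ distributes over $\AND$, $\OR$, $\IMPLIES$ produces exactly the stated modal scheme. The sum and union axioms $\jsumax$ and $\juniax$ project to instances of $\BOX\forget{A} \IMPLIES \BOX\forget{A}$ and $\DIA\forget{A} \IMPLIES \DIA\forget{A}$, which are $\IPL$ tautologies and hence theorems of $\Log$. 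For the extension axioms one checks similarly that $\jtaxb$ projects to $\taxb$, $\jtaxd$ to $\taxd$, $\jfaxb$ to $\faxb$ and $\jfaxd$ to $\faxd$; since by the pairing convention $\JL$ contains a given extension axiom exactly when $\Log$ contains its modal counterpart, these cases only arise when the corresponding modal axiom is indeed available in $\Log$.

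The only case requiring slightly more than direct inspection is the constant axiom necessitation rule $\can$, whose conclusion $\just{\pconst[n]}{\just{\dots}{\just{\pconst[1]}{A}}}$ projects to $\BOX^{n}\forget{A}$ where $A$ is an instance of one of the axioms in Fig.~\ref{fig:justif-axioms}. By the axiom cases just treated, $\forget{A}$ is a theorem of $\Log$; applying the necessitation rule $\nec$ of $\Log$ $n$ times then gives $\BOX^{n}\forget{A}$, closing this case. I do not expect any genuine obstacle here: the axiomatisation of $\JL$ was set up precisely so that every axiom erases to a modal theorem, so the only thing to be careful about is matching each justification axiom to the correct modal axiom and keeping the extension axioms paired consistently across the two families of logics.
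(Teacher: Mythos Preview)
Your proposal is correct and follows exactly the approach the paper takes; the paper's proof is a one-line summary (``the forgetful projection on axioms of $\JL$ and conclusions of the $\can$ rule are theorems of $\Log$'') and you have spelled out precisely the case analysis and induction that underlie that remark.
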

\begin{proof}
	This follows from the fact that the forgetful projection on axioms of~$\JL$ and conclusions of the $\can$~rule are theorems of~$\Log$.
\end{proof}

\begin{definition}[Realisation]
    A realisation is a map $\function{\real{(\cdot)}}{\langmod}{\langjust}$ such that $\forget{(\real A)} = A$ for each $A \in \langmod$.
\end{definition}

\begin{theorem}\label{thm:realisation}
	%
	Let~$A \in \langmod$.
    If $\proves{\Log}{A}$, 
	then there exists a realisation $r$ such that
    $\proves{\JL}{\real A}$.
\end{theorem}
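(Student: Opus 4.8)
The plan is to run Artemov's syntactic realisation procedure in the nested-sequent form of~\cite{brunnler_syntactic_2010,goetschi_realization_2012} and, for satisfiers, of~\cite{kuznets_justification_2021}, now adapted to the cut-free nested sequent calculus for~$\Log$ of Stra{\ss}burger~\cite{strasburger_cut_2013}. Since $\proves{\Log}{A}$ is equivalent to derivability in that calculus of the nested sequent consisting of $A$ in output position, we fix such a cut-free derivation $\mathcal{D}$; everything below takes place inside $\mathcal{D}$. Throughout, for a nested sequent $\Gamma$ we write $\fm{\Gamma}$ for its standard translation into $\langmod$, and we pass freely between $\fm{\Gamma}$ and its realised image in $\langjust$.

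First I would fix the bookkeeping on modalities. Each occurrence of $\BOX$ or $\DIA$ in $\mathcal{D}$ inherits a polarity (input or output) from its position in its sequent, and occurrences are partitioned into \emph{families} by the ancestry relation induced by the rules. Call a $\BOX$-family \emph{essential} if one of its members is principal for $\boxr$, and a $\DIA$-family essential if one of its members is principal for $\diar$: these are the rules that create a modality when $\mathcal{D}$ is read from conclusion to premises. The skeleton realisation assigns a single fresh proof variable to each non-essential $\BOX$-family and a single fresh satisfier variable to each non-essential $\DIA$-family; to each essential family it assigns a sum of fresh provisional variables, one summand per instance of the creating rule in that family. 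These provisional variables will be replaced by a substitution $\sub$ built below, and the composite map is a realisation because it only ever substitutes for modalities while $\forget{(\cdot)}$ erases all term decorations, so $\forget{(\real A)} = A$.

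The heart of the proof is the construction of $\sub$, by recursion on $\mathcal{D}$ from the leaves towards the root, maintaining the invariant that every already-processed sequent $\Gamma$ satisfies $\proves{\JL}{B_\Gamma}$, where $B_\Gamma$ is $\fm{\Gamma}$ realised under the skeleton assignment composed with the part of $\sub$ accumulated so far. Each rule of the calculus then comes with a short $\JL$-derivation propagating this invariant, and it is here that the term operations do their work: leaves (identity, $\BOT$) are $\IPL$-provable; a contraction, or a two-premise rule identifying two occurrences of an essential family, merges the two realisations using $\jsumax$ for proof terms and $\juniax$ for satisfiers, relying on the fact that $\JL$-provability is closed under substitution of terms for variables; the $\boxr$ rule, which creates a box, is discharged by replacing the matching provisional summand with a proof polynomial extracted from the premise's $\JL$-proof via the internalisation property of $\JL$ (iterated applications of the $\can$ rule) and then distributing the box over the local context with $\jkax{1}$; the $\boxl$ rule is likewise handled by $\jkax{1}$; the rules in which a $\BOX$ acts on a formula sitting under a $\DIA$ use $\jkax{2}$ together with the propagation term $\sappl{}{}$; the rule decomposing $\DIA$ over $\OR$ uses $\jkax{3}$; the rule reflecting axiom $\kax[4]$ uses $\jkax{4}$ and the local-update term $\jupdt{}{}$; the rule reflecting $\kax[5]$ uses $\jkax{5}$; and for the extensions the reflection and four-rules are discharged by $\jtaxb$ and $\jtaxd$ (for $\JIKt$), by $\jfaxb$ and $\jfaxd$ (for $\JIKfour$), or by all four (for $\JISfour$). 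Applying the invariant to the root of $\mathcal{D}$ yields $\proves{\JL}{\real A}$ for the resulting realisation $r$, which is the claim.

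The hard part is twofold. First, the rules specific to $\IK$ --- those realising $\kax[2]$ through $\kax[5]$, and in particular the normality of $\DIA$ --- have no counterpart in classical $\LP$-style realisation, so the precise shape of the propagating $\JL$-derivations, and especially how $\sappl{}{}$, $\suni{}{}$ and $\jupdt{}{}$ must be combined when a $\DIA$-formula crosses a bracket into a context already carrying a $\BOX$, has to be matched exactly to Stra{\ss}burger's rules; moreover the polarity accounting must be redone from scratch, since $\BOX$ and $\DIA$ are not De Morgan dual and neither is reducible to the other. Second, the calculus uses deep inference, so a modality can be created, or two families merged, arbitrarily far inside a nested structure; transporting a realised implication through the ambient context then requires a context lemma (iterated $\jkax{1}$ and $\jkax{2}$), and one must check that this transport respects the essential/non-essential partition so that $\sub$ stays well defined, with no provisional variable ending up depending on itself. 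Verifying these points rule by rule, and confirming that the merges performed at contractions never clash with later creations along the same family, is where the bulk of the work lies.
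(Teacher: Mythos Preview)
Your high-level strategy---take a cut-free nested-sequent proof of $\out{A}$ and realise it rule by rule from the leaves down---is exactly the paper's, and your identification of the ``hard part'' (deep inference, no De~Morgan duality) is apt. But two points in the execution are off in a way that would block the proof.

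First, Stra{\ss}burger's calculus does \emph{not} contain rules ``decomposing $\DIA$ over $\OR$'', ``reflecting $\kax[4]$'', or ``reflecting $\kax[5]$''. The extra $\IK$ axioms $\kax[3]$--$\kax[5]$ are absorbed into the system not by new rules but by the fact that LHS contexts may contain $\diabr{\cdot}$-brackets and by the output-pruning in $\impl$. Consequently your mapping of justification axioms to rules is inverted: $\jkax{3}$ is needed not for a dedicated $\DIA$-$\OR$ rule but to handle the ordinary $\orl$ when the principal formula sits under a $\diabr{\cdot}$; $\jkax{5}$ is needed not for a $\DIA\BOT$ rule but to handle the ordinary $\botrule$ inside a $\diabr{\cdot}$-context (so your claim that ``leaves (identity, $\BOT$) are $\IPL$-provable'' is false in the $\IK$ setting---this is precisely where $\jkax{5}$ enters); and $\jkax{4}$ with $\jupdt{}{}$ is needed for the $\impl$ rule, whose left premise $\seqprune{\out{A}}$ flips $\br{\cdot}$ to $\diabr{\cdot}$ and vice versa along the path to the hole. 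The paper makes this last point explicit by decomposing $\impl$ into a constrained $\constrimpl$, a new structural rule $\updtrle$ (which is where $\jkax{4}$ is actually invoked), and contractions.

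Second, and relatedly, the bracket-flipping in output pruning is hostile to the Artemov/Br{\"u}nnler-style ``families'' bookkeeping you describe: a single $\br[\,]{\cdot}$ in the conclusion spawns a $\diabr[\,]{\cdot}$ in the left $\impl$-premise, so the ancestry relation does not keep modality-type or polarity stable, and your essential/non-essential partition of families is not well defined across that rule. The paper sidesteps this by using the Goetschi--Kuznets annotation framework instead: modalities carry indices in $\nat$ encoding both type and polarity, flipped brackets receive \emph{fresh} indices, and realisations are merged across branches via Fitting's merging theorem rather than by pre-committing to a sum of provisional variables. You could likely repair the families approach to cope with the pruning, but as written the proposal does not, and the annotation machinery is what actually makes the $\impl$ case go through.
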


The goal of this paper is to prove this theorem. 
We first need to introduce the main technical ingredients of the proof. 

\subsection{Realisation over Annotations}

A subformula of A is \emph{positive} if its position in the formula tree of A is reached from the root by following the left branch of an $\IMP$ an even number of times; otherwise it is called \emph{negative}.

A realisation is \emph{normal} if when $t:B$ (resp.~$\mu:B$) is a negative subformula of $\real A$, then $t\in\prfvars$ (resp.~$\mu\in\satvars$) and occurs
in $A$ exactly once.

We follow and expand the methodology introduced in~\cite{goetschi_realization_2012} using nested sequents to prove a realisation theorem. 
Due to the complexity of the sequent structure, some extra machinery is required to ensure a careful bookkeeping during the proof manipulations.

\begin{definition}[Annotated formula]
	\emph{Annotated formulas} are built as in $\langmod$ but using annotated boxes $\BOX[n]$ and diamonds $\DIA[n]$ for $n \in \nat$.

    \noindent 
    For an annotated formula $A$, define~$\ann{A} \colonequals \set[n \in \nat]{\text{$\BOX[n]$ or $\DIA[n]$ occurs in $A$}}$.

\end{definition}

\begin{definition}[Properly annotated]
	A formula $A$ is \emph{properly annotated} if:
	\begin{itemize}
		\item modalities are annotated by pairwise distinct indexes;
        \item a $\BOX$ (resp.~$\DIA$) is indexed by $n$ iff $n = 0,1 \mod 4$ (resp.~$n = 2,3 \mod 4$);
		\item a modality indexed by $n$ occurs in a positive (resp.~negative) position iff $n$ is even (resp.~odd).
	\end{itemize}
	The set of properly annotated formulas is denoted $\langann$.
\end{definition}

For $\Log \in \set{\IK, \IKt, \IKfour, \ISfour}$, we will write
$\proves{\Log}{A}$ for $A \in \langann$,
when the unannotated version of $A$ is a theorem of $\Log$ from now on.

We assume that we have fixed an enumeration of proof variables $\prfvars = \set{\pvar[1], \pvar[2], \pvar[3], \dots}$ and satisfier variables $\satvars = \set{\svar[1], \svar[2], \svar[3], \dots}$.
Similarly, we enumerate the reserved proof variables $\set{\respvar[1], \respvar[2], \respvar[3], \dots}$ and the satisfier variables $\set{\ressvar[1], \ressvar[2], \ressvar[3], \dots}$

\begin{definition}[Realisation on annotations]
	A \emph{realisation on annotations} is a partial function
	$$\function{r}{\nat}{\prfterms \UNI \satterms}$$
	where~$r(4n)\in \prfterms$, $r(4n+1)=\pvar[n]$, $r(4n+2)\in \satterms$, $r(4n+3)=\svar[n]$ when it is defined.
	Denote the domain of~$r$ as~$\dom{r} \colonequals \set[n \in \nat]{\text{$r(n)$ is defined}}$.
    \end{definition}

    A realisation $\real{A}$ for a formula $A\in\langmod$ can be obtained by first annotating properly all modalities occurring in it and then replacing by the term $r(n)$ given by the realisation on annotations.
	
	A realisation $r$ can be extended to an annotated formula~$A$ such that $\dom{r} \sset \ann{A}$ inductively:
	\begin{itemize}
		\item $\real{\BOT} \colonequals \BOT$
		\item $\real{p} \colonequals p$
		\item $\real{(A \ast B)} \colonequals (\real{A} \ast \real{B})$ where~$\ast \in \set{\AND, \OR, \IMPLIES}$
		\item $\real{(\BOX[n]A)} \colonequals \just{r(n)}{\real{A}}$
		\item $\real{(\DIA[n]A)} \colonequals \sat{r(n)}{\real{A}}$ with the \emph{satisfier self-referentiality restriction} that: if $n = 4k+3$, $\svar[k]$ does not occur in $\real{A}$
	\end{itemize}

	Given~$\ann{A} \sset \dom{r}$, denote~$\restr{A}$ the restriction of the partial function~$r$ onto the domain~$\ann{A}$.

Note that, when a properly annotated formula $A\in\langann$ is realised through this algorithm, it fixes automatically some of the variables that occur in $\real A$. 
Namely,
$\set[{\pvar[n]} \in \prfvars]{\text{$\BOX[4n+1]$ occurs in $A$}}
\UNI
\set[{\svar[n]} \in \satvars]{\text{$\DIA[4n+3]$ occurs in $A$}}
$
designates a set of fixed variables,
which we will call
$\negvar{A}$.

	\section{Nested Sequent Calculus}\label{sec:nested}
    
A nested sequent~$\fseq$ is comprised of two distinct parts: an LHS-sequent~$\lseq$ and an RHS-sequent~$\oseq$.
Formally, these sequents are unordered multisets
generated from the following grammar:
$$
\lseq \BNFequals \emptyseq \mid \inp{A} \mid \diabr{\lseq} \mid \lseq, \lseq 
\qquad 
\oseq \BNFequals \out{A} \mid \br{\fseq}
\qquad
\fseq \BNFequals \lseq, \oseq 
$$
where~$A$ ranges over~$\langmod$.
%
The dots can be seen as marking the polarity of a formula in a sequent: $\bullet$ represents the antecedent; $\circ$ represents the succedent of a sequents -- hence, the restriction of only one $\circ$-formula in a sequent $\oseq$ is a rendering of Gentzen's intuitionistic restriction~\cite{strasburger_cut_2013}.

The formula interpretation of nested sequents is defined as follows:
$$
\begin{array}{r@{\ \colonequals\ }l@{\qquad}r@{\ \colonequals\ }l}
    \fm{\emptyseq} & \TOP 
    \\
	\fm{\inp{A}} & A 
    &
    \fm{\out{A}} & A
    \\
	\fm{\diabr{\lseq}} & \DIA \fm{\lseq} 
    &
    \fm{\br{\fseq}} & \BOX \fm{\fseq}
    \\
	\fm{\lseq_1, \lseq_2} & 
    \fm{\lseq_1} \AND \fm{\lseq_2} 
    &
    \fm{\lseq, \oseq} & 
    \fm{\lseq} \IMPLIES \fm{\oseq}
\end{array}
$$

\begin{remark}
    An immediate consequence of the definition is that for any LHS sequent~$\lseq$ in
    a sequent $\fseq$, $\fm{\lseq}$ is a negative subformula of $\fm{\fseq}$ and for a RHS sequent $\oseq$, $\fm{\oseq}$ is a positive subformula of $\fm{\fseq}$.
\end{remark}

\begin{definition}[Contexts]
    A \emph{context} $\seq[\dseq]{}$ is like a sequent but contains a hole wherever a formula may otherwise occur.
    It is an \emph{input context} $\seq{}$ if the hole should be filled with a LHS sequent to give a 
    sequent.
    It is an \emph{output context} $\seq[\lseq]{}$ if the hole should be filled with a 
    sequent to give a 
    sequent.
    %
    %
    Formally, an input and output context can be inductively defined as follows:
    $$
        \seq[\lseq]{} \BNFequals \lseq, \seq[]{} \mid \lseq, \br{\seq[\lseq]{}}
        \quad
        \seq{} \BNFequals \seq[\Omega]{}, \oseq \mid \lseq, \br{\seq[\Omega]{}, \oseq}
    $$
    where $\seq[\Omega]{}$ is an LHS-context to be filled with a LHS sequent defined inductively:
    $$
    \seq[\Omega]{} \BNFequals \lseq, \seq[]{} \mid \lseq, \diabr{\lseq, \seq[\Omega]{}}
    $$
    The filling of context can then be inductively defined where if 
    $$\seq[\lseq]{} = \lseq_1, \seq[]{} \quad \seq[\Omega]{} = \lseq_1, \seq[]{}$$ 
    for some LHS sequent $\lseq_1$, then
    $$\seq[\lseq]{\fseq} \colonequals \lseq_1, \fseq \quad \seq[\Omega]{\lseq_2} \colonequals \lseq_1, \lseq_2$$ 
    for some sequent $\fseq$ and LHS sequent $\lseq_2$.
\end{definition}

\begin{definition}[Depth]
    The depth of a context is defined inductively:
    $$
    \begin{array}{r@{\ \colonequals\ }l@{\qquad}r@{\ \colonequals\ }l}
        \depth{\seq[]{}} & 0 
        \\
        \depth{\inp{A}, \seq{}} & \depth{\seq{}} 
        &
        \depth{\out{A}, \seq{}} & \depth{\seq{}} 
        \\
        \depth{\diabr{\seq[\lseq]{}}} & \depth{\seq[\lseq]{}} + 1
        &
        \depth{\br{\seq{}}} & \depth{\seq{}} + 1 
    \end{array}
    $$
\end{definition}

\begin{definition}
    For every input context $\seq{}$, its output pruning $\seqprune{}$ is an output context with the hole in the same position as $\seq{}$ but with the output formula in $\seq{}$ removed and brackets changed from $\br{\cdot}$ to $\diabr{\cdot}$ as necessary.
    
    Formally, we define the output pruning of a sequent~$\fseq$ inductively:
    \begin{itemize}
    	\item if $\fseq = \lseq, \oseq$, then $\fseq^\prunesymb = \lseq, \oseq^\prunesymb$;
    	\item if $\oseq = \out{A}$, then $\oseq^\prunesymb = \emptyseq$;
    	\item if $\oseq = \br{\fseq}$, then $\oseq^\prunesymb = \diabr{\fseq^\prunesymb}$.
    \end{itemize}
    and we define the output pruning of an input context $\seq{}$ inductively:
    \begin{itemize}
    	\item if $\seq{} = \seq[\Omega]{}, \oseq$, then $\seqprune{} = \seqprune[\Omega]{}, \oseq^\prunesymb$;
    	\item if $\seq{} = \lseq, \br{\seq[\Omega]{}, \oseq}$ then $\seqprune{} = \lseq, \br{\seqprune[\Omega]{}, \oseq^\prunesymb}$.
    	\item if $\seq[\Omega]{} = \lseq, \seq[]{}$, then  $\seqprune[\Omega]{} = \lseq, \seq[]{}$;
    	\item if $\seq[\Omega]{} = \lseq_1, \diabr{\lseq_2, \seq[\Omega']{}}$, then $\seqprune[\Omega]{} = \lseq_1, \br{\lseq_2, \seqprune[\Omega']{}}$
    \end{itemize}
    where $\lseq, \lseq_1, \lseq_2$ are LHS sequents, $\oseq$ is an RHS sequent, and $\seq[\Omega]{}, \seq[\Omega']{}$ are LHS contexts.
\end{definition}
\begin{example}
    Given $\seq{} = \lseq_1, \br{\diabr{\lseq_2}, \diabr{\lseq_3, \seq[]{}}, \out{C}}$, we get:
    $\seqprune{} = \lseq_1, \br{\diabr{\lseq_2}, \br{\lseq_3, \seq[]{}}}$.
\end{example}

The proof system~$\nIK$ consists of the rules given in Fig.~\ref{fig:nIK}.
The addition of the rules in Fig.~\ref{fig:nIS4} yields the proof systems~$\logic{nL} \in \set{\nIK, \nIKt, \nIKfour, \nISfour}$ by picking and mixing rules correspondong to axioms $\tax$ and $\fax$.

A proof of a nested sequent~$\fseq$ in~$\logic{nL}$ is constructed as trees from these rules with root~$\fseq$ and leaves closed by axiomatic rules, and we write $\proves{\logic{nL}}{\fseq}$.

\begin{figure}[t]
	\fbox{
    \begin{minipage}{.95\textwidth}
	\centering
	$\vlinf{\botrule}{}{\seq{\inp{\BOT}}}{} 
    \qquad 
    \vlinf{\id}{}{\seq[\lseq]{\inp{p}, \out{p}}}{}$
	\\[1ex]
	$\vlinf{\cont}{}{\seq{\lseq}}{\seq{\lseq, \lseq}}
    \qquad
    \vlinf{\andl}{}{\seq{\inp{A \AND B}}}{\seq{\inp{A}, \inp{B}}} 
    \qquad 
    \vliinf{\andr}{}{\seq[\lseq]{\out{A \AND B}}}{\seq[\lseq]{\out{A}}}{\seq{\out{B}}}$
	\\[1ex]
	$\vliinf{\orl}{}{\seq{\inp{A \OR B}}}{\seq{\inp{A}}}{\seq{\inp{B}}} 
    \qquad 
    \vlinf{\orr[1]}{}{\seq[\lseq]{\out{A \OR B}}}{\seq[\lseq]{\out{A}}} 
    \qquad 
    \vlinf{\orr[2]}{}{\seq[\lseq]{\out{A \OR B}}}{\seq[\lseq]{\out{B}}}$
	\\[1ex]
	$\vliinf{\impl}{}{\seq{\inp{A \IMPLIES B}}}{\seqprune{\out{A}}}{\seq{\inp{B}}} 
    \qquad 
    \vlinf{\impr}{}{\seq[\lseq]{\out{A \IMPLIES B}}}{\seq[\lseq]{\inp{A}, \out{B}}}$
	\\[1ex]
	$\vlinf{\boxlbox}{}{\seq[\lseq]{\inp{\BOX A}, \br{\fseq}}}{\seq[\lseq]{\br{\inp{A}, \fseq}}} 
    \qquad 
    \vlinf{\boxldia}{}{\seq{\inp{\BOX A}, \diabr{\lseq}}}{\seq{\diabr{\inp{A}, \lseq}}} 
    \qquad 
    \vlinf{\boxr}{}{\seq[\lseq]{\out{\BOX A}}}{\seq[\lseq]{\br{\out{A}}}}$
	\\[1ex]
	$\vlinf{\dial}{}{\seq{\inp{\DIA A}}}{\seq{\diabr{\inp{A}}}} 
    \qquad 
    \vlinf{\diar}{}{\seq[\lseq_1]{\out{\DIA A}, \diabr{\lseq_2}}}{\seq[\lseq_1]{\br{\out{A}, \lseq_2}}}$
    \end{minipage}
    }
	\caption{System $\nIK$}
	\label{fig:nIK}
	
	\bigskip
	\fbox{
    \begin{minipage}{.95\textwidth}
	\centering
	$\vlinf{\tl}{}{\seq{\inp{\BOX A}}}{\seq{\inp{A}}} 
    \qquad 
    \vlinf{\tr}{}{\seq[\lseq]{\out{\DIA A}}}{\seq[\lseq]{\out{A}}}$
	\\[1ex]
	$\vlinf{\fourlbox}{}{\seq[\lseq]{\inp{\BOX A}, \br{\fseq}}}{\seq[\lseq]{\br{\inp{\BOX A}, \fseq}}} 
    \qquad 
    \vlinf{\fourldia}{}{\seq{\inp{\BOX A}, \diabr{\lseq}}}{\seq{\diabr{\inp{\BOX A}, \lseq}}} 
    \qquad 
    \vlinf{\fourr}{}{\seq[\lseq_1]{\out{\DIA A}, \diabr{\lseq_2}}}{\seq[\lseq_1]{\br{\out{\DIA A}, \lseq_2}}}$
    \end{minipage}
    }
	\caption{Modal rules for $\tax$ and $\fax$ 
    }
	\label{fig:nIS4}
\end{figure}

\begin{theorem}[Stra{\ss}burger~\cite{strasburger_cut_2013}]
\label{thm:soundcomp}
	Let~$\logic{L} \in \set{\IK, \IKt, \IKfour, \ISfour}$.
    For~$A \in \langmod$:
	$$\proves{\logic{L}}{A} \iff \proves{\logic{nL}}{\out{A}}$$
\end{theorem}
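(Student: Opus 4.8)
The plan is to prove the two implications separately: the right-to-left direction is \emph{soundness} of the nested calculus against the Hilbert system, and the left-to-right direction is \emph{completeness}.

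\textbf{Soundness.} I would strengthen the claim to: if $\proves{\logic{nL}}{\fseq}$ then $\proves{\logic{L}}{\fm{\fseq}}$, and prove it by induction on the height of the nested derivation. The base cases $\botrule$ and $\id$ are immediate, since $\fm{\seq{\inp{\BOT}}}$ and $\fm{\seq[\lseq]{\inp{p},\out{p}}}$ unfold to intuitionistic tautologies. For the inductive step the key tool is a \emph{context-substitution lemma}: a provable implication between sub-sequents lifts through any context to a provable implication between the corresponding formula interpretations; this is proved by induction on the context, using $\kax[1]$ to pass through a $\br{\cdot}$ layer, $\kax[2]$, $\kax[3]$ and $\kax[5]$ to pass through a $\diabr{\cdot}$ layer, and pure $\IPL$ reasoning for the propositional layers. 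Equipped with it, each rule of Fig.~\ref{fig:nIK} reduces to a routine instance of an axiom or an intuitionistic step; the two instructive cases are $\impl$, whose output pruning $\seqprune{\out{A}}$ turns $\br{\cdot}$ into $\diabr{\cdot}$ and whose soundness is mediated precisely by $\kax[4]$, namely $(\DIA A \IMPLIES \BOX B) \IMPLIES \BOX(A \IMPLIES B)$, and $\diar$, which calls on $\kax[2]$. For the extensions one checks, modularly, that the rules of Fig.~\ref{fig:nIS4} are validated by $\taxb$/$\taxd$ (for $\tl$/$\tr$) and by $\faxb$/$\faxd$ (for the box-migrating rules $\fourlbox$, $\fourldia$ and for $\fourr$, each combined with a normality axiom $\kax[1]$ or $\kax[2]$). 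Taking $\fseq = \out{A}$ yields $\proves{\logic{L}}{A}$.

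\textbf{Completeness.} Write $\logic{nL}^{+}$ for $\logic{nL}$ together with an (output-pruning-aware) cut rule, combining a derivation that produces the cut formula in output position with one that consumes it in input position. I would first prove that $\proves{\logic{L}}{A}$ implies $\proves{\logic{nL}^{+}}{\out{A}}$, by induction on the Hilbert derivation, which needs three ingredients. First, every axiom of $\logic{L}$ is already \emph{cut-free} derivable as an output sequent: the $\IPL$ axioms because the modality-free fragment of $\logic{nL}$ is essentially Gentzen's $\mathbf{LJ}$ (the rule $\impl$ with its pruning degenerating to ordinary $\to$-left), hence complete for $\IPL$; and each of $\kax[1]$--$\kax[5]$, $\taxb$, $\taxd$, $\faxb$, $\faxd$ by an explicit derivation, obtained by applying $\impr$ to expose the antecedents, $\boxr$/$\diar$ to descend under a modality, and the $\BOX$-left and $\DIA$-left rules (and their $\fax$ variants when present) to move boxed antecedents inward, closing leaves with $\id$. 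Second, closure under modus ponens: from proofs of $\out{A}$ and of $\out{A \IMPLIES B}$, invert $\impr$ on the latter to get $\inp{A}, \out{B}$ and cut against $\out{A}$. Third, closure under necessitation: a standard deep-inference lemma -- every rule acts in an arbitrary context, so any derivation relativises to the inside of a $\br{\cdot}$ -- yields $\proves{\logic{nL}^{+}}{\br{\out{A}}}$ from $\proves{\logic{nL}^{+}}{\out{A}}$, and then $\boxr$ gives $\out{\BOX A}$. This shows $\logic{L} \subseteq \logic{nL}^{+}$; cut admissibility for $\logic{nL}$ then collapses $\logic{nL}^{+}$ back to $\logic{nL}$, completing the direction.

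\textbf{Main obstacle.} The load-bearing step is cut admissibility for $\logic{nL}$ and its extensions. It is proved, as usual, by a nested induction on the complexity of the cut formula and on the heights of the two premise derivations, the interesting cases being those where the cut formula is principal on both sides. What is delicate here is the combination of the deep-inference presentation with the intuitionistic output pruning in $\impl$: the cut rule itself must be formulated with pruning, and permuting it upward past $\boxl$, $\dial$, $\boxr$, $\diar$ -- and, for the extensions, $\fourlbox$, $\fourldia$, $\fourr$, $\tl$, $\tr$ -- requires careful bookkeeping of how the brackets $\br{\cdot}$ and $\diabr{\cdot}$ are rearranged. This is exactly the content of Stra{\ss}burger's cut-elimination theorem for these calculi, on which we rely. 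An entirely different route to the $\Rightarrow$ direction is semantic: $\logic{L}$ is complete for Fischer Servi birelational models, and a failed root-first proof search in cut-free $\logic{nL}$ produces a counter-model on such a structure, so that cut-free derivability and $\logic{L}$-theoremhood coincide without any syntactic cut elimination -- at the price of developing the requisite model theory.
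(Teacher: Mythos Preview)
Your proposal is correct and follows the intended approach. Note, however, that the paper does not actually prove this theorem: it is stated as a citation of Stra{\ss}burger and the only proof content given is the remark that it ``is proved using a cut-elimination argument'' together with the display of the output-pruning-aware cut rule
\[
\vliinf{\mathsf{cut}}{}{\seq{\emptyseq}}{\seqprune{\out{A}}}{\seq{\inp{A}}}
\]
which is exactly the shape you anticipated. Your sketch is therefore considerably more detailed than what the paper supplies, but it is entirely consistent with it: soundness via the formula interpretation and a context lemma driven by $\kax[1]$--$\kax[5]$ (with $\kax[4]$ handling the pruning in $\impl$), and completeness by deriving each axiom cut-free, simulating modus ponens with cut, closing under necessitation by relativising to the inside of a bracket, and then invoking Stra{\ss}burger's cut-elimination. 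Nothing further is needed.
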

This is proved using a cut-elimination argument where the cut rule is of the shape
$$\vliinf{\mathsf{cut}}
{}
{\seq{\emptyseq}}
{\seqprune{\out{A}}}
{\seq{\inp{A}}}
$$
For reasons why we require use of a cut-free system, the reader is referred to~\cite[Chapter~8.8]{artemov_justification_2019}.

\begin{example} \label{ex:nested-pf}
This is an example of a proof in $\nIK$:
$$
\vlderivation{
    \vlin{\impr}{}{\out{((\BOX \BOT \IMP \BOT)\IMP \BOT)\IMP\BOX\BOT}}{
        \vlin{\boxr}{}{\inp{((\BOX \BOT \IMP \BOT)\IMP \BOT)},\out{\BOX\BOT}}{
            \vliin{\impl}{}{\inp{(\BOX \BOT \IMP \BOT)\IMP \BOT},\br{\out{\BOT}}}{
            \vlin{\impr}{}{\out{\BOX \BOT \IMP \BOT},\diabr{\emptyseq}}{
                \vlin{\boxl}{}{\inp{\BOX \BOT},\out{\BOT},\diabr{\emptyseq}}{
                    \vlin{\botrule}{}{\out{\BOT},\diabr{\inp{\BOT}}}{
                        \vlhy{}
                    }
                }
            }
            }{
            \vlin{\botrule}{}{\inp{\BOT},\br{\out{\BOT}}}{
                \vlhy{}
            }
            }
        }
    }
}
$$
Note that this is a theorem of $\IK$ but not of $\CK$~\cite{das_intuitionistic_2023}. 
We will expand this example later to illustrate how our system allows us to provide a realisation of it.
\end{example}

\begin{remark}
    The system given in~\cite{strasburger_cut_2013} doesn't use an explicit contraction rule as it is admissible due to the shape of the rules. 
    For the purposes of our work, we require this distinction to be made and decouple contraction from the rules similar to the rules in~\cite{marin_label-free_2014}.
\end{remark}
	
	\subsection{Annotated Sequents}
    \begin{definition}[Annotated sequents]
An \emph{annotated sequent/context} is a sequent/context in which only annotated formulas occur, and brackets $\br{\cdot}$ and $\diabr{\cdot}$ are indexed by some $n \in \nat$, written $\br[n]{\cdot}$ and $\diabr[n]{\cdot}$.
$$\fm{\diabr[n]{\lseq}} \colonequals \DIA[n] \fm{\lseq}
\qquad
\fm{\br[n]{\fseq}} \colonequals \BOX[n] \fm{\fseq}$$
\end{definition}

Most definitions for annotated formulas and for nested sequents transfer directly to annotated sequents. 
However, the concept of output pruning requires some careful handling as not only the brackets need to be possibly flipped, but their annotations need to be accordingly and appropriately changed as well.

The annotated rules are the same as the ones without annotations (but are given fully on Fig.~\ref{fig:nIKann} and~\ref{fig:nIS4ann} in Appendix).

The contraction rule applies in principle to the same sequent, but needs to distinguish annotations, that is, it should be written:
$$\vlinf{\cont}{}{\seq{\lseq_3}}{\seq{\lseq_1, \lseq_2}}$$
where $\lseq_1$, $\lseq_2$ and $\lseq_3$ represent the same unannotated sequent but do not share annotation indices.

\begin{example}
Let us illustrate the subtlety of the annotated output pruning on an instance of the $\impl$-rule.
    $$
    \vliinf{\impl}{}
    {
    \seq[\Omega_1]{\inp{A \IMPLIES B}},
    \seq[\Omega_2]{\out{C}}
    }
    {
    \seq[\Omega_1^{\prunesymb}]{\out{A}},
    \seq[\Omega_2^{\prunesymb}]{\emptyseq}
    }
    {
    \seq[\Omega_1]{\inp{B}},
    \seq[\Omega_2]{\out{C}}
    }
    $$
Suppose that $\seq[\Omega_1]{\inp{A\IMP B}} = \diabr[4k+3]{\lseq_1, \inp{A\IMP B}}$ and $\seq[\Omega_2]{\out{C}} = \br[4n]{\lseq_2,\out{C}}$.
Then, $\seq[\Omega_1^{\prunesymb}]{\out{A}}$ will be of the form $\br[4l]{\lseq_1, \out{A}}$ for some $l\in\nat$.
And, $\seq[\Omega_2^{\prunesymb}]{\emptyseq}$ of the form $\diabr[4m+3]{\lseq_2}$ for some $m\in\nat$.
\end{example}

    From now on, a nested sequent system $\logic{nL}$ will casually refer to the annotated version.
    We note that Theorem~\ref{thm:soundcomp} also holds for the annotated case -- for any unannotated derivation in the nested sequent system, properly annotate the endsequent and propagate the annotations upwards using the rules in Fig.~\ref{fig:nIKann} and~\ref{fig:nIS4ann}.

\begin{example} \label{ex:ann-pf}
Proof from Example~\ref{ex:nested-pf} in an annotated version:
$$
\vlderivation{
    \vlin{\impr}{}{\out{((\BOX[1] \BOT \IMP \BOT)\IMP \BOT)\IMP\BOX[0]\BOT}}{
        \vlin{\boxr}{}{\inp{((\BOX[1] \BOT \IMP \BOT)\IMP \BOT)},\out{\BOX[0]\BOT}}{
            \vliin{\impl}{}{\inp{(\BOX[1] \BOT \IMP \BOT)\IMP \BOT},\br[0]{\out{\BOT}}}{
            \vlin{\impr}{}{\out{\BOX[1] \BOT \IMP \BOT},\diabr[3]{\emptyseq}}{
                \vlin{\boxldia}{}{\inp{\BOX[1] \BOT},\out{\BOT},\diabr[3]{\emptyseq}}{
                    \vlin{\botrule}{}{\out{\BOT},\diabr[3]{\inp{\BOT}}}{
                        \vlhy{}
                    }
                }
            }
            }{
            \vlin{\botrule}{}{\inp{\BOT},\br[0]{\out{\BOT}}}{
                \vlhy{}
            }
            }
        }
    }
}
$$
\end{example}

\begin{remark}\label{rem:impl}
Each occurence of $\impl$ in a proof can be replaced by a macro rule composed of $\constrimpl$, $\updtrle$ and $\cont$ rules:
$$
\vliinf{\constrimpl}{}
        {
		\seq{\seq[\Omega]{\inp{A \IMPLIES B}}, \lseq, \oseq}
	}
	{
		\seq{\seq[\Omega^{\prunesymb}]{\out{A}}, \lseq}
	}
	{
		\seq{\seq[\Omega]{\inp{B}}, \oseq}
	}
    \qquad
    \vlinf{\updtrle}{}
	{
		\seq{\br[4n]{\lseq_1, \lseq_2, \oseq}}
	}
	{
		\seq{\diabr[4k+3]{\lseq_1}, \br[4n]{\lseq_2, \oseq}}	
	}
$$
where $\lseq, \lseq_1, \lseq_2$ are annotated LHS sequents, $\oseq$ is an annotated RHS sequent,
and the output pruning follows the requirements mentioned earlier. \newpaaras{Cite arXiv article}
\end{remark}

    \section{Realisation Theorem}\label{sec:real}

A realisation can be extended to an annotated sequent $\Sigma$ as $\real{\Sigma} \colonequals \real{\fm{\Sigma}}$.

The goal of this section is to prove the following theorem.

\begin{theorem}[Realisation Theorem]\label{thm:nestreal}
        Let $\Log \in \set{\IK, \IKt, \IKfour, \ISfour}$.
	Let~$\fseq$ be an annotated full sequent.
	If
	$$
	\proves{\nL}{\fseq}
	$$
	then there exists a realisation function~$r$ on~$\fseq$ such that
	$$
	\proves{\JL}{\real{\fseq}}
	$$
\end{theorem}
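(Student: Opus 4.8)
The plan is to prove this by induction on the height of the derivation $\proves{\nL}{\fseq}$, following the methodology of Goetschi and Kuznets~\cite{goetschi_realization_2012} adapted to the intuitionistic nested-sequent setting. The central difficulty, as always in syntactic realisation proofs, is that the realisation functions produced for the premises of a rule need not agree, so before applying the corresponding justification-logic reasoning one must \emph{merge} them into a common realisation. I would therefore first isolate the two technical ingredients that make such merging possible: a \textbf{substitution lemma} stating that if $\proves{\JL}{\real{\fseq}}$ and $\sub$ is a substitution of proof/satisfier terms for the variables that are \emph{not} fixed (i.e.\ not in $\negvar{\fm{\fseq}}$), then $\proves{\JL}{\subst[\sub]{\real{\fseq}}}$ under a suitably modified realisation; and a \textbf{weakening/monotonicity lemma} that lets us replace a term $t$ realising a positive occurrence of $\BOX$ by a term $\jsum{t}{s}$ (and a satisfier $\mu$ by $\suni{\mu}{\nu}$) while preserving provability, using the $\jsumax$ and $\juniax$ axioms. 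These two together give the standard ``join'' operation on realisations: given realisations $r_1, r_2$ agreeing on $\negvar{\fm{\fseq}}$, one renames their free (non-reserved) variables apart by substitution, then takes pointwise sums/unions on the shared positive annotations, obtaining a single realisation that dominates both.

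With the merging machinery in place, the induction proceeds by case analysis on the last rule. The axiomatic rules $\id$ and $\botrule$ are base cases: $\real{\out{p}}=p$ needs $p\IMPLIES p$, and $\real{\inp{\BOT}}$ realises to something with $\BOT$ on the left, both trivially provable in $\IPL\subseteq\JL$. The purely propositional rules ($\andl,\andr,\orl,\orr[1],\orr[2],\impr$, and $\cont$ via the contraction-on-distinct-annotations convention) are handled by merging the premise realisations and then invoking the corresponding intuitionistic propositional tautology, with $\cont$ additionally using the substitution lemma to identify the two copies' annotations. The genuinely modal rules are where the justification axioms do the work: for $\boxr$ (introducing a positive $\BOX[4n]$) one assigns $r(4n)$ a fresh proof constant and appeals to the $\can$ rule together with $\jkax{1}$-style reasoning to lift the realised premise under the new box; $\boxlbox$ and $\boxldia$ (which move a negative $\BOX$ inward) keep the same reserved variable $\pvar[k]$ for $\BOX[4k+1]$ and use $\jkax{1}$ (application $\jappl{\cdot}{\cdot}$); $\dial$ and $\diar$ are the satisfier-side analogues using $\jkax{2}$ (propagation $\sappl{\cdot}{\cdot}$) and $\jkax{3}$; and the extension rules $\tl,\tr$ (axioms $\jtaxb,\jtaxd$) and $\fourlbox,\fourldia,\fourr$ (axioms $\jfaxb,\jfaxd$) are matched one-to-one with their bracketed justification axioms. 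The crucial step is the $\impl$-rule, where the excerpt's Remark~\ref{rem:impl} decomposes it into $\constrimpl$, $\updtrle$ and $\cont$: the $\constrimpl$ half is ordinary merging plus the implication axioms, while $\updtrle$ is precisely what the new $\jkax{4}$ axiom and the local-update operator $\jupdt{\mu}{t}$ were designed for --- moving a $\diabr[4k+3]{\cdot}$ context into a $\br[4n]{\cdot}$ context corresponds to turning a hypothesis $\sat{\ressvar[k]}{A}\IMPLIES\just{t}{B}$ into $\just{\jupdt{\ressvar[k]}{t}}{(A\IMPLIES B)}$, and the satisfier self-referentiality restriction guarantees $\ressvar[k]$ does not occur in the realisation of $B$, so the axiom instance is well-formed.

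I expect the main obstacle to be the \textbf{output-pruning interaction}, i.e.\ making the bookkeeping of annotations coherent across pruning. When the $\impl$ (equivalently $\constrimpl$/$\updtrle$) rule fires, the left premise lives in the pruned context $\seqprune{\out{A}}$, where $\diabr[4k+3]{\cdot}$ brackets have become $\br[4l]{\cdot}$ brackets with \emph{new} even indices; one must argue that the realisation obtained for the pruned premise can be transported back to a realisation on the unpruned conclusion, which forces careful tracking of which annotations are fresh, which are reserved, and how $\restr{(\cdot)}$ behaves under pruning. A secondary subtlety is ensuring that the satisfier self-referentiality restriction (``if $n=4k+3$ then $\svar[k]$ does not occur in $\real{A}$'') is \emph{maintained} as an invariant through every rule, not merely assumed --- in particular the merging step must be shown not to introduce $\svar[k]$ into a context it was previously absent from, which is why the non-fixed variables being substituted are chosen disjoint from all the reserved-variable enumerations. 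Once these invariants are stated precisely as part of a strengthened induction hypothesis (``there is a \emph{normal} realisation $r$ with $\dom r = \ann{\fm{\fseq}}$ such that \ldots''), the individual rule cases become the routine computations I would not spell out here.
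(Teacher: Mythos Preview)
Your overall plan is the paper's plan: induction on the nested-sequent derivation, with merging (Theorem~\ref{thm:merging}) plus the Substitution Lemma to reconcile branching rules, the Lifting Lemma to push through bracket contexts, and the $\constrimpl/\updtrle/\cont$ decomposition of $\impl$ so that $\jkax{4}$ handles the $\diabr{\cdot}$-to-$\br{\cdot}$ move. That high-level skeleton is right.

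Two of your rule cases are wrong as written, though, and would not go through. First, $\boxr$ does \emph{not} involve ``a fresh proof constant'' or the $\can$ rule: in the annotated system the premise $\seq[\lseq]{\br[4n]{\out{A}}}$ and the conclusion $\seq[\lseq]{\out{\BOX[4n]A}}$ have the \emph{same} formula interpretation, so the shallow case is literally $r:=r'$ (Proposition~\ref{prop:shallowright}); the proof term attached to $4n$ was already fixed by the induction hypothesis on the premise, and the outer-context lifting is done once and for all by the Lifting Lemma, not by constants. Second, $\botrule$ is \emph{not} ``trivially provable in $\IPL$'': when $\inp{\BOT}$ sits inside a $\diabr[4k+3]{\cdot}$ bracket, the realisation of the conclusion contains $\sat{\svar[k]}{(\ldots\AND\BOT)}$ on the left, and you genuinely need $\jkax{5}$ (together with the satisfier Lifting Lemma) to extract $\BOT$ from under the satisfier --- see Proposition~\ref{prop:LHSbot}. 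The paper organises all of this via a uniform \emph{shallow-to-deep} induction on context depth, treating ``right-style'' rules (Lemma~\ref{lem:realwhite}) separately from ``left-style'' rules, and for the latter distinguishing whether the principal formula sits in an LHS context (Lemma~\ref{lem:realblacklam}, proved bottom-up as $\real{\seq[\lseq]{\lseq_0}}\IMPLIES\ldots$) or in an RHS context (Lemma~\ref{lem:realblackpi}, top-down); this polarity-sensitive split is the structural point your sketch does not make explicit and is exactly what makes the ``output-pruning interaction'' you worry about tractable.
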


The high-level idea is to systematically scan through the proof of $\fseq$ top down. 
In the leaves, a basic realisation is given to any axiomatic sequent $\lseq,\oseq$ defined as:
\begin{multline*}
		r \colonequals \set[(4m, {\respvar[m]})]{\text{$4m$ is an annotation in~$\lseq,\oseq$}} 
		\\ \UNI \set[(4m+1, {\pvar[m]})]{\text{$4m+1$ is an annotation in~$\lseq,\oseq$}} 
		\\ \UNI \set[(4m+2, {\ressvar[m]})]{\text{$4m+2$ is an annotation in~$\lseq,\oseq$}} 
		\\ \UNI \set[(4m+3, {\svar[m]})]{\text{$4m+3$ is an annotation in~$\lseq,\oseq$}}
\end{multline*}
\begin{example} 
On the example annotated proof from Example~\ref{ex:ann-pf}, at each stage of the proof, the realisation procedure will yields a justification formula for each sequent in the derivation tree.
These formulas can be linked by derivations in the Hilbert system of the justification logics. 
The variables $\respvar[i]$ and $\ressvar[j]$ are step-by-step substituted top-down by larger and larger terms according to the $\JL$ axioms.
    $$
\vlderivation{
    \vlin{\impr}{}{((\just{\pvar[0]}{\BOT} \IMP \BOT)\IMP \BOT)\IMP\just{t(\jupdt{\svar[0]}{\respvar[0]})}{\BOT}}{
        \vlin{\boxr}{}{((\just{\pvar[0]}{\BOT} \IMP \BOT)\IMP \BOT)\IMP\just{t(\jupdt{\svar[0]}{\respvar[0]})}{\BOT}}{
            \vliin{\impl}{}{((\just{\pvar[0]}{\BOT} \IMP \BOT)\IMP \BOT)\IMP\just{t(\jupdt{\svar[0]}{\respvar[0]})}{\BOT}}{
            \vlin{\impr}{}{\sat{\svar[0]}{\TOP} \IMP (\just{\pvar[0]}{\BOT} \IMP \BOT)}{
                \vlin{\boxl}{}{(\sat{\svar[0]}{\TOP} \AND \sat{\pvar[0]}{\BOT} )\IMP\BOT}{
                    \vlin{\botrule}{}{\sat{\svar[0]}{\BOT}\IMP\BOT}{
                        \vlhy{}
                    }
                }
            }
            }{
            \vlin{\botrule}{}{\BOT\IMP\just{\respvar[0]}{\BOT}}{
                \vlhy{}
            }
            }
        }
    }
}
$$

    The proof term $t(\jupdt{\svar[k]}{\respvar[m]})$ is constructed from the steps given in Lemma~\ref{lem:rightrules}.
    Using the $\jkax{4}$ axiom
    $$
    \proves{\J}
    {
        (\sat{\svar[k]}{\TOP} \IMP \just{\respvar[m]}{\BOT})
        \IMP
        \just{\jupdt{\svar[k]}{\respvar[m]}}{(\TOP \IMP \BOT)}
    }
    $$
    and using the Lifting Lemma~\ref{lem:lifting} on the propositional theorem
    $$
    \proves{\J}
    {
        (\TOP \IMP \BOT)
        \IMP
        \BOT
    }
    $$
    there exists a proof term $t(\jupdt{\svar[k]}{\respvar[m]})$ such that
    $$
    \proves{\J}
    {
        \just{\jupdt{\svar[k]}{\respvar[m]}}{(\TOP \IMP \BOT)}
        \IMP
        \just{t(\jupdt{\svar[k]}{\respvar[m]})}{\BOT}
    }
    $$
\end{example}

\subsection{Lifting}

\begin{restatable}[Internalised neccessitation]{lemma}{lemnec}
\label{lem:nec}
	Let~$\J \in \set{\JIK, \JIKt, \JIKfour, \JISfour}$.
	Let~$A \in \langjust$.
	If~$\proves{\J}{A}$, then there exists a ground term~$t$ 
    such that
	$$\proves{\J}{\just{t}{A}}$$ 
\end{restatable}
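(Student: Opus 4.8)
The plan is to prove the \emph{Internalised Necessitation} lemma by induction on a Hilbert-style derivation of $A$ in $\J$, constructing a ground proof term $t$ in lockstep so that $\proves{\J}{\just{t}{A}}$ holds at every step. This is the standard Lifting Lemma argument from justification logic, adapted to account for the new satisfier machinery; the key observation is that every derivation step either introduces an axiom, applies the $\can$ rule, or applies modus ponens, and each of these has a term-level counterpart.

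First I would handle the base cases. If $A$ is an axiom instance from Fig.~\ref{fig:justif-axioms}, then the $\can$ rule (with $n=1$) immediately gives $\proves{\J}{\just{\pconst[1]}{A}}$, so we take $t \colonequals \pconst[1]$, which is ground. If $A$ is itself the conclusion of a $\can$ rule, say $A = \just{\pconst[n]}{\just{\dots}{\just{\pconst[1]}{B}}}$ for an axiom instance $B$, then applying $\can$ once more with a fresh constant $\pconst[n+1]$ yields $\proves{\J}{\just{\pconst[n+1]}{A}}$, so we take $t \colonequals \pconst[n+1]$. In both cases the witnessing term is a proof constant and hence ground.

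For the inductive step, suppose $A$ was obtained by modus ponens from $B \IMP A$ and $B$. By the induction hypothesis there are ground terms $s$ and $u$ with $\proves{\J}{\just{s}{(B \IMP A)}}$ and $\proves{\J}{\just{u}{B}}$. Using axiom $\jkax{1}$, namely $\just{s}{(B \IMP B')} \IMP (\just{u}{B} \IMP \just{\jappl{s}{u}}{B'})$ instantiated with $B' = A$, together with two applications of propositional modus ponens inside $\J$, we obtain $\proves{\J}{\just{\jappl{s}{u}}{A}}$. Since $s$ and $u$ are ground, so is $\jappl{s}{u}$, so we set $t \colonequals \jappl{s}{u}$. (There is a purely propositional case as well: if $A$ is a substitution instance of an $\IPL$-theorem, it is also an axiom for the purposes of the $\can$ rule, so this is subsumed by the first base case; if the paper's axiomatisation lists $\IPL$ via finitely many schemes plus modus ponens, that is exactly the structure just described.)

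I do not expect any serious obstacle here — the proof is essentially the classical Lifting/Internalisation argument and the new satisfier terms play no active role, since necessitation only ever produces boxed (proof-term) formulas. The one point requiring mild care is bookkeeping about what counts as an ``axiom instance'' for the $\can$ rule: one must check that the $\can$ rule as stated indeed applies to \emph{every} axiom scheme used in the Hilbert system (including the propositional base), so that the first base case really covers all non-inference leaves of the derivation tree. A secondary point is ensuring genuine groundness is preserved: proof constants are ground by definition and $\jappl{\cdot}{\cdot}$ of ground terms is ground, so an easy simultaneous induction on ``$t$ is ground'' alongside the main induction closes this off.
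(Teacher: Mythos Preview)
Your proof is correct and follows essentially the same approach as the paper: induction on the Hilbert derivation, handling axiom instances and $\can$-conclusions by one further application of $\can$, and handling modus ponens via $\jkax{1}$ to build $\jappl{s}{u}$. The paper's proof is slightly terser but structurally identical, and your caveat about ensuring the $\can$ rule covers the propositional axiom schemes is exactly the right point of care.
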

\begin{proof}
	The proof is the same as the classical case. Example proofs can be found in~\cite{artemov_justification_2019,kuznets_logics_2019}.
    \qed
\end{proof}

\begin{lemma}[Lifting Lemma]\label{lem:lifting}
	Let~$\J \in \set{\JIK, \JIKt, \JIKfour, \JISfour}$.
	Let $B_1, \dots, B_n, C, A \in \langjust$ for some~$n \in \nat$.
	%
        \begin{enumerate}
            \item If $\proves{\J}{B_1 \IMPLIES B_2 \IMPLIES \dots \IMPLIES B_n \IMPLIES A}$,
	then for proof terms~$s_1, \dots, s_n$, there exists a proof term~$t(s_1, \dots, s_n)$ such that
        $$\proves{\J}{\just{s_1}{B_1} \IMPLIES \dots \IMPLIES \just{s_n}{B_n} \IMPLIES \just{t(s_1, \dots, s_n)}{A}}$$
            \item If $\proves{\J}{B_1 \IMPLIES B_2 \IMPLIES \dots \IMPLIES B_n \IMPLIES C \IMPLIES A}$,
            then for proof terms~$s_1, \dots, s_n$ and satisfier~$\nu$, there exists a satisfier~$\mu(s_1, \dots, s_n, \nu)$ such that
            $$\proves{\J}{\just{s_1}{B_1} \IMPLIES \dots \IMPLIES \just{s_n}{B_n} \IMPLIES \sat{\nu}{C} \IMPLIES \sat{\mu(s_1, \dots, s_n, \nu)}{A}}$$
        \end{enumerate}
\end{lemma}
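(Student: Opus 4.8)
The plan is to prove both parts simultaneously by induction on $n$, using the internalised necessitation lemma (Lemma~\ref{lem:nec}) as the base case engine and the $\jkax{}$ axioms to peel off one hypothesis at a time. For part~(1) with $n=0$: given $\proves{\J}{A}$, Lemma~\ref{lem:nec} hands us a ground term $t$ with $\proves{\J}{\just{t}{A}}$, and we set $t() \colonequals t$. For the inductive step of part~(1), suppose $\proves{\J}{B_1 \IMPLIES (B_2 \IMPLIES \dots \IMPLIES B_n \IMPLIES A)}$. By Lemma~\ref{lem:nec} there is a ground term $c$ with $\proves{\J}{\just{c}{(B_1 \IMPLIES (B_2 \IMPLIES \dots \IMPLIES A))}}$. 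Using axiom $\jkax{1}$ instantiated appropriately, $\proves{\J}{\just{s_1}{B_1} \IMPLIES \just{\jappl{c}{s_1}}{(B_2 \IMPLIES \dots \IMPLIES A)}}$, so after assuming $\just{s_1}{B_1}$ we are in the situation of the lemma for the formula $B_2 \IMPLIES \dots \IMPLIES B_n \IMPLIES A$ — except that now we only have $\proves{\J}{\just{\jappl{c}{s_1}}{(\dots)}}$ rather than $\proves{\J}{(\dots)}$. So the induction has to be strengthened: I would prove the "boxed" form, namely that if $\proves{\J}{\just{u}{(B_1 \IMPLIES \dots \IMPLIES B_n \IMPLIES A)}}$ then for any $s_1,\dots,s_n$ there is $t(u,s_1,\dots,s_n)$ with $\proves{\J}{\just{s_1}{B_1} \IMPLIES \dots \IMPLIES \just{s_n}{B_n} \IMPLIES \just{t}{A}}$; the lemma statement then follows by one application of Lemma~\ref{lem:nec} at the outset. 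The boxed form is proved by a clean induction on $n$: strip $B_1$ via $\jkax{1}$, producing the term $\jappl{u}{s_1}$, and recurse.

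For part~(2), the same strategy applies but the final step uses $\jkax{2}$ instead of $\jkax{1}$. Concretely, from $\proves{\J}{\just{u}{(B_1 \IMPLIES \dots \IMPLIES B_n \IMPLIES C \IMPLIES A)}}$ I would use part~(1)'s boxed form to absorb $B_1,\dots,B_n$, arriving at a term $v = t(u,s_1,\dots,s_n)$ with $\proves{\J}{\just{s_1}{B_1} \IMPLIES \dots \IMPLIES \just{s_n}{B_n} \IMPLIES \just{v}{(C \IMPLIES A)}}$. Then $\jkax{2}$ gives $\proves{\J}{\just{v}{(C \IMPLIES A)} \IMPLIES (\sat{\nu}{C} \IMPLIES \just{\sappl{v}{\nu}}{A})}$ — wait, this produces a $\just{}{}$, not a $\sat{}{}$; the conclusion of part~(2) wants $\sat{\mu}{A}$. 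So the right axiom to invoke at the tail is the one that combines a proof of an implication with a satisfier of the antecedent to yield a satisfier of the consequent. Looking at the axioms, $\sat{\mu(s_1,\dots,s_n,\nu)}{A}$ with $A$ on the right of $\DIA$ corresponds to using $\kax[2]$'s justification form only if we instead want the $\sappl{}{}$ operator applied so that $\sappl{v}{\nu}$ names a satisfier: indeed $\sappl{}{}$ takes a proof term and a satisfier and $\jkax{2}$ reads $\just{s}{(A\IMP B)} \IMP (\sat{\mu}{A} \IMP \just{\sappl{s}{\mu}}{B})$ — so $\sappl{s}{\mu}$ is a \emph{proof term}. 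Hmm, then the correct move for part~(2) is to realise that $\sat{\mu}{A}$ from $C \IMPLIES A$ and $\sat{\nu}{C}$ is exactly an instance matching $\jkax{2}$ \emph{read with the roles of box and diamond as in the forgetful projection}: we want $\DIA A$ from $\BOX(C\IMP A)$ and $\DIA C$, which is axiom $\kax[2]$, whose justification counterpart is precisely $\jkax{2}$ but yielding $\just{\sappl{s}{\mu}}{A}$. Since the statement insists on $\sat{\mu}{A}$, I expect the intended reading is that $\sappl{s}{\mu}$ — despite the grammar placing $\sappl{}{}$ among proof-term formation — is actually among satisfiers (re-reading the grammar: $\mu \BNFequals \svar,\ressvar \mid (\suni{\mu}{\mu}) \mid (\sappl{t}{\mu})$, yes, $\sappl{t}{\mu}$ \emph{is} a satisfier), and correspondingly axiom $\jkax{2}$ as printed should have $\sat{\sappl{s}{\mu}}{B}$ on the right. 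Under that reading, set $\mu(s_1,\dots,s_n,\nu) \colonequals \sappl{v}{\nu}$ and we are done.

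The routine parts are the propositional bookkeeping: once an axiom of the form $P \IMPLIES (Q \IMPLIES R)$ is available and we have derived $P$ under some hypotheses $H_1,\dots,H_k$, modus ponens and the deduction-style manipulations in $\IPL$ let us conclude $H_1 \IMPLIES \dots \IMPLIES H_k \IMPLIES (Q \IMPLIES R)$; I would not spell these out. The one genuine subtlety — and the step I expect to be the main obstacle to get exactly right — is the strengthening of the induction hypothesis to the "boxed" form (replacing $\proves{\J}{(\dots)}$ by $\proves{\J}{\just{u}{(\dots)}}$), because the naive statement is not directly closed under the inductive step: after one application of $\jkax{1}$ we are left proving something \emph{about} a boxed formula rather than the formula itself. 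Getting the quantifier order right (the term $t(s_1,\dots,s_n)$ must be built uniformly, independent of the particular derivation, so that it can later be substituted into in the realisation algorithm) is the point that needs care, and it is exactly what makes $\jappl{}{}$ associate the way it does in the constructed term.
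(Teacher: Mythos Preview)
Your proposal is correct and follows the standard argument for lifting lemmas in justification logic. The paper itself does not give a proof but simply cites \cite{artemov_justification_2019,kuznets_logics_2019} for part~(1) and \cite{kuznets_justification_2021} for part~(2), remarking that the proof is insensitive to the additional axioms; your sketch is essentially what those references contain, namely the strengthened ``boxed'' induction hypothesis together with repeated use of $\jkax{1}$, and a final application of $\jkax{2}$ for the satisfier case.

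One incidental observation worth recording: your mid-proof confusion about the output type of $\sappl{}{}$ is well-founded. The grammar indeed places $\sappl{t}{\mu}$ among satisfiers, and the uses of $\jkax{2}$ elsewhere in the paper (e.g.\ in the proof of Proposition~\ref{prop:shallowright}, case $\rle=\diar$) confirm that the intended conclusion of $\jkax{2}$ is $\sat{\sappl{s}{\mu}}{B}$, not $\just{\sappl{s}{\mu}}{B}$ as printed in Fig.~\ref{fig:justif-axioms}. Under that reading your construction $\mu(s_1,\dots,s_n,\nu) \colonequals \sappl{v}{\nu}$ is exactly right.
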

\begin{proof}
        A proof of 1.~can be found in~\cite{artemov_justification_2019,kuznets_logics_2019}.
	A proof of 2.~can be found in~\cite{kuznets_justification_2021}.
	The proof is not sensitive to the additional axioms and operators used here.
    \qed
\end{proof}

\subsection{Merging}

\begin{definition}[Substitution]
	A \emph{substitution} is a map~$\function{\sub}{\prfvars}{\prfterms}$ and~$\function{\sub}{\satvars}{\satterms}$. 
	Substitutions can then be inductively extended to~$\prfterms$, $\satterms$ and $\langjust$ as standard.
	We will write $\subst{\chi}$ for $\sub(\chi)$ for some $\chi \in \prfterms \UNI \satterms \UNI \langjust$.
	
	The \emph{domain} of~$\sub$ is the set
	$$\dom{\sub} \colonequals \set[\chi \in \prfvars \UNI \satvars]{\subst{\chi} \neq \chi}$$
\end{definition}

\begin{lemma}[Substitution Lemma]\label{lem:subst}
	Let~$\J \in \set{\JIK, \JIKt, \JIKfour, \JISfour}$.
	Let~$A \in \langjust$. 
	Let $\sub$ be a substitution. 
	If
	$\proves{\J}{A}$, then $\proves{\J}{\subst{A}}$.
\end{lemma}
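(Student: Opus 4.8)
The plan is to proceed by induction on the derivation of $A$ in $\J$. The base cases split into two kinds: axiom instances and conclusions of the $\can$ rule. For an axiom instance, observe that each schema in Fig.~\ref{fig:justif-axioms} is closed under substitution: if $A$ is an instance of, say, $\jkax{1}$, then $\subst{A}$ is again an instance of $\jkax{1}$, because substitution commutes with the term-forming operations (so $\subst{(\jappl{s}{t})} = \jappl{\subst{s}}{\subst{t}}$, $\subst{(\sappl{s}{\mu})} = \sappl{\subst{s}}{\subst{\mu}}$, etc.) and with the formula constructors; hence $\proves{\J}{\subst{A}}$ by the same axiom. The only point needing a remark is the $\can$ rule: its conclusion has the shape $\just{\pconst[n]}{\just{\dots}{\just{\pconst[1]}{B}}}$ with $B$ an axiom instance and the $\pconst[i]$ proof constants. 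Since constants are not variables, $\subst{\pconst[i]} = \pconst[i]$, and by the previous point $\subst{B}$ is still an axiom instance, so $\subst{A} = \just{\pconst[n]}{\just{\dots}{\just{\pconst[1]}{\subst{B}}}}$ is again derivable by $\can$.

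For the inductive step, the derivation ends either with a propositional-logic inference (since $\J$ extends $\IPL$) or, in principle, with modus ponens, which is the rule driving the Hilbert system. If the last step is modus ponens applied to $\proves{\J}{C \IMPLIES A}$ and $\proves{\J}{C}$, the induction hypothesis gives $\proves{\J}{\subst{(C \IMPLIES A)}}$ and $\proves{\J}{\subst{C}}$; since $\subst{(C \IMPLIES A)} = \subst{C} \IMPLIES \subst{A}$, a further modus ponens yields $\proves{\J}{\subst{A}}$. If the last step is any other propositional rule or axiom, the same commutation of $\sub$ with $\IMPLIES$, $\AND$, $\OR$, $\BOT$ lets us replay the inference on the substituted formulas. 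This exhausts the cases.

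I do not expect any real obstacle here: the lemma is the standard "substitution is admissible" fact for Hilbert-style justification calculi, and the only thing one must check is that the \emph{new} axioms ($\jkax{3}$, $\jkax{4}$, $\jkax{5}$, $\juniax$, $\jtaxd$, $\jfaxd$) and the new operators ($\sappl{}{}$, $\suni{}{}$, $\jupdt{}{}$) are schematic and closed under substitution, which they are by inspection of Fig.~\ref{fig:justif-axioms}. The mildest subtlety is bookkeeping around the $\can$ rule, namely that proof constants are untouched by $\sub$, but this is immediate from the definition of a substitution as a map on $\prfvars$ and $\satvars$ only. Thus the proof is a routine induction and I would present it compactly, emphasising only the axiom-closure point and the $\can$ case.
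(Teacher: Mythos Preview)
Your proposal is correct and matches the paper's approach: both argue that axiom instances are closed under substitution and that conclusions of the $\can$ rule remain conclusions of $\can$ because constants are unaffected by $\sub$, with the modus ponens case handled routinely. The paper's proof is just a more compressed version of exactly what you wrote.
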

\begin{proof}
	This is a routine proof which follows precisely from the fact that given any axiom instance~$A$, $\subst{A}$ is an axiom instance, and hence, for a conclusion of the $\can$~rule
    $\just{\pconst[n]}{\just{\dots}{\just{\pconst[1]}{A}}}$,
    the result of applying a substitution on it:
    $\subst{(\just{\pconst[n]}{\just{\dots}{\just{\pconst[1]}{A}}})} = \just{\pconst[n]}{\just{\dots}{\just{\pconst[1]}{(\subst{A})}}}
    $ is a conclusion of the $\can$~rule.
    \qed
\end{proof}

\begin{theorem}[Realisation merging]\label{thm:merging}
	Let~$\J \in \set{\JIK, \JIKt, \JIKfour, \JISfour}$.
	Let~$A \in \langann$.
	Let~$r_1$ and~$r_2$ be realisations on~$A$.
	Then there exists a realisation function~$r$ on~$A$ and a substitution~$\sub$ such that:
	\begin{enumerate}
		\item For every positive subformula~$X$ of~$A$, $\proves{\J}{\subst{\real[r_1]{X}} \IMPLIES \subst{\real[r_2]{X}} \IMPLIES \real{X}}$
		\item For every negative subformula~$X$ of~$A$, $\proves{\J}{\real{X} \IMPLIES \subst{\real[r_i]{X}}}$ where~$i \in \set{1, 2}$.
	\end{enumerate}
	where $\dom{\sub} \sset \negvar{A}$ and $\subst{x}$ contains no new variables for each~$x$.
\end{theorem}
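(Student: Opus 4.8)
The plan is to argue by induction on the structure of the annotated formula $A$, proving along the way the slightly stronger form of clause~1 in which $\subst{\real[r_i]{X}} \IMPLIES \real{X}$ is derivable for \emph{each} $i \in \set{1,2}$ separately; the stated conjunctive form is then immediate by weakening. With clauses~1 and~2 both in this ``for each $i$'' shape they become exact duals, which is what makes the $\IMPLIES$-case work: on a subformula $U \IMPLIES V$ the antecedent and consequent have opposite polarities, so one feeds the obligation for the antecedent ``backwards'' and the obligation for the consequent ``forwards'' (swapping roles according to the polarity of $U \IMPLIES V$ itself). Since polarities flip under $\IMPLIES$-antecedents, we simultaneously prove the dual statement for formulas in negative position (the ``anti-properly annotated'' variant where evenness of an index tracks a negative occurrence), the two statements being mutually recursive. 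The base cases $A \in \set{\BOT, p}$ are trivial: take $r$ empty, $\sub$ the identity, and observe that the only obligation is the tautology $\BOT \IMPLIES \BOT$ or $p \IMPLIES p$.

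For $A = B \ast C$ with $\ast \in \set{\AND, \OR, \IMPLIES}$, proper annotation of $A$ forces $\ann{B}$ and $\ann{C}$ to be disjoint; apply the appropriate-polarity induction hypothesis to $B$ and to $C$ to get $r_B, \sub_B$ and $r_C, \sub_C$. Disjointness of annotations makes $\dom{r_B}, \dom{r_C}$ disjoint and $\dom{\sub_B} \sset \negvar{B}$, $\dom{\sub_C} \sset \negvar{C}$ disjoint, so $r \colonequals r_B \UNI r_C$ and $\sub \colonequals \sub_B \UNI \sub_C$ are well defined, with $\dom{\sub} \sset \negvar{A}$ and no new variables introduced. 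The obligation for $X = A$ then follows from those for $B$ and $C$ by pure intuitionistic propositional reasoning in the dual manner described above.

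The modal cases are the heart of the proof. If $A = \BOX[n] B$ with $n = 4k$ (positive), apply the induction hypothesis to $B$ to get, for each $i$, $\proves{\J}{\subst[\sub_B]{\real[r_i]{B}} \IMPLIES \real[r_B]{B}}$; by the Lifting Lemma~\ref{lem:lifting}(1), for the terms $s_i \colonequals \subst[\sub_B]{(r_i(4k))}$ there are $t_i$ with $\proves{\J}{\just{s_i}{\subst[\sub_B]{\real[r_i]{B}}} \IMPLIES \just{t_i}{\real[r_B]{B}}}$, so we set $r(4k) \colonequals \jsum{t_1}{t_2}$, keep $\sub \colonequals \sub_B$, and close with $\jsumax$ (the proof sum is exactly what merges two proofs of the same formula). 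If $A = \DIA[n] B$ with $n = 4m{+}2$ (positive) one internalises the inner obligation by Lemma~\ref{lem:nec} and applies $\jkax{2}$ to obtain $\proves{\J}{\sat{s_i}{\subst[\sub_B]{\real[r_i]{B}}} \IMPLIES \sat{\sappl{q_i}{s_i}}{\real[r_B]{B}}}$, then sets $r(4m{+}2) \colonequals \suni{(\sappl{q_1}{s_1})}{(\sappl{q_2}{s_2})}$ and closes with $\juniax$. If $n = 4k{+}1$ (negative box), then $r_1, r_2$ and the merged $r$ all send $4k{+}1$ to the fixed variable $\pvar[k]$; from the inner obligation $\real[r_B]{B} \IMPLIES \subst[\sub_B]{\real[r_i]{B}}$ one uses Lemma~\ref{lem:nec} and $\jkax{1}$ to get $\proves{\J}{\just{\pvar[k]}{\real[r_B]{B}} \IMPLIES \just{\jappl{q_i}{\pvar[k]}}{\subst[\sub_B]{\real[r_i]{B}}}}$, then \emph{extends} $\sub_B$ by $\pvar[k] \mapsto \jsum{(\jappl{q_1}{\pvar[k]})}{(\jappl{q_2}{\pvar[k]})}$ and closes with $\jsumax$, using the Substitution Lemma~\ref{lem:subst} to propagate this extension consistently through the inner derivations and the inner realisation on $\ann{B}$. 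The negative diamond $n = 4k{+}3$ is symmetric, with $\alpha_k$ in place of $\pvar[k]$, $\jkax{2}$ and $\suni{}{}$ in place of $\jkax{1}$ and $\jsum{}{}$, and the additional check that the satisfier self-referentiality restriction is preserved (the inputs $\real[r_i]{B}$ contain no $\svar[k]$, hence neither does $\real[r_B]{B}$).

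The main obstacle is this modal part, in two respects. First, the positive-diamond case genuinely requires the non-De-Morgan term operations: two local facts $\sat{\mu_1}{Y_1}, \sat{\mu_2}{Y_2}$ cannot be combined directly into a local fact about a common consequence in $\JL$, so the argument must route through $\jkax{2}$ — weakening the content of a \emph{single} satisfier along an internalised implication via $\sappl{}{}$ — and only then union two satisfiers of the \emph{same} formula via $\juniax$, exactly mirroring the use of $\jsumax$ on the box side. Second, at negative (odd-indexed) modalities the outer term is a variable that cannot survive merging, so $\sub$ must be extended there; one has to check that $\dom{\sub}$ still lies in $\negvar{A}$ and that $\subst{x}$ adds no new variable, and — the genuinely delicate bookkeeping — that this extension can be pushed through the already-built inner derivations even when the replaced variable occurs self-referentially inside $\real[r_i]{B}$, which is precisely where the Substitution Lemma is indispensable.
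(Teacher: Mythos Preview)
Your proposal is correct and follows essentially the same approach the paper invokes: Fitting's structural induction on the annotated formula, with $\jsum{}{}$ handling the merging at box positions and, as the paper explicitly remarks, $\suni{}{}$ playing the analogous role for satisfiers at diamond positions. You have in fact supplied more detail than the paper does (it only cites Fitting and notes the satisfier adaptation), including the delicate bookkeeping at negatively indexed modalities where $\sub$ must be extended and pushed back through the inner data via the Substitution Lemma.
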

\begin{proof}
	This is an adaptation of Fitting~\cite{fitting_realizations_2007,fitting_realizations_2009}.
    The original proof is in the setting of classical Logic of Proofs but does not make use of classical reasoning or the $\jtaxb$ and $\jfaxb$ axioms.
    The proof can be adapted to deal with satisfiers and has a similar treatment to proof terms, with $\suni{}{}$ playing for satisfiers the role played by $\jsum{}{}$ for proof terms.
    \qed
\end{proof}

These notions are similarly extended to annotated sequents.

\begin{corollary}\label{cor:nestedmerging}
	Let~$\J \in \set{\JIK, \JIKt, \JIKfour, \JISfour}$.
	Let~$\fseq$ be an annotated sequent and~$\lseq$ an annotated LHS sequent.
	Let~$r_1$ and~$r_2$ be realisation functions.
	Then there exists a realisation function~$r$ and a substitution~$\sub$ such that:
	\begin{enumerate}
		\item $\proves{\J}{\subst{\real[r_1]{\fseq}} \IMPLIES \subst{\real[r_2]{\fseq}} \IMPLIES \real{\fseq}}$ where~$\dom{\sub} \sset \negvar{\fseq}$ and $\subst{x}$ contains no new variables
		\item $\proves{\J}{\real{\lseq} \IMPLIES \subst{\real[r_i]{\lseq}}}$ where~$i \in \set{1, 2}$, $\dom{\sub} \sset \negvar{\fseq}$ and $\subst{x}$ contains no new variables.
	\end{enumerate}
\end{corollary}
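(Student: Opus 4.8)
The plan is to obtain this as a direct instance of Theorem~\ref{thm:merging}, read off through the formula interpretation of sequents. First I would recall that, for a realisation function $r$ with $\ann{\fseq} \sset \dom{r}$, one has $\real{\fseq} \colonequals \real{\fm{\fseq}}$, that the formula interpretation $\fm{\fseq}$ of a properly annotated full sequent lies in $\langann$, and that $\negvar{\fseq}$ is by construction $\negvar{\fm{\fseq}}$. Given $r_1$ and $r_2$, I would restrict them to $\ann{\fseq}$ -- which is contained in their domains, since $\real[r_1]{\fseq}$ and $\real[r_2]{\fseq}$ are assumed defined -- so that they count as realisations on the annotated formula $\fm{\fseq}$, and then apply Theorem~\ref{thm:merging} with $A \colonequals \fm{\fseq}$. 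This produces a realisation $r$ on $\fm{\fseq}$, equivalently a realisation function on $\fseq$, together with a substitution $\sub$ such that $\dom{\sub} \sset \negvar{\fm{\fseq}} = \negvar{\fseq}$ and $\subst{x}$ contains no new variables, and such that the two bullet conditions of Theorem~\ref{thm:merging} hold for every positive, resp.\ negative, subformula of $\fm{\fseq}$.

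Next I would instantiate those bullets at the two subformulas that matter. For part~1: the whole formula $\fm{\fseq}$ sits at the root of its own formula tree, reached without crossing the left branch of any implication, so it is a positive subformula of itself; instantiating Theorem~\ref{thm:merging}.1 at $X \colonequals \fm{\fseq}$ yields exactly $\proves{\J}{\subst{\real[r_1]{\fseq}} \IMPLIES \subst{\real[r_2]{\fseq}} \IMPLIES \real{\fseq}}$, with $\sub$ satisfying the stated side conditions. For part~2: reading the statement as it is used, i.e.\ with $\lseq$ an LHS sequent occurring inside $\fseq$, the Remark following the formula interpretation gives that $\fm{\lseq}$ is a negative subformula of $\fm{\fseq}$; instantiating Theorem~\ref{thm:merging}.2 at $X \colonequals \fm{\lseq}$ and using $\real{\lseq} = \real{\fm{\lseq}}$ yields $\proves{\J}{\real{\lseq} \IMPLIES \subst{\real[r_i]{\lseq}}}$ for $i \in \set{1, 2}$. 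Since both parts are extracted from a single application of Theorem~\ref{thm:merging} to $\fm{\fseq}$, the realisation $r$ and the substitution $\sub$ are common to both, which is what the corollary demands.

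I expect no real obstacle here beyond Theorem~\ref{thm:merging} itself: the whole argument is definitional bookkeeping. The points to verify are that a realisation on the formula $\fm{\fseq}$ is the same data as a realisation function on the sequent $\fseq$ (so that restriction to $\ann{\fseq}$ behaves as expected), that $\negvar{\fm{\fseq}}$ is what we have been calling $\negvar{\fseq}$, and that ``$\lseq$ occurs in $\fseq$'' is precisely the hypothesis needed by the Remark to make $\fm{\lseq}$ negative in $\fm{\fseq}$; the extension of the merging construction to satisfiers has already been handled inside Theorem~\ref{thm:merging}.
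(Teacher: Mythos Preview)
Your proposal is correct and matches the paper's intended argument: the paper gives no proof beyond the sentence ``These notions are similarly extended to annotated sequents'' preceding the corollary, and your elaboration---apply Theorem~\ref{thm:merging} to $A \colonequals \fm{\fseq}$, then instantiate clause~1 at the root (positive) and clause~2 at $\fm{\lseq}$ (negative, via the Remark)---is exactly that extension. Your explicit reading of part~2 as requiring $\lseq$ to occur inside $\fseq$ is the only way the statement coheres with Theorem~\ref{thm:merging} and with how the corollary is invoked later, so that clarification is appropriate rather than a deviation.
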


    \subsection{Proving realisation}

Given an instance of a~$\rle \in \set{\id, \andr, \orr, \impr, \boxr, \diar, \boxlbox, \tr, \fourr, \fourlbox, \updtrle}$, 
for some $k\le n$,
it will be of the shape:
    $$
    \vliiinf{\rle}{}
    {
        \seq[\lseq]{\fseq_0}
    }
    {
        \seq[\lseq]{\fseq_1}
    }
    {
        \cdots
    }
    {
        \seq[\lseq]{\fseq_k}
    }
    $$

As these rules behave similarly to classical rules, following~\cite{goetschi_realization_2012} we can show:

\begin{restatable}{lemma}{lemrealwhite}
\label{lem:realwhite}
    %
    For any realisations~$r_i$ on $\seq[\lseq]{\fseq_i}$ for $i\le k$, 
    there exist a realisation~$r$ on~$\seq[\lseq]{\fseq_0}$ 
    and substitutions $\sub_i$ 
    for $i\le k$, 
    such that
    $$
    \proves{\J}
    {
        \left(
        \subst[\sub_1]{\real[r_1]{\seq[\lseq]{\fseq_1}}}
        \AND
        \dots
        \AND
        \subst[\sub_k]{\real[r_k]{\seq[\lseq]{\fseq_k}}}
        \right)
        \IMPLIES
        \real{\seq[\lseq]{\fseq_0}}
    }
    $$
\end{restatable}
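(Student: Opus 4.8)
The statement asserts that the "white" (classically-behaving) rules of the nested calculus can be lifted to the justification side: given realisations on all premises, one can produce a realisation on the conclusion together with a derivable implication from the (substituted) premise realisations to the conclusion realisation. The plan is to proceed by a case analysis on the rule $\rle$, exactly following the template of Goetschi–Straßburger~\cite{goetschi_realization_2012}, and to reduce each case to three ingredients already available: (i) the Lifting Lemma~\ref{lem:lifting} to construct fresh proof terms or satisfiers witnessing a propositional implication under the appropriate modality; (ii) the Substitution Lemma~\ref{lem:subst} to propagate the substitutions $\sub_i$ through derivations; and (iii) Corollary~\ref{cor:nestedmerging} (realisation merging on sequents), needed in any rule with two premises — namely $\andr$ — to unify the two incoming realisations into one before combining them.

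\textbf{Strategy for a single rule.} For a unary rule $\rle$ with premise $\seq[\lseq]{\fseq_1}$ and conclusion $\seq[\lseq]{\fseq_0}$, I first observe that the formula interpretations $\fm{\seq[\lseq]{\fseq_0}}$ and $\fm{\seq[\lseq]{\fseq_1}}$ are related by a propositional (intuitionistic) implication, or by one that becomes propositional after pushing a modality outward. I then set $r$ on $\seq[\lseq]{\fseq_0}$ to agree with $r_1$ on all annotations already present in the premise, and define $r$ on the newly introduced annotation(s) of $\fseq_0$ — if any — using the term $t(\cdots)$ or satisfier $\mu(\cdots)$ supplied by Lemma~\ref{lem:lifting}. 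The required implication $\real[r_1]{\seq[\lseq]{\fseq_1}} \IMPLIES \real{\seq[\lseq]{\fseq_0}}$ then follows by combining the lifted term/satisfier with a $\J$-derivable propositional tautology; here $\sub_1$ is typically empty, or is exactly the substitution furnished by merging when two realisations on the shared context $\lseq$ must be reconciled. The context $\seq[\lseq]{-}$ is handled uniformly: since $\fm{\seq[\lseq]{-}}$ is built from $\fm{-}$ by finitely many steps of $\AND$-ing with an LHS formula, $\IMPLIES$-ing, and prefixing $\BOX$, one repeatedly applies the monotone closure of $\J$-derivable implications under these connectives (the $\BOX$-case again invoking Lifting), exactly as in the classical proof.

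\textbf{The rules, case by case.} The axiomatic-ish rule $\id$ is immediate: $\seq[\lseq]{\inp p, \out p}$ is already a propositional tautology under any realisation, so $r := r_0$ works with empty substitutions. The connective rules $\andr$ (two premises), $\orr[1]$, $\orr[2]$, $\impr$ reduce to intuitionistic propositional reasoning inside the context, with $\andr$ additionally requiring Corollary~\ref{cor:nestedmerging}(1) to merge the two premise realisations on the shared structure. The modal right rules $\boxr$, $\diar$, $\tr$, $\fourr$ and the left-absorption rules $\boxlbox$, $\fourlbox$, together with the macro rule $\updtrle$, are the substantive cases: here the premise and conclusion differ by a genuine box/diamond manipulation, and one invokes Lemma~\ref{lem:lifting} — clause (1) for $\boxr$, $\boxlbox$, $\fourlbox$, $\updtrle$ and clause (2) for $\diar$, $\fourr$ — to obtain the witnessing term. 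For $\updtrle$ specifically, the relevant propositional fact to be lifted is an instance of the pattern underlying $\jkax{4}$, so the new proof term is of the form $\jupdt{\mu}{t}$ as in the worked example; for $\fourr$ one uses the $\jfaxd$-style pattern $\sat{\mu}{\sat{\nu}{A}} \IMPLIES \sat{\nu}{A}$ combined with Lifting clause (2).

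\textbf{Main obstacle.} I expect the genuine difficulty to lie not in any single rule but in the \emph{bookkeeping of annotations and output pruning} — ensuring that the realisation $r$ produced on $\seq[\lseq]{\fseq_0}$ is a well-formed realisation on annotations (respecting the $\mod 4$ parity discipline and the satisfier self-referentiality restriction), and that when the context involves pruned output brackets (as in $\impr$ applied inside a box, or in $\updtrle$ where $\diabr[4k+3]{\cdot}$ is flipped to $\br[4n]{\cdot}$) the annotations on both sides still line up so that $\real{\cdot}$ is defined and the self-referentiality side-condition is not violated. Verifying that the witness supplied by Lifting can always be chosen so that $\svar[k]$ does not leak into a position forbidden by the satisfier restriction — and that the substitutions $\sub_i$ introduce "no new variables" in the sense of Corollary~\ref{cor:nestedmerging} — is the delicate point that distinguishes this intuitionistic, satisfier-equipped setting from the purely classical proof of Goetschi–Straßburger.
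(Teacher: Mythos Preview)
Your proposal is correct and follows essentially the same approach as the paper: a case analysis on the rule for the shallow (depth-$0$) case, using merging (Corollary~\ref{cor:nestedmerging}) for the branching $\andr$ and the Lifting Lemma for the modal rules, together with an induction on the depth of the output context $\seq[\lseq]{}$ (your ``monotone closure through $\IMPLIES$ and $\BOX$'') to propagate the implication through nested $\br[4n]{\cdot}$ brackets. One small slip: for $\id$ there are no premises ($k=0$), so there is no ``$r_0$'' to reuse---the paper instead builds $r$ from scratch using the reserved variables $\respvar[m],\ressvar[m]$ on positive annotations (Lemma~\ref{lem:idrule}); otherwise your identification of the key ingredients and of the annotation/self-referentiality bookkeeping as the delicate point matches the paper exactly.
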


\begin{proof}[Sketch]
We proceed by induction on $\seq[\lseq]{}$. 

The base case, $\lseq = \emptyseq$, relies on a bespoke proof for each specific $\rle$. These are similar to the classical case for $\id$, $\orr$, $\andr$ and $\impr$. 
Note the particular case of $\id$ where the empty conjunction reduces (see Lemma~\ref{lem:idrule} in Appendix)
and the particular case of $\andr$ whose branching requires the merging property, Theorem~\ref{thm:merging} (see Lemma~\ref{lem:andrrule} in Appendix). 
However, rules who involve a $\DIA$-formula (or $\diabr{\cdot}$-brackets) require a novel treatment as they introduce satisfier terms, rather than the classical treatment through duality with $\BOX$ introducing only proof terms (see Proposition~\ref{prop:shallowright} in Appendix).

The inductive cases, $\lseq = \inp A, \lseq'$ and $\lseq = \diabr{\lseq'}$, follow the shallow-to-deep approach~\cite{goetschi_realization_2012} but using intuitionistic reasoning and applying lifting, Lemma~\ref{lem:lifting}, for both proof and satisfier terms (see Lemma~\ref{lem:rightrules} in Appendix).
\qed
\end{proof}

Given an instance of a~$\rle \in \set{\botrule, \orl, \andl, \boxl, \dial, \constrimpl, \cont, \tl, \fourl}$, 
for some $k\le n$,
it will be of the shape:
    $$
    \vliiinf{\rle}{}
    {
        \seq[\fseq]{\lseq_0}
    }
    {
        \seq[\fseq]{\lseq_1}
    }
    {
        \cdots
    }
    {
        \seq[\fseq]{\lseq_k}
    }
    $$
This can be further refined depending of the precise position of the~$\lseq_i$ within context~$\seq{}$.
If the principal formula is among the LHS part of $\seq{}$, the $\rle$ can be rewritten:
    $$
    \vliiinf{\rle}{}
    {
        \seq[\lseq]{\lseq_0}, \oseq
    }
    {
        \seq[\lseq]{\lseq_1}, \oseq
    }
    {
        \cdots
    }
    {
        \seq[\lseq]{\lseq_k}, \oseq
    }
    $$
In this case, we get the following upside-down reading of the previous lemma:

\begin{restatable}{lemma}{lemrealblacklam}
\label{lem:realblacklam}
    %
    For any realisations~$r_i$ on $\seq[\lseq]{\lseq_i}$ for $i\le k$, 
    there exist a realisation~$r$ on~$\seq[\lseq]{\lseq_0}$ 
    and substitutions $\sub_i$ 
    for $i\le k$, 
    such that
    $$
    \proves{\J}
    {
        \real{\seq[\lseq]{\lseq_0}}
        \IMPLIES
        \left(
        \subst[\sub_1]{\real[r_1]{\seq[\lseq]{\lseq_1}}}
        \OR
        \dots
        \OR
        \subst[\sub_k]{\real[r_k]{\seq[\lseq]{\lseq_k}}}
        \right) 
    }
    $$
\end{restatable}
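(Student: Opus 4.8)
The plan is to prove this as the ``upside-down'' companion of Lemma~\ref{lem:realwhite}: there an output-context is traversed from the outside in and the realisations of the premises are combined \emph{conjunctively} into one for the conclusion; here the LHS-context $\seq[\lseq]{}$ (which is an $\Omega$-context) is traversed in the same way, but since an LHS sequent occupies a \emph{negative} position of the ambient formula the polarity flips, so that the conclusion's realisation must \emph{imply} the disjunction of the premises' realisations. Concretely, I would induct on $\seq[\lseq]{}$ exactly as in Lemma~\ref{lem:realwhite}, with base case $\seq[\lseq]{} = \lseq_1, \seq[]{}$ (the rule acts at the top level of the LHS) and inductive case $\seq[\lseq]{} = \lseq_1, \diabr{\lseq_2, \seq[\Omega']{}}$ (the context wraps a smaller $\Omega$-context inside a $\diabr{\cdot}$-bracket). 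This only handles the instances in the list that act purely within the LHS-context, namely $\orl$, $\andl$, $\dial$, $\cont$, the $\boxldia$ and $\fourldia$ modal rules, and trivially $\botrule$; the members $\boxlbox$, $\fourlbox$ and $\constrimpl$ touch an output bracket and belong to the companion lemmas for the other positional refinements.

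In the base case I would give a rule-specific argument. For $\botrule$ (where $k=0$, so the empty disjunction is $\BOT$) and $\andl$ the required implication is propositional once $r$ is taken to agree with $r_1$; for $\orl$ (two premises) and $\cont$ it becomes propositional after using Realisation merging (Corollary~\ref{cor:nestedmerging}, part~2) to reconcile the copies of the shared context and, for $\cont$, the two annotated copies of the contracted sequent; and $\dial$ collapses to an identity once $\sub_1$ identifies the freshly introduced $\diabr{\cdot}$-variable with the old one. The genuinely new base cases are $\boxldia$, $\tl$ and $\fourldia$: for $\boxldia$ I would inject the ``external'' fact $\just{\pvar[q]}{\real{A}}$ into the satisfier $\svar[p]$ of the adjacent $\diabr{\cdot}$-bracket by lifting the propositional theorem $\real{A} \IMPLIES (\real{\lseq_2} \IMPLIES \real{A} \AND \real{\lseq_2})$ through $\just{}{}$ (Lemma~\ref{lem:lifting}, part~1) and then applying $\jkax{2}$, absorbing the result with $\sub_1(\svar[p]) = \sappl{t(\pvar[q])}{\svar[p]}$; $\tl$ is the same idea using $\jtaxb$, and $\fourldia$ the same idea after first applying $\jfaxb$ --- each available exactly when the corresponding nested rule is.

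For the inductive case I would apply the induction hypothesis to the inner context $\seq[\Omega']{}$ with the realisations $r_i$ suitably restricted, obtaining some $r'$ and substitutions $\tau_i$ with $\proves{\J}{\real[r']{\seq[\Omega']{\lseq_0}} \IMPLIES (\subst[\tau_1]{\real[r_1]{\seq[\Omega']{\lseq_1}}} \OR \dots \OR \subst[\tau_k]{\real[r_k]{\seq[\Omega']{\lseq_k}}})}$, and also merge the copies of $\lseq_2$ (and later of $\lseq_1$) occurring in the premises. Assembled propositionally these yield $\proves{\J}{C \IMPLIES D}$, where $C$ is the realised contents of the conclusion's $\diabr{\cdot}$-bracket and $D$ the disjunction of the (suitably substituted) premise bracket-contents. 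The $\diabr{\cdot}$-bracket of an $\Omega$-context is a negative-position diamond, so its realising satisfier is a reserved variable $\svar[p]$ which the annotation convention fixes for both conclusion and premises and which cannot be altered by lifting; I would therefore close the step from the premise side. By internalised necessitation (Lemma~\ref{lem:nec}) on $C \IMPLIES D$ followed by $\jkax{2}$ one obtains $\proves{\J}{\sat{\svar[p]}{C} \IMPLIES \sat{\sappl{t}{\svar[p]}}{D}}$ for a ground term $t$; then distribute the satisfier over the disjunction with $\jkax{3}$; and let each $\sub_i$ extend $\tau_i$ and the merge substitutions by the clause $\svar[p] \mapsto \sappl{t}{\svar[p]}$, so that each $\sat{\sappl{t}{\svar[p]}}{(\cdots)}$ is exactly $\subst[\sub_i]{\sat{\svar[p]}{(\cdots)}}$; the outer $\real{\lseq_1}$ conjunct is dispatched by the merge on $\lseq_1$.

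The main obstacle is precisely this last step. In the classical, box-only development the bracket realises with a proof term one is free to build, whereas here the satisfier of a negative-position diamond is immutable, so the rewriting must be displaced onto the premise substitutions $\sub_i$; making that consistent requires checking that $\svar[p]$ occurs in each premise realisation only as that single diamond's satisfier (which follows from proper annotation together with the satisfier self-referentiality restriction), that $\sappl{t}{\svar[p]}$ introduces no new variables (hence the appeal to a \emph{ground} $t$), and that the merge and induction-hypothesis substitutions have supports disjoint enough to be amalgamated into one $\sub_i$ per premise. The modal base cases are the other place where the development leaves the classical template, being where satisfier terms are first introduced on the left.
\qed
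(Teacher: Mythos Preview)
Your proposal is correct and follows essentially the same route as the paper: induction on the LHS-context $\seq[\lseq]{}$, with rule-specific base cases (propositional for $\andl$, $\dial$; merging via Corollary~\ref{cor:nestedmerging} for $\orl$ and $\cont$; $\jkax{2}$ after lifting for $\boxldia$; $\jtaxb$ for $\tl$; $\jfaxb$ then $\jkax{2}$ for $\fourldia$) and an inductive step that pushes through a $\diabr{}$-bracket by lifting to a satisfier term and absorbing the resulting $\sappl{t}{\svar[p]}$ into the premise substitutions using the self-referentiality restriction. The paper organises the argument slightly differently --- it splits into three separate propositions (\ref{prop:LHSbot}, \ref{prop:shalloworl}, \ref{prop:shallowlrules}), each doing its own induction on the context, and in the inductive step invokes the Lifting Lemma directly rather than spelling out internalised necessitation plus $\jkax{2}$ --- but these are the same mechanism. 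Your uniform inductive step (distribute over the disjunction with $\jkax{3}$, and over the empty disjunction with $\jkax{5}$) is a mild tidying of what the paper does per rule; for the single-premise rules the merging of the outer $\lseq_1,\lseq_2$ is not actually needed, but including it does no harm.
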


\begin{proof}[Sketch]
    As before, this proceeds by induction on $\seq[\lseq]{}$.
    See Propositions~\ref{prop:LHSbot}, \ref{prop:shalloworl}, \ref{prop:shallowlrules} in Appendix for details.
    Again, in the case of the axiomatic $\botrule$, the empty disjunction reduces to $\BOT$, and in the case of the branching $\orl$, merging (Theorem~\ref{thm:merging}) is required to reconcile the realisations of the branches.
    \qed
\end{proof}

On the other hand, if the principal formula is among the RHS of $\seq{}$, $\rle$ presents itself as
    $$
    \vliiinf{\rle}{}
    {
        \lseq, \seq[\oseq]{\lseq_0}
    }
    {
        \lseq, \seq[\oseq]{\lseq_1}
    }
    {
        \cdots
    }
    {
        \lseq, \seq[\oseq]{\lseq_k}
    }
    $$
which lets us prove a similar lemma to the classical case and Lemma~\ref{lem:realwhite}:

\begin{restatable}{lemma}{lemrealblackpi}
\label{lem:realblackpi}
    %
    For any realisations~$r_i$ on $\seq[\lseq]{\fseq_i}$ for $i\le k$, 
    there exist a realisation~$r$ on~$\seq[\lseq]{\fseq_0}$ 
    and substitutions $\sub_i$ 
    for $i\le k$, 
    such that
    $$
    \proves{\J}
    {
        \left(
        \subst[\sub_1]{\real[r_1]{\seq[\oseq]{\lseq_1}}}
        \AND
        \dots
        \AND
        \subst[\sub_k]{\real[r_k]{\seq[\oseq]{\lseq_k}}}
        \right)
        \IMPLIES
        \real{\seq[\oseq]{\lseq_0}}
    }
    $$
\end{restatable}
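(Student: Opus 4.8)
The argument runs closely parallel to Lemma~\ref{lem:realwhite}: the rule~$\rle$ still behaves covariantly --- we want the realisations of the premises to \emph{imply} that of the conclusion --- the only novelty being that its principal $\inp{\cdot}$-formula is reached only after descending through the $\br{\cdot}$-bracket that forms the output part of $\seq{}$. Following the shallow-to-deep strategy of~\cite{goetschi_realization_2012}, I would argue by induction on the context $\seq[\oseq]{}$: the base case is when the principal formula sits directly under the innermost bracket, and the inductive cases strip off one further $\br[n]{\cdot}$- or $\diabr[n]{\cdot}$-layer (peeling an additional LHS-accumulator formula being trivial).

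\emph{Base case.} Here each $\real[r_i]{\seq[\oseq]{\lseq_i}}$ is headed by a box term $\just{r_i(n)}{\dots}$ (or by a satisfier term $\sat{r_i(n)}{\dots}$ if that innermost bracket is a $\diabr[n]{\cdot}$), and only the LHS-subsequent acted on by the rule varies between premises and conclusion. I would dispatch each $\rle$ by a bespoke argument on the unrealised subsequents, analogous to the shallow cases of Lemma~\ref{lem:realblacklam}: for $\andl$ and $\dial$ the two subsequents have the same formula interpretation and nothing further is needed; for $\botrule$ the subsequent interpretation is a theorem; for $\orl$ it is an $\IPL$-tautology obtained by case analysis; for $\tl$ it is an instance of $\taxb$ realised by $\jtaxb$, and for $\fourl$ (in both its $\br{\cdot}$- and $\diabr{\cdot}$-variants) an argument specific to $\faxb$ using the $\jbang{\cdot}$ operator and $\jfaxb$, as for classical $\sfour$-modalities; for $\boxlbox$ it reduces to $\jkax{1}$ and for $\boxldia$ to $\jkax{2}$ (producing a $\sappl{\cdot}{\cdot}$ on the satisfier side); and for $\constrimpl$ it is the implication-left step with its output pruning, handled --- as recorded in Remark~\ref{rem:impl} --- by the $\jupdt{\cdot}{\cdot}$ operator and axiom $\jkax{4}$. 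In every case the resulting implication between the subsequents is then pushed through the single enclosing bracket using the Lifting Lemma~\ref{lem:lifting} (part~1 together with $\jkax{1}$ for a $\br[n]{\cdot}$-bracket, part~2 together with $\jkax{2}$ for a $\diabr[n]{\cdot}$-bracket), producing the term, resp.\ satisfier, $r(n)$. When $\rle$ branches --- $\orl$, $\cont$, $\constrimpl$ --- I first apply realisation merging, Theorem~\ref{thm:merging} via Corollary~\ref{cor:nestedmerging}, to reconcile the two branch realisations up to a substitution whose domain lies in $\negvar{\fseq}$ and which introduces no new variables; the two enclosing-layer terms are then amalgamated into a single $r(n)$ via $\jsumax$ (resp.\ $\juniax$), and $\cont$ itself is essentially nothing but merging.

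\emph{Inductive step.} If $\seq[\oseq]{}$ is obtained from a strictly smaller context by wrapping it in one more layer, $\br[n]{\cdot}$ or $\diabr[n]{\cdot}$, the induction hypothesis provides a realisation and substitutions witnessing the analogous implication for the smaller context; I internalise that implication under a box via Lemma~\ref{lem:nec} and feed it to $\jkax{1}$ for a $\br{\cdot}$-layer, or to $\jkax{2}$ for a $\diabr{\cdot}$-layer, obtaining the implication one layer further out, once more using merging and $\jsumax$/$\juniax$ to collapse the several layer terms coming from the branches into one. The side conditions --- substitutions staying inside $\negvar{\fseq}$ and adding no fresh variables --- propagate mechanically from Theorem~\ref{thm:merging} and Lemma~\ref{lem:lifting}.

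\emph{Main obstacle.} I expect the genuinely new work to be concentrated in the cases introducing satisfiers on the antecedent side --- the $\diabr[n]{\cdot}$-layers met during the induction, together with the rules $\dial$, $\boxldia$ and $\fourldia$ --- since the classical Goetschi--Stra{\ss}burger construction only ever manufactures proof terms; handling them needs part~2 of the Lifting Lemma with $\jkax{2}$ and the satisfier operations $\suni{\cdot}{\cdot}$ and $\sappl{\cdot}{\cdot}$. A second delicate point is the shallow $\constrimpl$ case, where output pruning turns a $\br[4n]{\cdot}$ into a $\diabr[4k+3]{\cdot}$, i.e.\ a proof term into a fixed satisfier variable: one must keep in step the term being built for $r(n)$, the satisfier naming the pruned side, and the merging substitution, all while honouring the convention that negative-position modalities are realised by fixed variables. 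This is precisely where the $\jupdt{\cdot}{\cdot}$/$\jkax{4}$ apparatus does work that has no classical counterpart; the remaining cases are routine, if lengthy, adaptations.
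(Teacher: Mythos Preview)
There is a genuine gap in how you handle the $\diabr{\cdot}$-layers. An RHS context $\seq[\oseq]{}$ is only ever headed by $\br{\cdot}$; the $\diabr{\cdot}$-brackets you must traverse appear \emph{inside} some $\br[4n]{\seq[\fseq']{}}$, once the hole has landed in the LHS part of the inner full context $\seq[\fseq']{}$. At that point your induction hypothesis (stated for $\seq[\oseq]{}$-shaped contexts) no longer applies, and --- more importantly --- the polarity has flipped: the LHS part $\seq[\lseq']{\lseq_i}$ sits in negative position inside $\fm{\seq[\fseq']{\lseq_i}}$, so the implication you actually need there is the \emph{contravariant} one,
\[
\real{\seq[\lseq']{\lseq_0}} \IMP \subst[\sub_1]{\real[r_1]{\seq[\lseq']{\lseq_1}}} \OR \dots \OR \subst[\sub_k]{\real[r_k]{\seq[\lseq']{\lseq_k}}}.
\]
This is precisely Lemma~\ref{lem:realblacklam}, and it is not something you can manufacture by feeding a covariant $\bigwedge\IMPLIES$-statement to $\jkax{2}$: lifting part~2 places at most one hypothesis under a satisfier, so for branching rules your scheme would need $\sat{\mu_1}{A_1}\AND\sat{\mu_2}{A_2}\IMP\sat{\mu}{(A_1\AND A_2)}$, which the logic does not validate.

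The paper keeps the two phases separate. Its induction on $\seq[\oseq]{}$ only peels $\br{\cdot}$-layers (covariant, handled by lifting part~1 exactly as you describe). When, after removing a $\br{\cdot}$, the hole falls into the LHS of the inner sequent --- this is the dichotomy the sketch flags --- the proof invokes Lemma~\ref{lem:realblacklam} \emph{in full}: that lemma carries its own induction through the $\diabr{\cdot}$-layers, in the contravariant direction. The resulting disjunctive statement on the LHS part is then combined with merging on the shared RHS $\oseq'$ to recover the covariant $\bigwedge\IMPLIES\real{\cdot}$ shape for the whole inner sequent, before lifting back through the enclosing $\br{\cdot}$. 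Your base case borrows only the depth-$0$ (shallow) instances of Lemma~\ref{lem:realblacklam}; the $\diabr{\cdot}$-nesting between the last $\br{\cdot}$ and the principal formula must be handled by that lemma's inductive part, not by the inductive step of the present lemma.
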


\begin{proof}[Sketch]
Similarly as before, the proof proceeds by induction on $\seq[\oseq]{}$.
The base case, when $\seq[\oseq]{} = \out{A},\seq[]{}$ for some $A\in\langann$, depends on the logical content of each rule.
The inductive case, when $\oseq = \br{\seq{}}$, is where the complexity resides.
Indeed, it presents the same dichotomy as described above, where the principal formula can occur in the LHS or the RHS of $\seq{}$.
In the latter, the inductive hypothesis applies readily, while in the former, we need Lemma~\ref{lem:realblacklam} to be able to conclude.
See details in Propositions~\ref{lem:botrule}, \ref{lem:orlrule}, \ref{lem:leftrules}, \ref{lem:implrule} in Appendix.
\qed
\end{proof}

\vspace{1cm}

Finally, putting all the ingredients together,

\begin{proof}[Proof of Theorem~\ref{thm:nestreal}]
	We proceed by induction on the structure of the proof~$\pi$ of $\fseq$ in $\nL$.
	For the base case, $\fseq$ is a conclusion of the~$\id$ or $\botrule$ rule.
	%
	%
	For the inductive case, 
    $\fseq$ is the conclusion of the following rule~$\rle$
	$$
	\vlderivation
	{
		\vliiin{\rle}{}{\fseq}
		{
			\vlhtr{\pi_1}{\fseq_1}
		}
		{\vlhy{\dots}}
		{
			\vlhtr{\pi_i}{\fseq_i}
		}
	}
	$$
	for smaller proofs~$\pi_1, \dots, \pi_k$ and premisses~$\fseq_1, \dots, \fseq_k$ where $i \in \set{1..k}$.
	By the inductive hypothesis, we have realisations $r_i$ on $\fseq_i$ such that
	$$
	\proves{\JIL}{\real[r_i]{\fseq_i}}
	$$
	Applying Lemma~\ref{lem:andrrule}, \ref{lem:orlrule}, \ref{lem:rightrules}, \ref{lem:leftrules} or~\ref{lem:implrule} corresponding to the rule~$\rle$, there exists a realisation~$r$ on~$\fseq$ and substitutions~$\sub_1, \dots, \sub_i$ (which can be $\Idnty$ if not mentioned in the Lemmas) such that
	$$
	\proves{\JIL}
	{
		\subst[\sub_1]{\real[r_1]{\fseq_1}}
		\IMPLIES
		\dots
		\IMPLIES
		\subst[\sub_i]{\real[r_i]{\fseq_i}}
		\IMPLIES
		\real{\fseq}
	}
	$$
	By the Substitution Lemma~\ref{lem:subst}
	$$
		\proves{\JIL}{\subst[\sub_j]{\real[r_j]{\fseq_j}}}
	$$
	for each $j \in \set{1, \dots, i}$.
	Using modus ponens, we achieve
		$$
	\proves{\JIL}
	{
		\real{\fseq}
	}
	$$
    \qed
\end{proof}

We thus obtain the realisation theorem as a corollary:

\begin{proof}[Proof of Theorem~\ref{thm:realisation}]
	By Theorem~\ref{thm:soundcomp}, there is a nested sequent derivation of~$\out{A}$ and we apply Theorem~\ref{thm:nestreal} to construct a realisation.
    \qed
\end{proof}
	
	\section{Conclusion and Future Work}\label{sec:concl}
    We have presented a justification counterpart for intuitionistic modal logic, with a modified merging theorem for realisations and a method for a realisation theorem using intuitionistic nested sequents.

	%
	%
	%
	 \bibliographystyle{splncs04}
	 \bibliography{bibliography}

\providecommand{\noopsort}[1]{}
\begin{thebibliography}{10}
\providecommand{\url}[1]{\texttt{#1}}
\providecommand{\urlprefix}{URL }
\providecommand{\doi}[1]{https://doi.org/#1}

\bibitem{artemov_operational_1995}
Artemov, S.N.: Operational modal logic. Tech. Rep. MSI 95-29, Cornell
  University (Dec 1995),
  \url{https://sartemov.ws.gc.cuny.edu/files/2014/01/download-3.pdf}

\bibitem{artemov_explicit_2001}
Artemov, S.N.: Explicit provability and constructive semantics. Bulletin of
  Symbolic Logic  \textbf{7}(1),  1--36 (Mar 2001). \doi{10.2307/2687821}

\bibitem{artemov_unified_2002}
Artemov, S.N.: Unified {Semantics} for {Modality} and $\lambda$-terms via
  {Proof} {Polynomials}. In: Vermeulen, K., Copestake, A. (eds.) Algebras,
  {{Diagrams}} and {{Decisions}} in {{Language}}, {{Logic}} and {{Computation}}
  - {{Lecture Notes}}, pp. 89--119. Centre for the Study of Language \&
  Information, Stanford University (2002)

\bibitem{artemov_justification_2019}
Artemov, S.N., Fitting, M.: Justification {{Logic}}: {{Reasoning}} with
  {{Reasons}}. No.~216 in Cambridge {{Tracts}} in {{Mathematics}}, Cambridge
  University Press (May 2019). \doi{10.1017/9781108348034}

\bibitem{artemov_justification_2024}
Artemov, S.N., Fitting, M., Studer, T.: Justification {{Logic}}. In: Zalta,
  E.N. (ed.) The {{Stanford Encyclopedia}} of {{Philosophy}}. Metaphysics
  Research Lab, Stanford University, fall edn. (2024),
  \url{https://plato.stanford.edu/archives/fall2024/entries/logic-justification}

\bibitem{artemov_basic_2007}
Artemov, S.N., Iemhoff, R.: The basic intuitionistic logic of proofs. Journal
  of Symbolic Logic  \textbf{72}(2),  439--451 (Jun 2007).
  \doi{10.2178/jsl/1185803617}

\bibitem{artemov_logic_1999}
Artemov, S.N., Kazakhov, E., Shapiro, D.: On logic of knowledge with
  justifications. Tech. Rep. CFIS 99-12, Cornell University (1999)

\bibitem{bellin_extended_2001}
Bellin, G., {\noopsort{paiva}}{de Paiva}, V., Ritter, E.: Extended
  {{Curry-Howard Correspondence}} for a {{Basic Constructive Modal Logic}}. In:
  Proceedings for {{Methods}} for {{Modalities}} 2. Amsterdam, Netherlands (Nov
  2001)

\bibitem{bierman_intuitionistic_2000}
Bierman, G., {\noopsort{paiva}}{de Paiva}, V.: On an {{Intuitionistic Modal
  Logic}}. Studia Logica  \textbf{65}(3),  383--416 (Aug 2000).
  \doi{10.1023/A:1005291931660}

\bibitem{borg_realization_2015}
Borg, A., Kuznets, R.: Realization {{Theorems}} for {{Justification Logics}}:
  {{Full Modularity}}. In: {\noopsort{nivelle}}{de Nivelle}, H. (ed.) Automated
  {{Reasoning}} with {{Analytic Tableaux}} and {{Related Methods}}. Lecture
  {{Notes}} in {{Computer Science}}, vol.~9323, pp. 221--236. Springer,
  Wroclaw, Poland (Nov 2015). \doi{10.1007/978-3-319-24312-2_16}

\bibitem{brunnler_deep_2006}
Br{\"u}nnler, K.: Deep {{Sequent Systems}} for {{Modal Logic}}. In:
  Governatori, G., Hodkinson, I., Venema, Y. (eds.) Advances in {{Modal
  Logic}}. vol.~6, pp. 107--119. College Publications, Noosa, Queensland,
  Australia (2006), \url{https://www.aiml.net/volumes/volume6/Bruennler.ps}

\bibitem{brunnler_deep_2009}
Br{\"u}nnler, K.: Deep sequent systems for modal logic. Archive for
  Mathematical Logic  \textbf{48}(6),  551--577 (Jul 2009).
  \doi{10.1007/s00153-009-0137-3}

\bibitem{brunnler_syntactic_2010}
Br{\"u}nnler, K., Goetschi, R., Kuznets, R.: A {{Syntactic Realization
  Theorem}} for {{Justification Logics}}. In: Beklemishev, L., Goranko, V.,
  Shehtman, V. (eds.) Advances in {{Modal Logic}}. vol.~8, pp. 39--58. College
  Publications, Moscow, Russia (2010),
  \url{https://www.aiml.net/volumes/volume8/Bruennler-Goetschi-Kuznets.pdf}

\bibitem{bull_cut_1992}
Bull, R.A.: Cut elimination for propositional dynamic logic without~$*$.
  Mathematical Logic Quarterly  \textbf{38}(1),  85--100 (Jan 1992).
  \doi{10.1002/malq.19920380107}

\bibitem{das_intuitionistic_2023}
Das, A., Marin, S.: On {{Intuitionistic Diamonds}} (and {{Lack Thereof}}). In:
  Ramanayake, R., Urban, J. (eds.) Automated {{Reasoning}} with {{Analytic
  Tableaux}} and {{Related Methods}}. Lecture {{Notes}} in {{Computer
  Science}}, vol. 14278, pp. 283--301. Springer Nature Switzerland, Prague,
  Czech Republic (Sep 2023). \doi{10.1007/978-3-031-43513-3_16}

\bibitem{dashkov_arithmetical_2011}
Dashkov, E.: Arithmetical {{Completeness}} of the {{Intuitionistic Logic}} of
  {{Proofs}}. Journal of Logic and Computation  \textbf{21}(4),  665--682 (Aug
  2011). \doi{10.1093/logcom/exp041}

\bibitem{fischer_servi_modal_1977}
Fischer~Servi, G.: On {{Modal Logic}} with an {{Intuitionistic Base}}. Studia
  Logica  \textbf{36}(3),  141--149 (1977)

\bibitem{fischer_servi_semantics_1980}
Fischer~Servi, G.: Semantics for a class of intuitionistic modal calculi. In:
  Dalla~Chiara, M.L. (ed.) Italian {{Studies}} in the {{Philosophy}} of
  {{Science}}, Boston {{Studies}} in the {{Philosophy}} and {{History}} of
  {{Science}}, vol.~47, pp. 59--72. Springer Netherlands, Dordrecht (1980).
  \doi{10.1007/978-94-009-8937-5_5}

\bibitem{fischer_servi_axiomatizations_1984}
Fischer~Servi, G.: Axiomatizations for some intuitionistic modal logics.
  Rendiconti del Seminario Matematico Universit{\`a} e Politecnico di Torino
  \textbf{42}(3),  179--194 (1984)

\bibitem{fitting_logic_2005}
Fitting, M.: The logic of proofs, semantically. Annals of Pure and Applied
  Logic  \textbf{132}(1),  1--25 (Feb 2005). \doi{10.1016/j.apal.2004.04.009}

\bibitem{fitting_realizations_2007}
Fitting, M.: Realizations and~$\textsf{LP}$. In: Artemov, S.N., Nerode, A.
  (eds.) Logical {{Foundations}} of {{Computer Science}}. Lecture {{Notes}} in
  {{Computer Science}}, vol.~4514, pp. 212--223. Springer Berlin Heidelberg,
  New York, United States (2007). \doi{10.1007/978-3-540-72734-7_15}

\bibitem{fitting_realizations_2009}
Fitting, M.: Realizations and {{LP}}. Annals of Pure and Applied Logic
  \textbf{161}(3),  368--387 (Dec 2009). \doi{10.1016/j.apal.2009.07.010}

\bibitem{goetschi_realization_2012}
Goetschi, R., Kuznets, R.: Realization for justification logics via nested
  sequents: {{Modularity}} through embedding. Annals of Pure and Applied Logic
  \textbf{163}(9),  1271--1298 (Sep 2012). \doi{10.1016/j.apal.2012.02.002}

\bibitem{hill_analytic_2019}
Hill, B., Poggiolesi, F.: An {{Analytic Calculus}} for the {{Intuitionistic
  Logic}} of {{Proofs}}. Notre Dame Journal of Formal Logic  \textbf{60}(3)
  (Aug 2019). \doi{10.1215/00294527-2019-0008}

\bibitem{kashima_cut-free_1994}
Kashima, R.: Cut-{{Free Sequent Calculi}} for {{Some Tense Logics}}. Studia
  Logica  \textbf{53}(1),  119--135 (Feb 1994). \doi{10.1007/BF01053026}

\bibitem{kuznets_complexity_2008}
Kuznets, R.: Complexity {{Issues}} in {{Justification Logic}}. Ph.D. thesis,
  City University of New York (May 2008),
  \url{https://academicworks.cuny.edu/gc_etds/1439}

\bibitem{kuznets_justification_2021}
Kuznets, R., Marin, S., Stra{\ss}burger, L.: Justification {{Logic}} for
  {{Constructive Modal Logic}}. Journal of Applied Logics --- IfCoLog Journal
  of Logics and their Applications  \textbf{8}(8),  2313--2332 (Sep 2021)

\bibitem{kuznets_logics_2019}
Kuznets, R., Studer, T.: Logics of {{Proofs}} and {{Justifications}}, Studies
  in {{Logic}}, vol.~80. College Publications (2019)

\bibitem{marin_label-free_2014}
Marin, S., Stra{\ss}burger, L.: Label-free {{Modular Systems}} for
  {{Classical}} and {{Intuitionistic Modal Logics}}. In: Gor{\'e}, R., Kooi,
  B., Kurucz, A. (eds.) Advances in {{Modal Logic}}. vol.~10, pp. 387--406.
  College Publications, Groningen, Netherlands (2014),
  \url{http://www.aiml.net/volumes/volume10/Marin-Strassburger.pdf}

\bibitem{marti_intutionistic_2016}
Marti, M., Studer, T.: Intutionistic {{Modal Logic Made Explicit}}. IFCoLog
  Journal of Logic and its Applications  \textbf{3}(5),  877--901 (Dec 2016)

\bibitem{marti_internalized_2018}
Marti, M., Studer, T.: The {{Internalized Disjunction Property}} for
  {{Intuitionistic Justification Logic}}. In: Bezhanishvili, G., D'Agostino,
  G., Metcalfe, G., Studer, T. (eds.) Advances in {{Modal Logic}}. vol.~12, pp.
  511--529. College Publications, Bern, Switzerland (2018),
  \url{https://www.aiml.net/volumes/volume12/Marti-Studer.pdf}

\bibitem{plotkin_framework_1986}
Plotkin, G., Stirling, C.: A framework for {{Intuitionistic Modal Logics}}:
  {{Extended Abstract}}. In: Halpern, J.Y. (ed.) Proceedings of the 1986
  {{Conference}} on {{Theoretical Aspects}} of {{Reasoning}} about
  {{Knowledge}}. pp. 399--406. {{TARK}} '86, Morgan Kaufmann Publishers Inc.,
  Monterey, California, United States (Mar 1986)

\bibitem{poggiolesi_method_2009}
Poggiolesi, F.: The {{Method}} of {{Tree-Hypersequents}} for {{Modal
  Propositional Logic}}. In: Makinson, D., Malinowski, J., Wansing, H. (eds.)
  Towards {{Mathematical Philosophy}}, Trends in {{Logic}}, vol.~28, pp.
  31--51. Springer Netherlands, Dordrecht (2009).
  \doi{10.1007/978-1-4020-9084-4_3}

\bibitem{poggiolesi_gentzen_2011}
Poggiolesi, F.: Gentzen {{Calculi}} for {{Modal Propositional Logic}}. No.~32
  in Trends in {{Logic}}, Springer Netherlands, Dordrecht (2011).
  \doi{10.1007/978-90-481-9670-8}

\bibitem{prawitz_natural_1965}
Prawitz, D.: Natural {{Deduction}}: {{A Proof-Theoretical Study}}. Ph.D.
  thesis, Stockholm University (1965)

\bibitem{steren_intuitionistic_2014}
Steren, G., Bonelli, E.: Intuitionistic {{Hypothetical Logic}} of {{Proofs}}.
  In: {\noopsort{paiva}}{de Paiva}, V., Benevides, M., Nigam, V., Pimentel, E.
  (eds.) Proceedings of the 6th {{Workshop}} on {{Intuitionistic Modal Logic}}
  and {{Applications}} ({{IMLA}} 2013). Electronic {{Notes}} in {{Theoretical
  Computer Science}}, vol.~300, pp. 89--103. Elsevier, Rio de Janeiro, Brazil
  (Jan 2014). \doi{10.1016/j.entcs.2013.12.013}

\bibitem{strasburger_cut_2013}
Stra{\ss}burger, L.: Cut {{Elimination}} in {{Nested Sequents}} for
  {{Intuitionistic Modal Logics}}. In: Pfenning, F. (ed.) Foundations of
  {{Software Science}} and {{Computation Structures}}. Lecture {{Notes}} in
  {{Computer Science}}, vol.~7794, pp. 209--224. Springer Berlin Heidelberg,
  Rome, Italy (2013). \doi{10.1007/978-3-642-37075-5_14}

\end{thebibliography}

    \appendix
    \section{Appendix}
    \begin{figure}[t]
	\fbox{
    \begin{minipage}{.95\textwidth}
	\centering
	$\vlinf{\botrule}{}{\seq{\inp{\BOT}}}{} \qquad \vlinf{\id}{}{\seq{\inp{p}, \out{p}}}{}$
	
	$\vlinf{\cont}{}{\seq{\lseq_3}}{\seq{\lseq_1, \lseq_2}}
    \qquad
    \vlinf{\andl}{}{\seq{\inp{A \AND B}}}{\seq{\inp{A}, \inp{B}}} \qquad \vliinf{\andr}{}{\seq{\out{A \AND B}}}{\seq{\out{A}}}{\seq{\out{B}}}$
	
	$\vliinf{\orl}{}{\seq{\inp{A \OR B}}}{\seq{\inp{A}}}{\seq{\inp{B}}} \qquad \vlinf{\orr}{}{\seq{\out{A \OR B}}}{\seq{\out{A}}} \qquad \vlinf{\orr}{}{\seq{\out{A \OR B}}}{\seq{\out{B}}}$
	
	$\vliinf{\impl}{}{\seq{\inp{A \IMPLIES B}}}{\seqprune{\out{A}}}{\seq{\inp{B}}} \qquad \vlinf{\impr}{}{\seq{\out{A \IMPLIES B}}}{\seq{\inp{A}, \out{B}}}$
	
	$\vlinf{\boxlbox}{}{\seq{\inp{\BOX[4j+1] A}, \br[4n]{\oseq}}}{\seq{\br[4n]{\inp{A}, \oseq}}} \quad \vlinf{\boxldia}{}{\seq{\inp{\BOX[4j+1] A}, \diabr[4k+3]{\lseq}}}{\seq{\diabr[4k+3]{\inp{A}, \lseq}}} \qquad \vlinf{\boxr}{}{\seq{\out{\BOX[4n] A}}}{\seq{\br[4n]{\out{A}}}}$
	
	$\vlinf{\dial}{}{\seq{\inp{\DIA[4k+3] A}}}{\seq{\diabr[4k+3]{\inp{A}}}} \qquad \vlinf{\diar}{}{\seq{\out{\DIA[4j+2] A}, \diabr[4k+3]{\lseq}}}{\seq{\br[4n]{\out{A}, \lseq}}}$
    \end{minipage}
    }
	\caption{Annotated system $\nIK$}
	\label{fig:nIKann}
	
	\bigskip
	\fbox{
    \begin{minipage}{.95\textwidth}
	\centering
	$\vlinf{\tl}{}{\seq{\inp{\BOX[4i+1] A}}}{\seq{\inp{A}}} \qquad \vlinf{\tr}{}{\seq{\out{\DIA[4j+2] A}}}{\seq{\out{A}}}$
	
	$\vlinf{\fourlbox}{}{\seq{\inp{\BOX[4j+1] A}, \br[4n]{\oseq}}}{\seq{\br[4n]{\inp{\BOX[4j+1] A}, \oseq}}} \quad \vlinf{\fourldia}{}{\seq{\inp{\BOX[4k+1] A}, \diabr[4i+3]{\lseq}}}{\seq{\diabr[4i+3]{\inp{\BOX[4k+1] A}, \lseq}}} \qquad \vlinf{\fourr}{}{\seq{\out{\DIA[m] A}, \diabr[4k+3]{\lseq}}}{\seq{\br[4n]{\out{\DIA[m] A}, \lseq}}}$
    \end{minipage}
    }
	\caption{Annotated modal rules for $\tax$ and $\fax$ 
    }
	\label{fig:nIS4ann}
\end{figure}

\lemnec*
\begin{proof}
    We proceed by induction on the length of the Hilbert proof of~$\proves{\J}{A}$.

    For the base case, $A$ is either an axiom instance of $\J$ or is an instance of the constant axiom necessitation rule i.e. $A = \just{\pconst[n]}{\just{\dots}{\just{\pconst[1]}{B}}}$ for $B$ an axiom instance of $\J$ and $\pconst[1], \dots, \pconst[n] \in \prfconst$.
    In either case, we can use the constant axiom necessitation rule to deduce
    $$\proves{\J}{\just{\pconst[n+1]}{A}}$$
    for some $\pconst[n+1] \in \prfconst$.
    Set the ground term $t \colonequals \pconst[n+1]$.

    For the inductive case, $A$ is a conclusion of a modus ponens inference of $B$ and $B \IMPLIES A$.
    By the inductive hypothesis, there exists ground terms $t_1$ and $t_2$ such that $\proves{\J}{\just{t_1}{B}}$ and $\proves{\J}{\just{t_2}{(B \IMPLIES A)}}$.
    Using the $\jkax{1}$ axiom
    $$
        \proves{\J}
        {
            \just{t_2}{(B \IMPLIES A)}
            \IMPLIES
            (
                \just{t_1}{B}
                \IMPLIES
                \just{\jappl{t_2}{t_1}}{A}
            )
        }
    $$
    we deduce $\proves{\J}{\just{\jappl{t_2}{t_1}}{A}}$ and set $t = \jappl{t_2}{t_1}$ which is a ground term.
\end{proof}
    

\begin{restatable}[Some facts about realisations and substitutions]{lemma}{facts}
\label{lem:facts}
	\begin{enumerate}
		\item Given a realisation function on $A$ and substitution $\sub$, $\sub \comp r$ is a realisation on~$A \in \langann$ iff $\dom{\sub} \INT \negvar{A} = \EMPTY$, and $\subst{\real{A}} = \real[\sub \comp r]{A}$
		\item If $\dom{r} \INT \dom{r'} \sset \set[2n+1]{n \in \nat}$ then $r \UNI r'$ is a realisation function.
		\item If $A$ and $B$ are uniquely annotated formulas, and $r$ and $r'$ are realisation functions on $A$ and $B$ respectively, $r \UNI r'$ is a realisation function on $A \rhd B$ where $\rhd \in \set{\AND, \OR, \IMPLIES}$. 
		\item If $A$ is an annotated formula and $r$ is a realisation function on $A$, then $r \UNI \set{(4n, t)}$ is a realisation function on $\BOX[4n] A$.
		\item If $A$ is an annotated formula and $r$ is a realisation function on $A$, then $r \UNI \set{(4n+1, \pvar[n])}$ is a realisation function on $\BOX[4n+1] A$.
		\item If $A$ is an annotated formula and $r$ is a realisation function on $A$, then $r \UNI \set{(4k+2, \mu)}$ is a realisation function on $\DIA[4k+2] A$.
		\item If $A$ is an annotated formula and $r$ is a realisation function on $A$ and $\svar[k]$ does not occur in $\real{A}$, then $r \UNI \set{(4k+3, \svar[k])}$ is a realisation function on~$\DIA[4k+2] A$.
	\end{enumerate}
\end{restatable}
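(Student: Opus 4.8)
The plan is to unwind, in each of the seven items, the notion of a \emph{realisation function on $A$}: a realisation on annotations $r$ with $\dom{r} = \ann{A}$ for which $\real[r]{A}$ is defined, the last condition meaning that the satisfier self-referentiality restriction holds at every $\DIA[4n+3]$-subformula of $A$. Items~2--7 then reduce to mechanical checks against the defining clauses of $\real[r]{(\cdot)}$. For item~2, the residue conventions force $r(4n+1) = \pvar[n] = r'(4n+1)$ and $r(4n+3) = \svar[n] = r'(4n+3)$, so any two realisations automatically agree on every odd index; hence $\dom{r} \INT \dom{r'} \sset \{2n+1 \mid n \in \nat\}$ already makes $r \UNI r'$ single-valued, and each of its values, inherited from $r$ or $r'$, still obeys the conventions. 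Item~3 follows since unique annotation of $A \rhd B$ forces $\ann{A}$ and $\ann{B}$ disjoint, so item~2 applies to $r, r'$; moreover $r \UNI r'$ restricts to $r$ on $\ann{A}$ and to $r'$ on $\ann{B}$, so the connective clause gives $\real[r \UNI r']{A \rhd B} = \real[r]{A} \rhd \real[r']{B}$, which is defined because $\real[r]{A}$ and $\real[r']{B}$ are. Items~4--7 are the one uniform ``adjoin a fresh index'' step: for $m \notin \ann{A}$, the set $r \UNI \{(m, u)\}$ is still a partial function, its domain is exactly $\ann{\BOX[m]A}$ or $\ann{\DIA[m]A}$, and it still obeys the conventions provided $u$ matches the residue class of $m$ ($u \in \prfterms$ if $m = 4n$, $u = \pvar[n]$ if $m = 4n+1$, $u \in \satterms$ if $m = 4n+2$, $u = \svar[n]$ if $m = 4n+3$); as the new function agrees with $r$ on $\ann{A}$, every self-referentiality condition internal to $A$ is unchanged, and the only new one --- appearing solely in the $m = 4n+3$ case, at the freshly added diamond --- is precisely the extra hypothesis $\svar[n] \notin \real[r]{A}$ carried by that item.

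Only item~1 calls for an induction. I would prove, by structural induction on $A$, simultaneously that ($\dagger$) $\sub \comp r$ obeys the residue conventions on $\ann{A}$ iff $\dom{\sub} \INT \negvar{A} = \EMPTY$, and that ($\ddagger$) under this hypothesis $\subst{\real[r]{A}} = \real[\sub \comp r]{A}$ with the right-hand side well-defined. For ($\dagger$): on an index $4n$ or $4n+2$ the value $\subst{r(4n)} \in \prfterms$, resp. $\subst{r(4n+2)} \in \satterms$, is admissible unconditionally; on an index $4n+1$ --- present iff $\BOX[4n+1]$ occurs in $A$ --- the convention demands $\subst{\pvar[n]} = \pvar[n]$, i.e. $\pvar[n] \notin \dom{\sub}$, and symmetrically $4n+3$ demands $\svar[n] \notin \dom{\sub}$; since $\negvar{A}$ is by definition exactly $\{\pvar[n] \mid \BOX[4n+1]\ \text{in}\ A\} \UNI \{\svar[n] \mid \DIA[4n+3]\ \text{in}\ A\}$, this is the claimed equivalence. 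For ($\ddagger$) the atomic and connective cases are immediate, and the modal cases use that $\sub$ acts homomorphically: $\subst{\just{r(n)}{\real[r]{B}}} = \just{\subst{r(n)}}{\subst{\real[r]{B}}} = \just{(\sub \comp r)(n)}{\real[\sub \comp r]{B}}$ by the induction hypothesis, and likewise for satisfiers; well-definedness of the right-hand side is inherited from that of $\real[r]{A}$, once one checks that applying $\sub$ does not reintroduce a forbidden satisfier variable.

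That last check is the single genuine subtlety: in a $\DIA[4n+3]$-subformula one needs $\svar[n] \notin \subst{\real[r]{B}}$ given $\svar[n] \notin \real[r]{B}$, and although $\sub$ fixes $\svar[n]$ (it belongs to $\negvar{A}$), one must also know that $\sub$ does not \emph{create} $\svar[n]$ through another variable occurring in $\real[r]{B}$ --- which, in every intended use of item~1, is guaranteed by the standing side condition that $\subst{x}$ introduces no new variables (as for the substitutions produced by Theorem~\ref{thm:merging}). Apart from keeping these three defining conditions of ``realisation on $A$'' coordinated across $\sub$, the whole lemma is routine bookkeeping against the definitions of Section~\ref{sec:logics}.
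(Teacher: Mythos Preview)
Your proposal is correct. The paper states Lemma~\ref{lem:facts} without proof, treating it as routine bookkeeping --- which matches your own closing assessment --- so there is no paper proof to compare against. Your careful isolation of the one genuine subtlety (that preserving the satisfier self-referentiality restriction under $\sub$ in item~1 tacitly relies on the ``no new variables'' side condition carried by the substitutions of Theorem~\ref{thm:merging}) actually goes beyond what the paper makes explicit, and is a fair reading of how the lemma is deployed in the subsequent proofs.
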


\begin{definition}
    A shallow instance of a rule is when $\depth{\seq{}} = 0$.
\end{definition}
\begin{example}
    The shallow instance of the $\boxlbox$ rule is 
    $$\vlinf{\boxlbox}{}{{\inp{\BOX A}, \br{\oseq}}}{{\br{\inp{A}, \oseq}}}$$
\end{example}

\subsection{Right rules}
\begin{restatable}[id-rule]{lemma}{lemidrule}
\label{lem:idrule}
	Let~$\J \in \set{\JIK, \JIKt, \JIKfour, \JISfour}$.
	%
	%
        Let $\seq{\inp{p}, \out{p}}$ be the conclusion of the $\id$ rule.
	Then there exists a realisation function~$r$ on~$\seq{\inp{p}, \out{p}}$ such that
	$$\proves{\J}{\real{\seq{\inp{p}, \out{p}}}}$$
\end{restatable}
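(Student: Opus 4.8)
The plan is to analyse the shape of an arbitrary instance of the $\id$ rule, which by Fig.~\ref{fig:nIKann} has conclusion $\seq[\lseq]{\inp{p},\out{p}}$ where the hole sits inside an output context $\seq[\lseq]{\cdot}$. First I would recall that, since there are no premisses, the statement to prove is simply that some realisation $r$ on $\seq[\lseq]{\inp{p},\out{p}}$ makes $\real{\seq[\lseq]{\inp{p},\out{p}}}$ a theorem of $\J$; following the convention in Lemma~\ref{lem:realwhite} this is the degenerate $k=0$ case, where the antecedent is the empty conjunction. So the real content is: exhibit $r$ and show $\proves{\J}{\real{\fm{\seq[\lseq]{\inp{p},\out{p}}}}}$.

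The approach is induction on the output context $\seq[\lseq]{\cdot}$, exactly as flagged in the sketch of Lemma~\ref{lem:realwhite}. For the base case $\lseq = \emptyseq$, the context is $\seq[\emptyseq]{\cdot} = \emptyseq,\seq[]{\cdot}$ up to the LHS-context part, so the sequent is essentially $\lseq', \inp p, \out p$ for some LHS sequent $\lseq'$ built from the context, whose formula interpretation is $\fm{\lseq'} \AND p \IMPLIES p$ (using associativity/commutativity of $\AND$). Choosing the basic realisation $r$ from the start of Section~\ref{sec:real} — sending $4m\mapsto \respvar[m]$, $4m+1\mapsto \pvar[m]$, $4m+2\mapsto\ressvar[m]$, $4m+3\mapsto\svar[m]$ on the annotations occurring in the sequent — the realised formula is a propositional instance $C \AND p \IMPLIES p$, which is an $\IPL$-theorem and hence a $\J$-theorem. (Here I would note, as the lemma's proof statement hints, that if the context contributes nothing the conjunction collapses to just $p \IMPLIES p$; either way it is propositionally valid.) For the inductive step the context grows either as $\seq[\lseq]{\cdot} = \lseq_1, \seq[]{\cdot}$ — handled by the same propositional observation as the base case, absorbing $\fm{\lseq_1}$ into the left conjunct — or as $\seq[\lseq]{\cdot} = \lseq_1, \br[4n]{\seq[\lseq]{\cdot}'}$, in which case by the inductive hypothesis there is a realisation $r'$ on the inner sequent with $\proves{\J}{\real[r']{\seq[\lseq]{\cdot}'\{\inp p,\out p\}}}$; then by internalised necessitation (Lemma~\ref{lem:nec}) there is a ground term $t$ with $\proves{\J}{\just{t}{\real[r']{\cdots}}}$, I set $r \colonequals r' \UNI \set{(4n,t)}$ — which is a realisation function on the enlarged sequent by Lemma~\ref{lem:facts}(4) — and observe $\real{\br[4n]{\cdots}} = \just{t}{\real[r']{\cdots}}$, finally prefixing $\fm{\lseq_1}$ on the left and weakening propositionally to recover $\fm{\lseq_1} \AND \BOX[4n](\cdots) \IMPLIES \TOP$ or rather to keep the $\out p$ exposed appropriately; I would phrase this as: the whole realised sequent is $\fm{\lseq_1} \AND \just{t}{(\dots)} \IMPLIES (\dots)$ and the needed implication follows from the inductive theorem together with $\jkax{1}$/propositional lifting, or more directly from Lemma~\ref{lem:nec} applied to the inner theorem.

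The main obstacle I anticipate is bookkeeping rather than mathematics: making sure the chosen $r$ is genuinely a realisation \emph{on} $\seq[\lseq]{\inp p,\out p\}}$ — i.e. that its domain is exactly the annotation set of that sequent and that the satisfier self-referentiality restriction is respected whenever a $\diabr[4k+3]{\cdot}$ bracket appears in the context (one must check $\svar[k]$ does not occur in the realisation of the material inside that bracket, which holds because the basic realisation introduces only the reserved/indexed variables matching the annotations) — and that the propositional collapses (empty conjunction, associativity of $\AND$, the interaction of $\IMPLIES$ in $\fm{\lseq,\oseq}$) are stated cleanly. None of these require real work, but they are the part where the proof must be careful; I would dispatch them by invoking Lemma~\ref{lem:facts} for the realisation-function closure conditions and by a one-line remark that the realised sequent is, in each case, a substitution instance of the propositional tautology $C \AND p \IMPLIES p$ possibly under a $\BOX[4n]$, so $\IPL$ plus $\jkax{1}$ and Lemma~\ref{lem:nec} suffice.
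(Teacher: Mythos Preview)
Your plan is essentially the paper's proof: induction on the depth of the output context, with the base case realising $(\real{\lseq}\AND p)\IMPLIES p$ via the basic realisation and the inductive case handled by applying internalised necessitation (Lemma~\ref{lem:nec}) to the inner theorem and then weakening propositionally. Two small clean-ups: in the inductive step the realised sequent is $\real{\lseq_1}\IMPLIES \just{t}{\real[r']{\cdots}}$ (not $\fm{\lseq_1}\AND\just{t}{(\dots)}\IMPLIES(\dots)$, since $\br[4n]{\cdot}$ is the RHS of the output context), and you must extend $r$ not only by $(4n,t)$ but also by the basic realisation on the annotations of $\lseq_1$; with those fixes your argument coincides with the paper's.
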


\begin{proof}
	We proceed by induction on~$\depth{\seq{\ }}$.
	
	For the base case~$\depth{\seq{\ }}=0$, we have $\seq{\inp{p}, \out{p} } = \lseq, \inp{p}, \out{p}$ for some annotated LHS sequent~$\lseq$.
	Define
	\begin{multline*}
		r \colonequals \set[(4m, {\respvar[m]})]{\text{$4m$ is an annotation in~$\lseq$}} 
		\\ \UNI \set[(4m+1, {\pvar[m]})]{\text{$4m+1$ is an annotation in~$\lseq$}} 
		\\ \UNI \set[(4m+2, {\ressvar[m]})]{\text{$4m+2$ is an annotation in~$\lseq$}} 
		\\ \UNI \set[(4m+3, {\svar[m]})]{\text{$4m+3$ is an annotation in~$\lseq$}}
	\end{multline*}
	so negative annotations are realised with variables as standard and the positive annotations are realised with the reserved variables.
	It follows as a theorem of~$\IPL$ that
	$$
	\proves{\J}
	{
		\real{\seq{\inp{p}, \out{p}}}
		=
		(\real{\lseq} \AND p)
		\IMPLIES
		p
	}
	$$
	
	For the inductive case~$\seq{\inp{p}, \out{p} } = \lseq, \br[4n]{\seq[\fseq']{\inp{p}, \out{p} }}$ for some annotated LHS sequent~$\lseq$ and sequent~$\seq[\fseq']{\inp{p}, \out{p} }$.
	By the inductive hypothesis, there exists a realisation~$r'$ on~$\seq[\fseq']{\inp{p}, \out{p} }$ such that
	$$
	\proves{\J}
	{
		\real[r']{\seq[\fseq']{\inp{p}, \out{p} }}
	}
	$$
	
	By Lemma~\ref{lem:nec}, there exists a ground proof term~$t$ such that
	$$
	\proves{\J}
	{
		\just{t}{\real[r']{\seq[\fseq']{\inp{p}, \out{p} }}}
	}
	$$
	Similarly, define
	\begin{multline*}
		r \colonequals r' \UNI \set{(4n, t)} \UNI \set[(4m, {\respvar[m]})]{\text{$4m$ is an annotation in~$\lseq$}} 
		\\ \UNI \set[(4m+1, {\pvar[m]})]{\text{$4m+1$ is an annotation in~$\lseq$}} 
		\\ \UNI \set[(4m+2, {\ressvar[m]})]{\text{$4m+2$ is an annotation in~$\lseq$}} 
		\\ \UNI \set[(4m+3, {\svar[m]})]{\text{$4m+3$ is an annotation in~$\lseq$}}
	\end{multline*}
	and by propositional reasoning we have
	$$
	\proves{\J}
	{
		\real{\seq{\inp{p}, \out{p} }} =
	(	\real{\lseq} \IMPLIES \just{t}{\real[r']{\seq[\fseq']{\inp{p}, \out{p} }}})
	}
	$$
    \qed
\end{proof}

\begin{restatable}[$\AND$-right rule]{lemma}{lemandrule}
\label{lem:andrrule}
	Let~$\J \in \set{\JIK, \JIKt, \JIKfour, \JISfour}$.
	Let~$A, B \in \langann$.
	Let~$\seq{\out{A}}$ and~$\seq{\out{B}}$ be the premisses and~$\seq{\out{A \AND B}}$ be the conclusion sequents of an annotated instance of the~$\andr$ rule.
	Then, given realisation functions~$r_1$ on~$\seq{\out{A}}$ and~$r_2$ on~$\seq{\out{B}}$, there exists a realisation function~$r$ and a substitution~$\sub$ 
    %
    such that
	$$\proves{\J}{\subst{\real[r_1]{\seq{\out{A}}}} \IMPLIES \subst{\real[r_2]{\seq{\out{B}}}} \IMPLIES \real{\seq{\out{A \AND B}}}}$$
\end{restatable}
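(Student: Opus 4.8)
The plan is to induct on the structure of the output context $\seq{}$, following the shallow-to-deep pattern of Lemma~\ref{lem:realwhite}. The feature that sets $\andr$ apart from the other right rules treated there is that its two premisses $\seq{\out A}$ and $\seq{\out B}$ agree everywhere except at the hole of $\seq{}$, so along the whole context path the realisations $r_1$ and $r_2$ have to be forcibly reconciled; the extra ingredient is therefore a call to the Realisation merging theorem (Theorem~\ref{thm:merging}), in the sequent form of Corollary~\ref{cor:nestedmerging}.

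In the base case $\seq{} = \lseq_0, \seq[]{}$ we have $\real{\seq{\out{A \AND B}}} = \real{\lseq_0} \IMPLIES (\real{A} \AND \real{B})$, while the premiss realisations are $\real[r_1]{\lseq_0} \IMPLIES \real[r_1]{A}$ and $\real[r_2]{\lseq_0} \IMPLIES \real[r_2]{B}$. Since $\lseq_0$ occupies a negative position, I would merge $\restr[r_1]{\ann{\lseq_0}}$ and $\restr[r_2]{\ann{\lseq_0}}$ by Corollary~\ref{cor:nestedmerging}(2), obtaining a realisation $r_0$ on $\lseq_0$ and a substitution $\sub$ with $\dom{\sub} \sset \negvar{\lseq_0}$, $\subst{x}$ containing no new variables, and $\proves{\J}{\real[r_0]{\lseq_0} \IMPLIES \subst{\real[r_i]{\lseq_0}}}$ for $i \in \set{1,2}$. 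Because $A \AND B$ is properly annotated, the index sets $\ann{\lseq_0}$, $\ann{A}$, $\ann{B}$ are pairwise disjoint, so letting $r$ agree with $r_0$ on $\ann{\lseq_0}$, with $\sub \comp r_1$ on $\ann{A}$ and with $\sub \comp r_2$ on $\ann{B}$ produces a realisation function on $\seq{\out{A \AND B}}$ by the union clauses of Lemma~\ref{lem:facts}; and as $\dom{\sub} \sset \negvar{\lseq_0}$ is disjoint from $\negvar{A}$ and $\negvar{B}$, Lemma~\ref{lem:facts}(1) gives $\real{A} = \subst{\real[r_1]{A}}$ and $\real{B} = \subst{\real[r_2]{B}}$. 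The goal then follows propositionally: under the hypothesis $\real[r_0]{\lseq_0}$ the two merging entailments supply $\subst{\real[r_1]{\lseq_0}}$ and $\subst{\real[r_2]{\lseq_0}}$, which, fed into the substituted premisses, extract $\subst{\real[r_1]{A}} = \real{A}$ and $\subst{\real[r_2]{B}} = \real{B}$, whence $\real{A} \AND \real{B}$ and thus $\real{\seq{\out{A \AND B}}}$.

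In the inductive case the hole of $\seq{}$ sits under a box, $\seq{} = \lseq_0, \br[4n]{\seq[\lseq']{}}$, for a smaller output context $\seq[\lseq']{}$. I would apply the induction hypothesis to $\seq[\lseq']{}$ to get a realisation $r'$ and substitutions $\sub_1, \sub_2$ with $\proves{\J}{\subst[\sub_1]{\real[r_1]{\seq[\lseq']{\out A}}} \IMPLIES \subst[\sub_2]{\real[r_2]{\seq[\lseq']{\out B}}} \IMPLIES \real[r']{\seq[\lseq']{\out{A \AND B}}}}$, then internalise this two-premiss implication through the box with the Lifting Lemma~\ref{lem:lifting}(1), taking as box-terms the (possibly $\sub_i$-substituted) terms $r_i(4n)$ supplied by the premiss realisations and installing the resulting proof term as $r(4n)$; extending $r$ by $r'$ on the deeper indices is licensed by Lemma~\ref{lem:facts}. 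The outer LHS sequent $\lseq_0$ is then absorbed by the same merging step as in the base case, after which composing the substitutions already produced with the one coming from the merging of $\lseq_0$, together with one last round of propositional reasoning, closes the case.

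The main obstacle is not any single step but the bookkeeping around substitutions and the gluing of partial realisation functions, and within that the awkward point is the inductive case, where the substitution coming out of the induction hypothesis has to be threaded through the Lifting step while the outer context $\lseq_0$ is merged at the same time. Concretely one must check at each level that $\dom{\sub}$ remains inside $\negvar{\lseq_0}$, so $\sub$ never rewrites a variable occurring in the realisation of $A$, of $B$, or of the context under the box --- and in particular that the satisfier self-referentiality restriction is preserved --- and that the pieces assembled into $r$ have pairwise disjoint domains, which is exactly what proper annotation guarantees. Everything else --- the $\AND$-introduction, lifting the inner implication through the modality, the union lemmas of Lemma~\ref{lem:facts} --- is routine and parallels the classical $\andr$ case of~\cite{goetschi_realization_2012}.
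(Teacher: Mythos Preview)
Your proposal is correct and follows essentially the same route as the paper: induction on the depth of $\seq{}$, a call to Corollary~\ref{cor:nestedmerging} to reconcile the two realisations on the shared negative part, Lemma~\ref{lem:facts} to glue the pieces, and Lifting (Lemma~\ref{lem:lifting}) to push through $\br[4n]{\cdot}$. Two small discrepancies worth noting: the induction hypothesis yields a \emph{single} substitution $\sub'$ (as in the lemma statement), not a pair $\sub_1,\sub_2$; and the paper performs the outer merge on $\lseq_0$ \emph{before} lifting (so the box-terms fed to Lemma~\ref{lem:lifting} are already $(\sub\circ r_i)(4n)$), whereas you lift first and merge afterwards --- both orderings work via the Substitution Lemma, but the paper's order avoids having to substitute inside the freshly built term $t$.
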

\begin{proof}
	We proceed by induction on~$\depth{\seq{\ }}$.
	
	For the base case~$\depth{\seq{\ }}=0$, we have $\seq{\ } = \lseq, \{\ \}$ for some annotated LHS sequent~$\lseq$.
	By Corollary~\ref{cor:nestedmerging}, there exists a realisation $r'$ on~$\lseq$ and substitution~$\sub$ with $\dom{\sub} \sset \negvar{\lseq}$ such that
	$$\proves{\J}{\real[r']{\lseq} \IMPLIES \subst{\real[r_i]{\lseq}}}$$
	for~$i \in \set{1, 2}$.
	As~$\negvar{\lseq} \INT \negvar{A} = \EMPTY$ and $\negvar{\lseq} \INT \negvar{B} = \EMPTY$, by Lemma~\ref{lem:facts}, $\sub \comp r_1$ and $\sub \comp r_2$ are realisation functions on A and B respectively, and $r \colonequals r' \UNI \restr[(\sub \circ r_1)]{A} \UNI \restr[(\sub \circ r_2)]{B}$ is a realisation function on~$\lseq, \out{A \AND B}$ and we have
	$$\proves{\J}{\subst{\real[r_1]{A}} \IMPLIES \subst{\real[r_2]{B}} \IMPLIES \real{A} \AND \real{B}}$$
	and it follows from propositional reasoning that
	$$\proves{\J}{\subst{\real[r_1]{(\lseq, \out{A})}} \IMPLIES \subst{\real[r_2]{(\lseq, \out{B})}} \IMPLIES \real{(\lseq, \out{A \AND B})}}$$
	
	For the inductive case~$\seq{\ } = \lseq, \br[4n]{\seq[\fseq']{\ }}$ for some annotated LHS sequent~$\lseq$ and context~$\seq[\fseq']{\ }$.
	Using the inductive hypothesis, there exists a realisation $r'$ on $\seq[\fseq']{\out{A \AND B}}$ and substitution~$\sub'$ with~$\dom{\sub} \sset \negvar{\seq[\fseq']{\out{A \AND B}}}$ such that
	$$\proves{\J}{\subst[\sub']{\real[r_1]{\seq[\fseq']{\out{A}}}} \IMPLIES \subst[\sub']{\real[r_2]{\seq[\fseq']{\out{B}}}} \IMPLIES \real[r']{\seq[\fseq']{\out{A \AND B}}}}$$
	By Lemma~\ref{lem:facts}, $\restr[(\sub' \comp r_i)]{\lseq}$ is a realisation on~$\lseq$ for each~$i \in \set{1,2}$.
	By Corollary~\ref{cor:nestedmerging}, there exists a realisation~$r''$ on~$\lseq$ and substitution~$\sub''$ with~$\dom{\sub''} \sset \negvar{\lseq}$ such that
	$$\proves{\J}{\real[r'']{\lseq} \IMPLIES \subst[\sub'']{\real[\sub' \comp r_i]{\lseq}}}$$
	for~$i \in \set{1, 2}$.
	By the Substitution Lemma~\ref{lem:subst}
	$$
		\proves{\J}{\subst[(\sub'' \comp \sub')]{\real[r_1]{\seq[\fseq']{\out{A}}}} \IMPLIES \subst[(\sub'' \comp \sub')]{\real[r_2]{\seq[\fseq']{\out{B}}}} \IMPLIES \subst[\sub'']{\real[r']{\seq[\fseq']{\out{A \AND B}}}}}
	$$
	Set~$\sub = \sub'' \comp \sub$.
	By the Lifting Lemma~\ref{lem:lifting}, there exists a proof term~$t \equiv t((\sub \circ r_1)(4n), (\sub \circ r_2)(4n))$ such that
	$$\proves{\J}{\just{(\sub \circ r_1)(4n)}{\subst{\real[r_1]{\seq[\fseq']{\out{A}}}}} \IMPLIES \just{(\sub \circ r_2)(4n)}{\subst{\real[r_1]{\seq[\fseq']{\out{B}}}}} \IMPLIES 
		\just{t}
		{\real[\sub'' \comp r']{\seq[\fseq']{\out{A \AND B}}}}
	}
	$$
	Set~$r \colonequals (\sub'' \comp r') \UNI r'' \UNI \set{(4n, t)}$ which is a realisation function on $(\lseq, \br[4n]{\seq[\fseq']{\out{A \AND B}}})$ by Lemma~\ref{lem:facts} and by propositional reasoning
	$$\proves{\J}{\subst{\real[r_1]{(\lseq, \br[4n]{\seq[\fseq']{\out{A}}})}} \IMPLIES \subst{\real[r_2]{(\lseq, \br[4n]{\seq[\fseq']{\out{B}}})}} \IMPLIES \real{(\lseq, \br[4n]{\seq[\fseq']{\out{A \AND B}}})}}$$
    \qed
\end{proof}

\begin{restatable}[Other right rules -- base case]{proposition}{propshallowright}\label{prop:shallowright}
	Let $\J \in \set{\JIK, \JIKt, \JIKfour, \JISfour}$.
	Let~$\fseq'$ be the premiss and $\fseq$ be the conclusion sequents of an annotated shallow instance of a rule~$\rle \in \set{\orr, \impr, \boxr, \diar, \boxlbox, \tr, \fourr, \fourlbox, \updtrle}$. 
	Then, given a realisation function~$r'$ on~$\fseq'$, there exists a realisation function~$r$ on~$\fseq$ such that:
    $\proves{\J}{\real[r']{\fseq'} \IMPLIES \real{\fseq}}$
    with $\J$ appropriately containing the corresponding axioms to the modal rule considered.
    %
\end{restatable}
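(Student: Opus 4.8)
The plan is to run a case analysis on the rule $\rle$, treating its shallow instance directly. In each case the ambient context has depth $0$, so $\fm{\fseq'}$ and $\fm{\fseq}$ are explicit propositional combinations of the realisation $\real{\lseq}$ of the left-hand part $\lseq$ of the context and the principal (sub)formulas of $\rle$. The uniform procedure is: (i) write out $\real[r']{\fseq'}$ and the target $\real{\fseq}$ explicitly; (ii) define $r$ to agree with $r'$ on every annotation common to $\fseq'$ and $\fseq$, to send any freshly created negative modality to its canonical variable, and to send the principal positive annotation(s) newly introduced by the rule to the appropriate term built from the data of $r'$; (iii) check, via Lemma~\ref{lem:facts}, that $r$ is a realisation on $\fseq$; the only delicate point here is the satisfier self-referentiality restriction, which holds because proper annotation forces a freshly created $\diabr[4k+3]{\cdot}$-bracket to carry a variable $\svar[k]$ that does not occur in the realised body; (iv) conclude $\proves{\J}{\real[r']{\fseq'} \IMPLIES \real{\fseq}}$ from the relevant justification axiom together with routine intuitionistic propositional reasoning, using in particular currying to manoeuvre $\real{\lseq}$ out of the way and back.

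Spelling out step (ii) and the axiom used in (iv): for $\orr$ take $r = r'$ extended by reserved variables on the modalities of the disjunct that is absent from the premiss, and use intuitionistic weakening into a disjunction; for $\impr$ and $\boxr$ take $r = r'$ and use currying, respectively the identity; for $\boxlbox$ set $r(4n) := \jappl{r'(4n)}{\pvar[j]}$ and use $\jkax{1}$; for $\fourlbox$ set $r(4n) := \jappl{r'(4n)}{\jbang{\pvar[j]}}$ and use $\jfaxb$ then $\jkax{1}$; for $\diar$ set $r(4k+3) := \svar[k]$ and $r(4j+2) := \sappl{r'(4n)}{\svar[k]}$ and use $\jkax{2}$; for $\fourr$ set $r(4k+3) := \svar[k]$, keep $r(m) := r'(m)$ for the persisting $\DIA[m]$, and use $\jkax{2}$ followed by $\jfaxd$ to collapse $\sat{\sappl{r'(4n)}{\svar[k]}}{\sat{r'(m)}{\real{A}}}$ down to $\sat{r'(m)}{\real{A}}$; for $\tr$ set $r(4j+2) := \ressvar[j]$ and use $\jtaxd$; and for $\updtrle$ first apply $\jkax{4}$ to turn the premiss implication $\sat{\svar[k]}{\real{\lseq_1}} \IMPLIES \just{r'(4n)}{(\real{\lseq_2} \IMPLIES \real{\oseq})}$ into $\just{\jupdt{\svar[k]}{r'(4n)}}{(\real{\lseq_1} \IMPLIES \real{\lseq_2} \IMPLIES \real{\oseq})}$, then apply the Lifting Lemma~\ref{lem:lifting} to the propositional uncurrying theorem to obtain a proof term $t(\jupdt{\svar[k]}{r'(4n)})$ with $\just{\jupdt{\svar[k]}{r'(4n)}}{(\real{\lseq_1} \IMPLIES \real{\lseq_2} \IMPLIES \real{\oseq})} \IMPLIES \just{t(\jupdt{\svar[k]}{r'(4n)})}{((\real{\lseq_1} \AND \real{\lseq_2}) \IMPLIES \real{\oseq})}$, and set $r(4n) := t(\jupdt{\svar[k]}{r'(4n)})$. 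Throughout, intuitionistic propositional reasoning is used to move the ambient $\real{\lseq}$ aside while the axioms act on the core implication.

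The genuinely new and hardest cases are $\diar$, $\fourr$ and $\updtrle$: these are precisely the rules whose conclusion contains a $\DIA$-formula or a $\diabr{\cdot}$-bracket absent from the premiss, so they fall outside the box-only duality used in the classical treatment of~\cite{goetschi_realization_2012} and instead force us to introduce the satisfier operations $\sappl{}{}$ and $\jupdt{}{}$ and to rely on the non-standard axioms $\jkax{2}$, $\jkax{4}$ and $\jfaxd$; for $\updtrle$ one additionally needs the Lifting Lemma to repackage the body. For all three rules one must verify with some care that the chosen $r$ still respects the satisfier self-referentiality restriction, which is exactly the bookkeeping that the annotation discipline was set up to control. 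The remaining rules ($\orr$, $\impr$, $\boxr$, $\boxlbox$, $\tr$, $\fourlbox$) are routine intuitionistic adaptations of the classical constructions.
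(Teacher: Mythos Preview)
Your proposal is correct and follows essentially the same approach as the paper: a case analysis on $\rle$ at depth~$0$, with the same choice of realisation term and justification axiom in each case (including $\jkax{2}$ for $\diar$, $\jkax{2}$ then $\jfaxd$ for $\fourr$, and $\jkax{4}$ plus the Lifting Lemma for $\updtrle$). If anything you are slightly more explicit than the paper about the ambient $\real{\lseq}$ and the satisfier self-referentiality check.
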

\begin{proof}\setcounter{case}{0}
	\begin{case}
		When~$\rle = \orr$.
		We will only consider one case as the other case is similar.
		We have~$\fseq' = \lseq, \out{A}$ and~$\fseq = \lseq, \out{A \OR B}$ for some annotated LHS sequent~$\lseq$.
		Extend~$r'$ onto~$B$ similar to Lemmas~\ref{lem:idrule} and~\ref{lem:botrule}.
		By propositional reasoning, we have
		$$\proves{\J}{(\real[r']{\lseq} \IMPLIES \real[r']{A}) \IMPLIES (\real[r']{\lseq} \IMPLIES \real[r']{(A \OR B)})}$$
		and so we set~$r \colonequals r'$.
	\end{case}
	\begin{case}
		When~$\rle = \impr$.
		We have~$\fseq' = \lseq, \inp{A}, \out{B}$ and~$\fseq = \lseq, \out{A \IMPLIES B}$ for some annotated LHS sequent~$\lseq$.
		By propositional reasoning, we have
		$$\proves{\J}{((\real[r']{\lseq} \AND \real[r']{A}) \IMPLIES \real[r']{B}) \IMPLIES (\real[r']{\lseq} \IMPLIES (\real[r']{A} \IMPLIES \real[r']{B}))}$$
		and so we set~$r \colonequals r'$.
	\end{case}
	\begin{case}
		When~$\rle = \boxr$.
		We have~$\fseq' = \lseq, \br[4n]{\out{A}}$ and~$\fseq = \lseq, \out{\BOX[4n] A}$.
		We note~$\real[r']{\fseq'} = \real[r']{\fseq}$ so we set~$r \colonequals r'$.
	\end{case}
	\begin{case}
		When~$\rle = \diar$.
		We have $\fseq' = \lseq', \br[4n]{\out{A}, \lseq}$ and~$\fseq = \lseq', \out{\DIA[4j+2] A}, \diabr[4k+3]{A}$ for some annotated LHS sequents~$\lseq$ and~$\lseq'$.
		We have
		$$\real[r']{\br[4n]{\out{A}, \lseq}} = \just{r'(4n)}{(\real[r']{\lseq} \IMPLIES \real[r']{A})}$$
		Using the~$\jkax{2}$ axiom we have
		$$\proves{\J}{\just{r'(4n)}{(\real[r']{\lseq} \IMPLIES \real[r']{A})} \IMPLIES (\sat{\svar[k]}{\real[r']{\lseq}} \IMPLIES \sat{(\sappl{r'(4n)}{\svar[k]})}{\real[r']{A}})}$$
		By propositional reasoning, we have
		$$\proves{\J}{(\real[r']{\lseq'} \IMPLIES \just{r'(4n)}{(\real[r']{\lseq} \IMPLIES \real[r']{A})}) \IMPLIES ((\real[r']{\lseq'} \AND \sat{\svar[k]}{\real[r']{\lseq}}) \IMPLIES \sat{(\sappl{r'(4n)}{\svar[k]})}{\real[r']{A}})}$$
		and so we set~$r \colonequals \restr[r']{\lseq'} \UNI \set{(4j+2, \sappl{r'(4n)}{\svar[m]}), (4k+3, \svar[k])}$ which is a realisation function on~$\fseq$ by Lemma~\ref{lem:facts}.
	\end{case}
        \begin{case}
            When~$\rle = \boxlbox$. $\fseq' = \lseq, \br[4n]{\inp{A}, \oseq}$ and $\fseq = \lseq, \inp{\BOX[4j+1] A}, \br[4n]{\oseq}$.
            We have
            $$\real[r']{\fseq'} = \real[r']{\lseq} \IMPLIES \just{r'(4n)}{(\real[r']{A} \IMPLIES \real[r']{\oseq})}$$
            Using the $\jkax{1}$ axiom,
            $$
            \proves{\J}{\just{r'(4n)}{(\real[r']{A} \IMPLIES \real[r']{\oseq})}
            \IMPLIES
            (\just{\pvar[j]}{\real[r']{A}}
            \IMPLIES
            \just{(\jappl{r'(4n)}{\pvar[j]})}{\real[r']{\oseq}})}
            $$
            Propositional reasoning gives
            $$
            \proves{\J}{
            (\real[r']{\lseq} \IMPLIES
            \just{r'(4n)}
            {
            (\real[r']{A} \IMPLIES \real[r']{\oseq})})
            \IMPLIES
            ((\real[r']{\lseq} \AND
            \just{\pvar[j]}{\real[r']{A}})
            \IMPLIES
            \just{(\jappl{r'(4n)}{\pvar[j]})}{\real[r']{\oseq}})
            }
            $$
            Set $r \colonequals \restr[r']{\lseq} \UNI \restr[r']{A} \UNI \restr[r']{\oseq} \UNI \set{(4j+1, \pvar[j]), (4n, \jappl{r'(4n)}{\pvar[j]})}$ which is a realisation function on $A$ by Lemma~\ref{lem:facts}.
        \end{case}
	\begin{case}
		When~$\rle = \tr$.
		We have~$\fseq' = \lseq, \out{A}$ and~$\fseq = \lseq, \out{\DIA[4j+2] A}$ for some annotated LHS sequent~$\lseq$.
		Note~$\real[r']{(\lseq, \out{A})} = \real[r']{\lseq} \IMPLIES \real[r']{A}$.
		Using the~$\jtaxd$ axiom and propositional reasoning, we have
		$$\proves{\ILP}{(\real[r']{\lseq} \IMPLIES \real[r']{A}) \IMPLIES (\real[r']{\lseq} \IMPLIES \sat{\ressvar[j]}{\real[r']{A}})}$$
		where we use a reserved variable.
		Set $r \colonequals r \UNI \set{(4j+2, \ressvar[j])}$ which is a realisation function on~$\fseq$ by Lemma~\ref{lem:facts}.
	\end{case}
	\begin{case}
		When~$\rle = \fourr$.
		We have $\fseq' = \lseq', \br[4n]{\out{\DIA[m] A}, \lseq}$ and~$\fseq = \lseq', \out{\DIA[m] A}, \diabr[4k+3]{A}$ for some annotated LHS sequents~$\lseq$ and~$\lseq'$.
		We have
		$$\real[r']{\br[4n]{\out{\DIA[m] A}, \lseq}} = \just{r'(4n)}{(\real[r']{\lseq} \IMPLIES \sat{r'(m)}{\real[r']{A}})}$$
		Using the~$\jkax{2}$ axiom we have
		$$\proves{\ILP}{\just{r'(4n)}{(\real[r']{\lseq} \IMPLIES \sat{r'(m)}{\real[r']{A}})} \IMPLIES (\sat{\svar[k]}{\real[r']{\lseq}} \IMPLIES \sat{(\sappl{r'(4n)}{\svar[k]})}{\sat{r'(m)}{\real[r']{A}}})}$$
		Using the~$\jfaxd$ axiom and propositional reasoning
		$$\proves{\ILP}{\just{r'(4n)}{(\real[r']{\lseq} \IMPLIES \sat{r'(m)}{\real[r']{A}})} \IMPLIES (\sat{\svar[k]}{\real[r']{\lseq}} \IMPLIES \sat{r'(m)}{\real[r']{A}})}$$
		By further propositional reasoning, we have
		$$\proves{\ILP}{(\real[r']{\lseq'} \IMPLIES \just{r'(4n)}{(\real[r']{\lseq} \IMPLIES \real[r']{A})}) \IMPLIES ((\real[r']{\lseq'} \AND \sat{\svar[k]}{\real[r']{\lseq}}) \IMPLIES \sat{r'(m)}{\real[r']{A}})}$$
		and so we set~$r \colonequals \restr[r']{\lseq'} \UNI \set{(4k+3, \svar[m])}$ which is a realisation function on~$\fseq$ by Lemma~\ref{lem:facts}.
	\end{case}
        \begin{case}
            When~$\rle = \fourlbox$. $\fseq' = \lseq, \br[4n]{\inp{\BOX[4j+1] A}, \oseq}$ and $\fseq = \lseq, \inp{\BOX[4j+1] A}, \br[4n]{\oseq}$.
            We have
            $$\real[r']{\fseq'} = \real[r']{\lseq} \IMPLIES \just{r'(4n)}{(\just{\pvar[j]}{\real[r']{A}} \IMPLIES \real[r']{\oseq})}$$
            Using the $\jkax{1}$ axiom,
            $$
            \proves{\J}{\just{r'(4n)}{(\just{\pvar[j]}{\real[r']{A}} \IMPLIES \real[r']{\oseq})}
            \IMPLIES
            (\just{\jbang{\pvar[j]}}{\just{\pvar[j]}{\real[r']{A}}}
            \IMPLIES
            \just{(\jappl{r'(4n)}{\jbang{\pvar[j]}})}{\real[r']{\oseq}})}
            $$
            The $\jfaxb$ axiom gives
            $$
            \proves{\J}{\just{\pvar[j]}{A} \IMPLIES \just{\jbang{\pvar[j]}}{\just{\pvar[j]}{A}}}
            $$
            Propositional reasoning gives
            $$
            \proves{\J}{
            (\real[r']{\lseq} \IMPLIES
            \just{r'(4n)}
            {
            (\real[r']{A} \IMPLIES \real[r']{\oseq})})
            \IMPLIES
            ((\real[r']{\lseq} \AND
            \just{\pvar[j]}{\real[r']{A}})
            \IMPLIES
            \just{(\jappl{r'(4n)}{\jbang{\pvar[j]}})}{\real[r']{\oseq}})
            }
            $$
            Set $r \colonequals \restr[r']{\lseq} \UNI \restr[r']{A} \UNI \restr[r']{\oseq} \UNI \set{(4j+1, \pvar[j]), (4n, \jappl{r'(4n)}{\jbang{\pvar[j]}})}$ which is a realisation function on $A$ by Lemma~\ref{lem:facts}.
        \end{case}
        \begin{case}
            When $\rle = \updtrle$.
            $\fseq' = \diabr[4k+3]{\lseq_1}, \br[4n]{\lseq_2, \oseq}$ and $\fseq = \br[4n]{\lseq_1, \lseq_2, \oseq}$. 
            Note
	$$
	\real[r']{\fseq'}
	=
	\sat{\svar[k]}{\real[r']{\lseq_1}} \IMPLIES \just{r'(4n)}{(\real[r']{\lseq_2} \IMPLIES \real[r']{\oseq})}
	$$
	Using the $\jkax{4}$ axiom,
	$$
	\proves{\J}
	{
		(\sat{\svar[k]}{\real[r']{\lseq_1}} \IMPLIES \just{r'(4n)}{(\real[r']{\lseq_2} \IMPLIES \real[r']{\oseq})})
		\IMPLIES
		\just{\jupdt{\svar[k]}{r'(4n)}}
		{
			(\real[r']{\lseq_1} \IMPLIES (\real[r']{\lseq_2} \IMPLIES \real[r']{\oseq}))
		}
	}
	$$
	Propositional reasoning gives
	$$
	\proves{\J}
	{
		(\real[r']{\lseq_1} \IMPLIES (\real[r']{\lseq_2} \IMPLIES \real[r']{\oseq}))
		\IMPLIES
		((\real[r']{\lseq_1} \AND \real[r']{\lseq_2}) \IMPLIES \real[r']{\oseq})
	}
	$$
	Using the Lifting Lemma~\ref{lem:lifting}, there exists a proof term $t \equiv t(\jupdt{\svar[k]}{r(4n)})$ such that
	$$
	\proves{\J}
	{
		\just{\jupdt{\svar[k]}{r'(4n)}}
		{
			(\real[r']{\lseq_1} \IMPLIES (\real[r']{\lseq_2} \IMPLIES \real[r']{\oseq}))
		}
		\IMPLIES
		\just{t}
		{
			((\real[r']{\lseq_1} \AND \real[r']{\lseq_2}) \IMPLIES \real[r']{\oseq})
		}
	}
	$$
	Modus ponens achieves
	$$
	\proves{\J}
	{
		(\sat{\svar[k]}{\real[r']{\lseq_1}} \IMPLIES \just{r'(4n)}{(\real[r']{\lseq_2} \IMPLIES \real[r']{\oseq})})
		\IMPLIES
		\just{t}
		{
			((\real[r']{\lseq_1} \AND \real[r']{\lseq_2}) \IMPLIES \real[r']{\oseq})
		}
	}
	$$
	Set $r \colonequals \restr[r']{\lseq_1} \UNI \restr[r']{\lseq_2} \UNI \restr[r']{\oseq} \UNI \set{(4n,t)}$ which is a realisation function on~$\fseq$ by Lemma~\ref{lem:facts}.
        \end{case}
    \qed
\end{proof}

\begin{restatable}[Other right rules -- inductive case]{lemma}{lemrightrules}
\label{lem:rightrules}
	Let $\J \in \set{\JIK, \JIKt, \JIKfour, \JISfour}$.
	Let~$\seq{\ }$ be an annotated context where~$\seq{\Omega'}$ is the premiss and~$\seq{\Omega}$ is the conclusion sequent of an annotated instance of a rule~$\rle \in \set{\orr, \impr, \boxr, \diar, \boxlbox, \tr, \fourr, \fourlbox, \updtrle}$ where~$\Omega'$ and~$\Omega$ are full sequents. 
	Then, given realisation function~$r'$ on~$\fseq'$, there exists a realisation function~$r$ on~$\fseq$ such that:
    $\proves{\J}{\real[r']{\seq{\Omega'}} \IMPLIES \real{\seq{\Omega}}}$
    with $\J$ appropriately containing the corresponding axioms to the modal rule considered.
    
\end{restatable}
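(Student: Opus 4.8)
The plan is to follow the shallow-to-deep pattern of~\cite{goetschi_realization_2012}: proceed by induction on the structure of the (output) context~$\seq{}$ -- equivalently, on $\depth{\seq{}}$ -- with Proposition~\ref{prop:shallowright} as the base case and the Lifting Lemma pushing the statement under one further layer of brackets. Throughout, recall that the hole of an output context is reached only by passing through $\BOX$-brackets sitting in positive positions, so every bracket index on that path is $\equiv 0 \bmod 4$ and is realised by a proof term.

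In the base case $\depth{\seq{}} = 0$ the context is $\seq{} = \lseq, \seq[]{}$ for some annotated LHS sequent~$\lseq$, so the premiss and conclusion are $\lseq, \Omega'$ and $\lseq, \Omega$ and the claim is \emph{exactly} Proposition~\ref{prop:shallowright} (the prefix $\lseq$ is already carried along there, and all bracket/modality manipulation performed by the modal rules $\diar$, $\fourr$, $\tr$ is confined to the active part and has been handled in that proposition). In the inductive step the context has the form $\seq{} = \lseq, \br[4n]{\seq[\Xi]{}}$, where $\lseq$ is an annotated LHS sequent, $\seq[\Xi]{}$ is a strictly shallower output context, and $r'(4n) \in \prfterms$ by the preceding observation. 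I will first decompose $r' = \restr[r']{\lseq} \UNI \set{(4n, r'(4n))} \UNI r'_\Xi$, where $r'_\Xi$ is the restriction of $r'$ to the annotations of $\seq[\Xi]{\Omega'}$, so that $\real[r']{\seq{\Omega'}} = \real[r']{\lseq} \IMPLIES \just{r'(4n)}{\real[r'_\Xi]{\seq[\Xi]{\Omega'}}}$. Applying the induction hypothesis to $\seq[\Xi]{}$ yields a realisation $r_\Xi$ on $\seq[\Xi]{\Omega}$ with $\proves{\J}{\real[r'_\Xi]{\seq[\Xi]{\Omega'}} \IMPLIES \real[r_\Xi]{\seq[\Xi]{\Omega}}}$; feeding this into the Lifting Lemma~\ref{lem:lifting}(1) with the proof term $r'(4n)$ produces a proof term $t \equiv t(r'(4n))$ with $\proves{\J}{\just{r'(4n)}{\real[r'_\Xi]{\seq[\Xi]{\Omega'}}} \IMPLIES \just{t}{\real[r_\Xi]{\seq[\Xi]{\Omega}}}}$. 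Prefixing $\real[r']{\lseq}$ as an extra antecedent on both sides (valid intuitionistically) and setting $r \colonequals \restr[r']{\lseq} \UNI r_\Xi \UNI \set{(4n, t)}$ -- a realisation on $\seq{\Omega}$ by Lemma~\ref{lem:facts} -- then gives $\proves{\J}{\real[r']{\seq{\Omega'}} \IMPLIES \real{\seq{\Omega}}}$, since $\real{\seq{\Omega}} = \real[r']{\lseq} \IMPLIES \just{t}{\real[r_\Xi]{\seq[\Xi]{\Omega}}}$.

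The main thing to get right is the bookkeeping of realisation functions along the decomposition $\seq{} = \lseq, \br[4n]{\seq[\Xi]{}}$: that $r'$ genuinely splits into $\restr[r']{\lseq}$, the proof-term value $r'(4n)$, and $r'_\Xi$; that each piece is a realisation on the corresponding (sub)sequent; and that the reassembled $r$ is again a realisation on the whole conclusion -- in particular that it still satisfies the satisfier self-referentiality restriction. This is routine and is covered uniformly by Lemma~\ref{lem:facts}, using also that the outer bracket $\br[4n]{\cdot}$ is a $\BOX$-bracket and so carries no self-referentiality side-condition, and that $\rle$ touches only the hole of $\seq[\Xi]{}$, so nothing in the outer layer $\lseq, \br[4n]{\cdot}$ interacts with it. No genuinely new idea beyond Proposition~\ref{prop:shallowright} and the Lifting Lemma is needed.
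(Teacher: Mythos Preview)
Your proposal is correct and follows essentially the same approach as the paper: induction on $\depth{\seq{}}$, with Proposition~\ref{prop:shallowright} as the base case and, in the inductive step $\seq{} = \lseq, \br[4n]{\seq[\Xi]{}}$, applying the induction hypothesis inside the bracket, lifting via Lemma~\ref{lem:lifting}(1) along the proof term $r'(4n)$, and reassembling the realisation as $\restr[r']{\lseq} \UNI r_\Xi \UNI \set{(4n, t)}$ by Lemma~\ref{lem:facts}. Your explicit remark that output-context brackets are all $\br[4n]{\cdot}$ (so realised by proof terms, with no self-referentiality side-condition) is a useful clarification the paper leaves implicit.
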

\begin{proof}
	We proceed by induction on~$\depth{\seq{\ }}$.
	Note if~$\rle \in \set{\tr, \fourr}$, then $\J = \ILP$.
	
	For the base case~$\depth{\seq{\ }}=0$, we have $\seq{\ } = \lseq', \{\ \}$ for some annotated LHS sequent~$\lseq'$ and we can construct a realisation~$r$ using Proposition~\ref{prop:shallowright}.
	
	For the inductive case~$\seq{\ } = \lseq, \br[4n]{\seq[\fseq']{\ }}$ for some annotated LHS sequent~$\lseq$ and context~$\seq[\fseq']{\ }$.
	By the inductive hypothesis, there exists a realisation~$r''$ such that
	$$\proves{\J}{\real[r']{\seq[\fseq']{\Omega'}} \IMPLIES \real[r'']{\seq[\fseq']{\Omega}}}$$
	By the Lifting Lemma~\ref{lem:lifting}, there exists a proof term~$t(r'(4n))$ such that
	$$\proves{\J}{\just{r'(4n)}{\real[r']{\seq[\fseq']{\Omega'}}} \IMPLIES \just{t(r'(4n))}{\real[r'']{\seq[\fseq']{\Omega}}}}$$
	By propositional reasoning, we have
	$$\proves{\J}{(\real[r']{\lseq} \IMPLIES \just{r'(4n)}{\real[r']{\seq[\fseq']{\Omega'}}}) \IMPLIES (\real[r']{\lseq} \IMPLIES \just{t(r'(4n))}{\real[r'']{\seq[\fseq']{\Omega}}})}$$
	As~$\dom{r''} \INT \dom{\restr[r']{\lseq}} \sset \set[2n+1]{n \in \nat}$, set~$r \colonequals r'' \UNI \restr[r']{\lseq} \UNI \set{(4n, t(r'(4n)))}$ which is a realisation function on~$\seq{\Omega}$ by Lemma~\ref{lem:facts}.
    \qed
\end{proof}


\subsection{Left rules}
\begin{restatable}[$\BOT$-left rule -- bottom-up]{proposition}{propLHSbot}
\label{prop:LHSbot}
	Let~$\J \in \set{\JIK, \JIKt, \JIKfour, \JISfour}$.
	Let~$\seq[\lseq]{\inp{\BOT}}$ be an LHS sequent.
	Let~$r$ be a realisation function on~$\seq[\lseq]{\inp{\BOT}}$.
	Then
	$$
	\proves{\J}
	{
		\real{\seq[\lseq]{\inp{\BOT}}} \IMPLIES \BOT
	}
	$$
\end{restatable}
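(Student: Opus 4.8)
The plan is to induct on the structure of the context $\seq[\lseq]{}$ (equivalently, on $\depth{\seq[\lseq]{}}$, as in the proof of Lemma~\ref{lem:idrule}). The guiding observation is that an \emph{LHS} sequent contains no $\br{\cdot}$-brackets, only $\diabr{\cdot}$-brackets, so the hole $\inp{\BOT}$ in $\seq[\lseq]{\inp{\BOT}}$ can only be nested inside diamonds. Consequently the induction never meets a $\BOX$-layer, and we are spared from having to prove $\just{t}{\BOT} \IMPLIES \BOT$, which is not a theorem of any $\J$ in our family (just as $\BOX\BOT\IMPLIES\BOT$ is not a theorem of $\logic{K}$); instead the whole contradiction will be produced through $\jkax{5}$.

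For the base case $\depth{\seq[\lseq]{}} = 0$, we have $\seq[\lseq]{\inp{\BOT}} = \lseq, \inp{\BOT}$ for some annotated LHS sequent $\lseq$, so $\real{\seq[\lseq]{\inp{\BOT}}} = \real{\lseq} \AND \BOT$, and $(\real{\lseq} \AND \BOT) \IMPLIES \BOT$ is already a theorem of $\IPL$. For the inductive step, the outermost bracket around the hole is a diamond, so $\seq[\lseq]{\inp{\BOT}} = \lseq_1, \diabr[4k+3]{\lseq_2, \seq[\lseq']{\inp{\BOT}}}$ for annotated LHS sequents $\lseq_1, \lseq_2$, a strictly smaller context $\seq[\lseq']{}$, and some $k$ with $r(4k+3) = \svar[k]$; therefore
$$
\real{\seq[\lseq]{\inp{\BOT}}} = \real{\lseq_1} \AND \sat{\svar[k]}{\bigl(\real{\lseq_2} \AND \real{\seq[\lseq']{\inp{\BOT}}}\bigr)}.
$$
By the induction hypothesis (applied to $r$ restricted to $\seq[\lseq']{\inp{\BOT}}$; cf.~Lemma~\ref{lem:facts}) we get $\proves{\J}{\real{\seq[\lseq']{\inp{\BOT}}} \IMPLIES \BOT}$, hence $\proves{\J}{(\real{\lseq_2} \AND \real{\seq[\lseq']{\inp{\BOT}}}) \IMPLIES \BOT}$ by propositional reasoning. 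Applying the Lifting Lemma~\ref{lem:lifting}(2) with $n = 0$, $C \colonequals \real{\lseq_2} \AND \real{\seq[\lseq']{\inp{\BOT}}}$, $A \colonequals \BOT$ and $\nu \colonequals \svar[k]$ yields a satisfier $\mu$ with $\proves{\J}{\sat{\svar[k]}{(\real{\lseq_2} \AND \real{\seq[\lseq']{\inp{\BOT}}})} \IMPLIES \sat{\mu}{\BOT}}$; chaining this with the $\jkax{5}$-instance $\proves{\J}{\sat{\mu}{\BOT} \IMPLIES \BOT}$ and absorbing $\real{\lseq_1}$ by propositional reasoning gives $\proves{\J}{\real{\seq[\lseq]{\inp{\BOT}}} \IMPLIES \BOT}$, as required.

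The step I expect to be the crux is not any particular calculation but the choice of route: the classical-style argument through the $\BOX$-modality is unavailable intuitionistically, and one must instead push the contradiction \emph{inside} the satisfier (via the satisfier form of lifting, Lemma~\ref{lem:lifting}(2)) and then eliminate it with $\jkax{5}$. Since $\jkax{5}$ belongs to every logic in $\set{\JIK, \JIKt, \JIKfour, \JISfour}$, no case analysis on $\J$ is needed and the $\tax$/$\fax$ axioms play no role here.
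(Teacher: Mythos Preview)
Your proof is correct and follows essentially the same route as the paper: induction on the depth of the LHS context, the base case handled by $\IPL$, and the inductive $\diabr{\cdot}$-step by lifting to a satisfier and then discharging $\sat{\mu}{\BOT}$ via $\jkax{5}$. (Incidentally, the paper's write-up cites $\jkax{3}$ at that step, which is a typo; you have the right axiom.)
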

\begin{proof}
	We proceed by induction on~$\depth{\seq[\lseq]{\ }}$.
	
	For the base case~$\depth{\seq[\lseq]{\ }} =0$, we have $\seq[\lseq]{\inp{\BOT}} = \lseq', \inp{\BOT}$ for some annotated LHS sequent~$\lseq$.
	It follows as a theorem of~$\IPL$ that
	$$
	\proves{\J}
	{
		\real{\seq[\lseq]{\inp{\BOT}}}
		=
		(\real{\lseq'} \AND \BOT)
		\IMPLIES
		\BOT
	}
	$$
	
	For the inductive case~$\seq[\lseq]{\inp{\BOT} } = \lseq_0, \diabr[4k+3]{\lseq_1, \seq[\lseq_2]{\inp{\BOT} }}$.
	By the inductive hypothesis and propositional reasoning
	$$
	\proves{\J}
	{	
		(\real{\lseq_1}
		\AND
		\real{\seq[\lseq_2]{\BOT }}) 
		\IMPLIES \BOT
	}
	$$
	By the Lifting Lemma~\ref{lem:lifting}, there exists a satisfier term~$\mu$ such that
	$$
	\proves{\J}
	{	
		\sat{\svar[k]}{(\real{\lseq_1}
		\AND
		\real{\seq[\lseq_2]{\BOT }})} 
		\IMPLIES \sat{\mu}{\BOT}
	}
	$$
	Using the~$\jkax{3}$ axiom and transitivity, we have
		$$
	\proves{\J}
	{	
		\sat{\svar[k]}{(\real{\lseq_1}
			\AND
			\real{\seq[\lseq_2]{\BOT }})} 
		\IMPLIES \BOT
	}
	$$
	By further propositional reasoning we achieve
	$$
	\proves{\J}
	{
		\real{\seq[\lseq]{\inp{\BOT} }} =
		(
			\real{\lseq_0}
			\AND
			\sat{\svar[k]}{(\real{\lseq_1}
				\AND
				\real{\seq[\lseq_2]{\BOT }})} 
		)
		\IMPLIES
		\BOT
	}$$
    \qed
\end{proof}

\begin{restatable}[$\BOT$-left rule -- top-down]{lemma}{lembotrule}
\label{lem:botrule}
	Let~$\J \in \set{\JIK, \JIKt, \JIKfour, \JISfour}$.
        Let $\seq{\inp{\BOT}}$ be the conclusion of the $\botrule$ rule.
	Then there exists a realisation function~$r$ on~$\seq{\inp{\BOT}}$ such that
	$$\proves{\J}{\real{\seq{\inp{\BOT}}}}$$
\end{restatable}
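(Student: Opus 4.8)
The plan is to follow the pattern of Lemma~\ref{lem:idrule} and Proposition~\ref{prop:LHSbot}: proceed by induction on $\depth{\seq{\ }}$, distinguishing the two shapes an input context can take according to its grammar. Write $r_{\lseq}$ for the \emph{basic realisation} on an LHS sequent $\lseq$, i.e.\ the one assigning $\pvar[m]$ to $4m+1$, $\svar[m]$ to $4m+3$, $\respvar[m]$ to $4m$ and $\ressvar[m]$ to $4m+2$; recall from Lemma~\ref{lem:facts} that stitching such realisations together yields a realisation function provided the reserved and freshly introduced variables stay disjoint from the realised subformulas, which is automatic here and in particular keeps the satisfier self-referentiality restriction satisfied. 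The case analysis on $\depth{\seq{\ }}$ is really only needed for the second case below; the first case is handled directly.

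First case: $\seq{\ } = \seq[\Omega]{\ }, \oseq$. Then $\seq[\Omega]{\inp{\BOT}}$ is itself an LHS sequent carrying an $\inp{\BOT}$, so Proposition~\ref{prop:LHSbot} applies to it with its basic realisation $r_0$, yielding $\proves{\J}{\real[r_0]{\seq[\Omega]{\inp{\BOT}}} \IMPLIES \BOT}$. I would then take $r$ to be $r_0$ extended by the basic realisation on $\oseq$, which is a realisation function on $\seq{\inp{\BOT}}$ by Lemma~\ref{lem:facts}. Since $\real{\seq{\inp{\BOT}}} = \real[r_0]{\seq[\Omega]{\inp{\BOT}}} \IMPLIES \real{\oseq}$ and $\BOT \IMPLIES \real{\oseq}$ is an $\IPL$ theorem, transitivity of implication gives $\proves{\J}{\real{\seq{\inp{\BOT}}}}$. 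No modal reasoning is done at this level: it is entirely absorbed by Proposition~\ref{prop:LHSbot}, which in turn uses $\jkax{5}$ together with the Lifting Lemma to discharge a $\BOT$ occurring under $\diabr{\cdot}$-brackets.

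Second case: $\seq{\ } = \lseq, \br[4n]{\seq[\Omega]{\ }, \oseq}$. Here the inner sequent $\seq[\Omega]{\inp{\BOT}}, \oseq$ is itself the conclusion of a $\botrule$ instance, for the smaller input context $\seq[\Omega]{\ }, \oseq$, and it has strictly smaller depth than $\seq{\ }$, so the induction hypothesis supplies a realisation $r'$ with $\proves{\J}{\real[r']{(\seq[\Omega]{\inp{\BOT}}, \oseq)}}$. By internalised necessitation (Lemma~\ref{lem:nec}) there is a ground term $t$ with $\proves{\J}{\just{t}{\real[r']{(\seq[\Omega]{\inp{\BOT}}, \oseq)}}}$. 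Set $r \colonequals r' \UNI \set{(4n, t)} \UNI r_{\lseq}$, which is a realisation function on $\seq{\inp{\BOT}}$ by Lemma~\ref{lem:facts}. Then $\real{\seq{\inp{\BOT}}} = \real{\lseq} \IMPLIES \just{t}{\real[r']{(\seq[\Omega]{\inp{\BOT}}, \oseq)}}$, whose consequent is already a theorem of $\J$, so the implication is as well.

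I expect no serious obstacle: the only genuinely modal ingredients are Proposition~\ref{prop:LHSbot} and the single application of Lemma~\ref{lem:nec} needed to cross the $\br{\cdot}$-layer. The one point that needs care — as in all the other base-case lemmas — is verifying that each realisation function obtained by unioning together partial realisations really is a legitimate realisation function; this is exactly what Lemma~\ref{lem:facts} is for, the key observation being that the reserved and fresh variables introduced on $\oseq$ and on $\lseq$ do not occur in the realised subformulas that the self-referentiality restriction constrains.
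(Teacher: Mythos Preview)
Your proposal is correct and follows essentially the same approach as the paper's proof: the paper also first decomposes the conclusion as an output context wrapping a pair $\seq[\lseq]{\inp{\BOT}}, \oseq$, handles the base case via Proposition~\ref{prop:LHSbot} together with $\BOT \IMPLIES \real{\oseq}$ and the basic realisation, and handles the inductive $\br[4n]{\cdot}$ layer by applying the inductive hypothesis followed by internalised necessitation (Lemma~\ref{lem:nec}) and extending with the basic realisation on the outer $\lseq$. The only cosmetic difference is that the paper phrases the induction on the depth of the outer (output) context rather than directly on the two grammar cases of the input context, but the resulting case split and the realisations constructed are identical.
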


\begin{proof}
First note that the conclusion sequent is of the form $\seq{\seq[\lseq]{\inp{\BOT}}, \oseq}$ for some annotated LHS sequent~$\seq[\lseq]{\inp{\BOT}}$ and annotated RHS sequent~$\oseq$.

We proceed by induction on~$\depth{\seq{\ }}$.

For the base case~$\depth{\seq{\ }}=0$, we have $\seq{\seq[\lseq]{\inp{\BOT}}, \oseq} = \seq[\lseq]{\inp{\BOT}}, \oseq$. 
%
Define
\begin{multline*}
	r \colonequals \set[(4m, {\respvar[m]})]{\text{$4m$ is an annotation in~$\seq[\lseq]{\inp{\BOT}}$ or~$\oseq$}} 
	\\ \UNI \set[(4m+1, {\pvar[m]})]{\text{$4m+1$ is an annotation in~$\seq[\lseq]{\inp{\BOT}}$ or~$\oseq$}} 
	\\ \UNI \set[(4m+2, {\ressvar[m]})]{\text{$4m+2$ is an annotation in~$\seq[\lseq]{\inp{\BOT}}$ or~$\oseq$}} 
	\\ \UNI \set[(4m+3, {\svar[m]})]{\text{$4m+3$ is an annotation in~$\seq[\lseq]{\inp{\BOT}}$ or~$\oseq$}}
\end{multline*}
so negative annotations are realised with variables as standard and the positive annotations are realised with the reserved variables.
By Proposition~\ref{prop:LHSbot}, $\proves{\J}{\real{\seq[\lseq]{\inp{\BOT}}} \IMPLIES \BOT}$ and by intuitionistic reasoning~$\proves{\J}{\BOT \IMPLIES \real{\oseq}}$.
Hence
$$
\proves{\J}
{
	\real{\seq{\inp{\BOT}}}
	=
	(\real{\seq[\lseq]{\inp{\BOT}}} \IMPLIES \real{\oseq})
}
$$

For the inductive case~$\seq{\seq[\lseq]{\inp{\BOT}}, \oseq} = \lseq, \br[4n]{\seq[\fseq']{\seq[\lseq]{\inp{\BOT}}, \oseq}}$ for some annotated LHS sequent~$\lseq$ and sequent~$\seq[\fseq']{\seq[\lseq]{\inp{\BOT}}, \oseq}$.
By the inductive hypothesis, there exists a realisation~$r'$ on~$\seq[\fseq']{\seq[\lseq]{\inp{\BOT}}, \oseq}$ such that
$$
\proves{\J}
{
	\real[r']{\seq[\fseq']{\seq[\lseq]{\inp{\BOT}}, \oseq}}
}
$$
By Lemma~\ref{lem:nec}, there exists a ground proof term~$t$ such that
$$
\proves{\J}
{
	\just{t}{\real[r']{\seq[\fseq']{\seq[\lseq]{\inp{\BOT}}, \oseq}}}
}
$$
Similarly, define
\begin{multline*}
	r \colonequals r' \UNI \set{(4n, t)} \UNI \set[(4m, {\respvar[m]})]{\text{$4m$ is an annotation in~$\lseq$}} 
	\\ \UNI \set[(4m+1, {\pvar[m]})]{\text{$4m+1$ is an annotation in~$\lseq$}} 
	\\ \UNI \set[(4m+2, {\ressvar[m]})]{\text{$4m+2$ is an annotation in~$\lseq$}} 
	\\ \UNI \set[(4m+3, {\svar[m]})]{\text{$4m+3$ is an annotation in~$\lseq$}}
\end{multline*}
and by propositional reasoning we have
$$
\proves{\J}
{
	\real{\seq{\inp{\seq[\lseq]{\inp{\BOT}}, \oseq}}} =
	(	\real{\lseq} \IMPLIES \just{t}{\real[r']{\seq[\fseq']{\seq[\lseq]{\inp{\BOT}}, \oseq}}})
}
$$
\qed
\end{proof}

\begin{restatable}[$\OR$-left rule -- bottom-up]{proposition}{propshalloworl}
\label{prop:shalloworl}
	Let~$\J \in \set{\JIK, \JIKt, \JIKfour, \JISfour}$.
	Let~$\seq[\lseq]{\ }$ be an annotated LHS context where~$\seq[\lseq]{\inp{A}}, \oseq$ and~$\seq[\lseq]{\inp{B}}, \oseq$ are the premisses and~$\seq[\lseq]{\inp{A \OR B}}, \oseq$ is the conclusion sequent of an annotated instance of the $\orl$ rule. 
	Then, given realisation functions~$r_1$ on~$\seq[\lseq]{\inp{A}}$ and~$r_2$ on~$\seq[\lseq]{\inp{B}}$, there exists a substitution~$\sub$ with $\dom{\sub} \sset \negvar{\seq[\lseq]{\inp{A \OR B}}}$ and realisation function~$r$ on~$\seq[\lseq]{\inp{A \OR B}}$ such that
	$$
	\proves{\J}{\real{\seq[\lseq]{\inp{A \OR B}}} \IMPLIES (\subst{\real[r_1]{\seq[\lseq]{\inp{A}}}} \OR \subst{\real[r_2]{\seq[\lseq]{\inp{B}}}})}
	$$
\end{restatable}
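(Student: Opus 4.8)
The plan is to prove this by induction on $\depth{\seq[\lseq]{\ }}$, in close analogy with Proposition~\ref{prop:LHSbot}. In the base case $r_1$ and $r_2$ will agree, up to a merging substitution, on the ambient LHS sequent, and the disjunction can be pulled out by purely propositional reasoning; in the inductive case the disjunction has to be pushed through one further $\diabr{\cdot}$-layer, and that is where the satisfier normality axiom $\jkax{3}$ together with the Lifting Lemma~\ref{lem:lifting} does the real work.

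For the base case, $\depth{\seq[\lseq]{\ }}=0$, so $\seq[\lseq]{\inp{A\OR B}}=\lseq',\inp{A\OR B}$ for an annotated LHS sequent $\lseq'$ and $\fm{\seq[\lseq]{\inp{A\OR B}}}=\fm{\lseq'}\AND(A\OR B)$. I would use merging, Corollary~\ref{cor:nestedmerging}(2), to produce a realisation $r'$ on $\lseq'$ and a substitution $\sub$ with $\dom{\sub}\sset\negvar{\lseq'}$ and $\proves{\J}{\real[r']{\lseq'}\IMPLIES\subst{\real[r_i]{\lseq'}}}$ for $i\in\set{1,2}$. Since the annotations of $\lseq'$, $A$ and $B$ are pairwise disjoint, $\sub\comp r_1$ and $\sub\comp r_2$ are realisations on $A$ and on $B$ by Lemma~\ref{lem:facts}(1), so $r\colonequals r'\UNI\restr[(\sub\comp r_1)]{A}\UNI\restr[(\sub\comp r_2)]{B}$ is a realisation on $\lseq',\inp{A\OR B}$ with $\real[r]{(\lseq',\inp{A\OR B})}=\real[r']{\lseq'}\AND(\subst{\real[r_1]{A}}\OR\subst{\real[r_2]{B}})$. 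Distributing $\AND$ over $\OR$ and using the two merging implications inside each disjunct yields $\proves{\J}{\real[r]{(\lseq',\inp{A\OR B})}\IMPLIES(\subst{\real[r_1]{(\lseq',\inp{A})}}\OR\subst{\real[r_2]{(\lseq',\inp{B})}})}$, with $\dom{\sub}\sset\negvar{\lseq'}\sset\negvar{\seq[\lseq]{\inp{A\OR B}}}$ as required.

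For the inductive step I would write $\seq[\lseq]{\ }=\lseq_0,\diabr[4k+3]{\seq[\Omega]{\ }}$ with $\lseq_0$ an annotated LHS sequent and $\seq[\Omega]{\ }$ an annotated LHS context of depth $\depth{\seq[\lseq]{\ }}-1$. Applying the induction hypothesis to $\seq[\Omega]{\ }$ (and $r_1,r_2$ restricted to the relevant annotations) gives a substitution $\sub_0$, $\dom{\sub_0}\sset\negvar{\seq[\Omega]{\inp{A\OR B}}}$, and a realisation $\rho$ on $\seq[\Omega]{\inp{A\OR B}}$ with $\proves{\J}{\real[\rho]{\seq[\Omega]{\inp{A\OR B}}}\IMPLIES(\subst[\sub_0]{\real[r_1]{\seq[\Omega]{\inp A}}}\OR\subst[\sub_0]{\real[r_2]{\seq[\Omega]{\inp B}}})}$. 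Merging $\sub_0\comp r_1$ and $\sub_0\comp r_2$ on $\lseq_0$ (legitimate since $\dom{\sub_0}\INT\negvar{\lseq_0}=\EMPTY$ by proper annotation) via Corollary~\ref{cor:nestedmerging}(2) gives a realisation $\rho_0$ on $\lseq_0$ and a substitution $\sub_1$, $\dom{\sub_1}\sset\negvar{\lseq_0}$; setting $\sub'\colonequals\sub_1\comp\sub_0$, $R\colonequals\real[\sub_1\comp\rho]{\seq[\Omega]{\inp{A\OR B}}}$, $U_A\colonequals\subst[\sub']{\real[r_1]{\seq[\Omega]{\inp A}}}$ and $U_B\colonequals\subst[\sub']{\real[r_2]{\seq[\Omega]{\inp B}}}$, the Substitution Lemma~\ref{lem:subst} and Lemma~\ref{lem:facts}(1) give $\proves{\J}{\real[\rho_0]{\lseq_0}\IMPLIES\subst[\sub']{\real[r_i]{\lseq_0}}}$ for $i\in\set{1,2}$ and $\proves{\J}{R\IMPLIES(U_A\OR U_B)}$. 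Now the modal content: by the Lifting Lemma~\ref{lem:lifting}(2) with $n=0$ applied to $R\IMPLIES(U_A\OR U_B)$ and the satisfier variable $\svar[k]$ (concretely, take a ground proof $s$ of this implication via Lemma~\ref{lem:nec} and feed it to $\jkax{2}$), there is a satisfier $\nu=\sappl{s}{\svar[k]}$ with $\proves{\J}{\sat{\svar[k]}{R}\IMPLIES\sat{\nu}{(U_A\OR U_B)}}$, whence $\jkax{3}$ gives $\proves{\J}{\sat{\svar[k]}{R}\IMPLIES(\sat{\nu}{U_A}\OR\sat{\nu}{U_B})}$. I would then take $\sub\colonequals\sub'\UNI\set{(\svar[k],\nu)}$, so $\dom{\sub}\sset\negvar{\lseq_0}\UNI\set{\svar[k]}\UNI\negvar{\seq[\Omega]{\inp{A\OR B}}}=\negvar{\seq[\lseq]{\inp{A\OR B}}}$, push $\set{(\svar[k],\nu)}$ through $\rho_0$ as well (still a realisation on $\lseq_0$, since $4k+3\notin\ann{\lseq_0}$), and set $r\colonequals\rho_0\UNI(\sub_1\comp\rho)\UNI\set{(4k+3,\svar[k])}$. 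Provided $\svar[k]$ occurs neither in $R$ nor in $\real[r_1]{\seq[\Omega]{\inp A}},\real[r_2]{\seq[\Omega]{\inp B}}$ — the delicate point, see below — $r$ is a realisation on $\seq[\lseq]{\inp{A\OR B}}$ with $\real[r]{\seq[\lseq]{\inp{A\OR B}}}=\real[\rho_0]{\lseq_0}\AND\sat{\svar[k]}{R}$, and $\subst{\real[r_1]{\seq[\lseq]{\inp A}}}=\subst{\real[r_1]{\lseq_0}}\AND\sat{\nu}{U_A}$ (symmetrically for $B$), so intuitionistic reasoning from the three displayed implications closes the induction.

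I expect the real obstacle to be precisely this bookkeeping around the satisfier variable $\svar[k]$. The reshaping under $\sat{\cdot}{\cdot}$ that $\jkax{3}$ forces necessarily turns $\svar[k]$ into the compound satisfier $\nu$ on the premiss side, so $\sub$ must act on $\svar[k]$; one then has to check — using the satisfier self-referentiality restriction on $r_1,r_2$ together with the invariant that every merging step and every recursive call introduces no new variables — that $\svar[k]$ occurs only at the $\DIA[4k+3]$ position it was created for, so that $\svar[k]\notin R$ (hence $r$ still meets the self-referentiality restriction) and so that the $\lseq_0$-merging implications survive the substitution $\sub$. This is exactly why the statement carries a substitution $\sub$ whose domain may contain $\svar[k]$, rather than demanding $\svar[k]$ literally, and it is the same subtlety that recurs in the cognate lemmas of the appendix; the remaining ingredients (the propositional manipulations and the two appeals to merging) are routine.
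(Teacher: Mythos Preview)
Your proposal is correct and follows essentially the same approach as the paper: induction on $\depth{\seq[\lseq]{\ }}$, with the base case handled by merging on the ambient LHS sequent plus distributivity of $\AND$ over $\OR$, and the inductive step handled by the induction hypothesis, merging on $\lseq_0$, the Lifting Lemma to push the implication under $\sat{\svar[k]}{\cdot}$, the $\jkax{3}$ axiom to split the disjunction, and finally the substitution $\svar[k]\mapsto\nu$ justified by the satisfier self-referentiality restriction. Your identification of the delicate point---that $\svar[k]$ must not occur elsewhere so that the self-referentiality restriction on $r$ is preserved and the $\lseq_0$-merging implications survive the final substitution---is exactly the subtlety the paper handles via the no-new-variables condition on merging; the only cosmetic wrinkle is that after ``pushing $\set{(\svar[k],\nu)}$ through $\rho_0$'' your definition of $r$ should use the resulting realisation rather than the original $\rho_0$, as the paper does by writing $\restr[(\sub'''\comp r'')]{\lseq_0}$.
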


\begin{proof}
	We proceed by induction on~$\depth{\seq[\lseq]{\ }}$.
	
	For the base case~$\depth{\seq[\lseq]{\ }} = 0$, we have $\seq[\lseq]{\ } = \lseq', \{ \}$ for some annotated LHS sequent~$\lseq'$.
	Note in the sequent~$\seq[\lseq]{\inp{A \OR B}}, \oseq = \lseq', \inp{A \OR B}, \oseq$, $A$ and $B$ are negative subformulas.
	Using Theorem~\ref{thm:merging} and Corollary~\ref{cor:nestedmerging} there exists a realisation~$r$ and substitution~$\sub$ such that $\real{\lseq'} \IMPLIES \subst{\real[r_i]{\lseq'}}$, $\real{A} \IMPLIES \subst{\real[r_1]{A}}$ and $\real{B} \IMPLIES \subst{\real[r_2]{B}}$ are theorems of $\J$. It follows using propositional reasoning that
	\begin{multline*}
		\proves{\J}
		{
			\real{(\lseq', \inp{A \OR B})} = (\real{\lseq'} \AND (\real{A} \OR \real{B}))
			\IMPLIES ((\real{\lseq'} \AND \real{A}) \OR (\real{\lseq'} \AND \real{B})) \\
			\IMPLIES (\subst{\real[r_1]{\lseq'}} \AND \subst{\real[r_1]{A}}) \OR (\subst{\real[r_2]{\lseq'}} \AND \subst{\real[r_2]{A}})
			= (\subst{\real[r_1]{(\lseq', A)}} \OR \subst{\real[r_2]{(\lseq', B)}})
		}
	\end{multline*}
	
	For the inductive case, $\seq[\lseq]{\ } = \lseq_0, \diabr[4i+3]{\seq[\lseq']{\ }}$ for some annotated LHS sequent~$\lseq_0$.
	By the inductive hypothesis, there exists a realisation~$r'$ on~$\seq[\lseq']{\inp{A \OR B}}$ and substitution~$\sub'$ with $\dom{\sub'} \sset \negvar{\seq[\lseq']{\inp{A \OR B}}}$ such that
	$$
		\proves{\J}
		{
			\real[r']{\seq[\lseq']{\inp{A \OR B}}} 
			\IMPLIES 
			(\subst[\sub']{\real[r_1]{\seq[\lseq']{\inp{A}}}} \OR \subst[\sub']{\real[r_2]{\seq[\lseq']{\inp{B}}}})
		}
	$$
	By Lemma~\ref{lem:facts}, $\restr[(\sub' \comp r_i)]{\lseq_0}$ is a realisation on~$\lseq_0$ for each~$i \in \set{1,2}$.
	By Corollary~\ref{cor:nestedmerging}, there exists a realisation~$r''$ on~$\lseq_0$ and substitution~$\sub''$ with $\dom{\sub''} \sset \negvar{\lseq_0}$ such that
	$$
		\proves{\J}
		{
			\real[r'']{\lseq_0} \IMPLIES \subst[\sub'']{\real[\sub' \comp r_i]{\lseq_0}} = \real[\sub'' \comp \sub' \comp r_i]{\lseq_0}
		}
	$$
	for each $i \in \set{1,2}$.
	By the Substitution Lemma~\ref{lem:subst}
	$$
	\proves{\J}
	{
		\real[\sub'' \comp r']{\seq[\lseq']{\inp{A \OR B}}} = \subst[\sub'']{\real[r']{\seq[\lseq']{\inp{A \OR B}}}} 
		\IMPLIES 
		(\subst[(\sub'' \comp \sub')]{\real[r_1]{\seq[\lseq']{\inp{A}}}} \OR \subst[(\sub \comp \sub')]{\real[r_2]{\seq[\lseq']{\inp{B}}}})
	}
	$$
	By the Lifting Lemma~\ref{lem:lifting}, there exists a satisfier term~$\mu(\svar[k])$ such that
	$$
	\proves{\J}
	{
		\sat{\svar[k]}{\real[\sub'' \comp r']{\seq[\lseq']{\inp{A \OR B}}}}
		\IMPLIES 
		\sat{\mu(\svar[k])}{(\subst[(\sub'' \comp \sub')]{\real[r_1]{\seq[\lseq']{\inp{A}}}} \OR \subst[(\sub \comp \sub')]{\real[r_2]{\seq[\lseq']{\inp{B}}}})}
	}
	$$
	Using the~$\jkax{3}$ axiom and transitivity
	$$
	\proves{\J}
	{
		\sat{\svar[k]}{\real[\sub'' \comp r']{\seq[\lseq']{\inp{A \OR B}}}}
		\IMPLIES 
		(\sat{\mu(\svar[k])}{\subst[(\sub'' \comp \sub')]{\real[r_1]{\seq[\lseq']{\inp{A}}}}}
		\OR
		\sat{\mu(\svar[k])}{\subst[(\sub'' \comp \sub')]{\real[r_2]{\seq[\lseq']{\inp{A}}}}})
	}
	$$
	Set $\sub'''$ as the substitution with $\svar[k] \mapsto \mu(\svar[k])$.
	With the diamond self-referential restriction,
	\begin{align*}
		\subst[(\sub''' \comp \sub'' \comp \sub')]{\real[r_1]{\diabr[4k+3]{\seq[\lseq']{\inp{A}}}}} &= \sat{\mu(\svar[k])}{\subst[(\sub'' \comp \sub')]{\real[r_1]{\seq[\lseq']{\inp{A}}}}}
		\\
		\subst[(\sub''' \comp \sub'' \comp \sub')]{\real[r_2]{\diabr[4k+3]{\seq[\lseq']{\inp{B}}}}} &= \sat{\mu(\svar[k])}{\subst[(\sub'' \comp \sub')]{\real[r_2]{\seq[\lseq']{\inp{B}}}}}
	\end{align*}
	The Substitution Lemma~\ref{lem:subst} gives
	$$
	\proves{\J}
	{
		\real[\sub''' \comp r'']{\lseq_0} = \subst[\sub''']{\real[r'']{\lseq_0}} \IMPLIES \subst[\sub''']{\real[\sub'' \comp \sub' \comp r_i]{\lseq_0}} = \real[\sub''' \comp \sub'' \comp \sub' \comp r_i]{\lseq_0}
	}
	$$
	Set~$\sub = \sub''' \comp \sub'' \comp \sub'$.
	Propositional reasoning gives
	\begin{multline*}
		\proves{\J}
		{
			(\real[\sub''' \comp r'']{\lseq_0} \AND \sat{\svar[k]}{\real[\sub'' \comp r']{\seq[\lseq']{\inp{A \OR B}}}})
			\\ \IMPLIES 
			((\subst[(\sub'' \comp \sub')]{\real[r_1]{\lseq_0}} \AND \sat{\mu(\svar[k])}{\subst[(\sub'' \comp \sub')]{\real[r_1]{\seq[\lseq']{\inp{A}}}}})
			\\ \OR
			(\subst[(\sub'' \comp \sub')]{\real[r_2]{\lseq_0}} \AND \real[\sub'' \comp \sub' \comp r_2]{\lseq_0} \AND \sat{\mu(\svar[k])}{\subst[(\sub'' \comp \sub')]{\real[r_2]{\seq[\lseq']{\inp{A}}}}}))
			\\
			= (\subst{\real[r_1]{(\lseq_0, \diabr[4k+3]{\seq[\lseq']{\inp{A}}})}} \OR \subst{\real[r_2]{(\lseq_0, \diabr[4k+3]{\seq[\lseq']{\inp{B}}})}})
		}
	\end{multline*}
	Set $r \colonequals \restr[(\sub''' \comp r'')]{\lseq_0} \UNI \restr[(\sub'' \comp r')]{\seq[\lseq']{\inp{A \OR B}}} \UNI \set{(4k+3, \svar[k])}$ which is a realisation function on~$\seq[\lseq]{\inp{A \OR B}}$ by Lemma~\ref{lem:facts}.
    \qed
\end{proof}


\begin{restatable}[$\OR$-left rule -- top-down]{lemma}{lemorlrule}
\label{lem:orlrule}
	Let~$\J \in \set{\JIK, \JIKt, \JIKfour, \JISfour}$.
	Let~$A, B \in \langann$.
	Let~$\seq{\inp{A}}$ and~$\seq{\inp{B}}$ be the premisses and~$\seq{\inp{A \OR B}}$ be the conclusion sequents of an annotated instance of the~$\orl$ rule.
	Then, given realisation functions~$r_1$ and~$r_2$ on~$\seq{\inp{A}}$ and~$\seq{\inp{B}}$ respectively, there exists a realisation function~$r$ and a substitution~$\sub$ such that
	$$\proves{\J}{\subst{\real[r_1]{\seq{\inp{A}}}} \IMPLIES \subst{\real[r_2]{\seq{\inp{B}}}} \IMPLIES \real{\seq{\inp{A \OR B}}}}$$
\end{restatable}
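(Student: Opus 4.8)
The plan is to prove Lemma~\ref{lem:orlrule} by induction on the $\br{\cdot}$-nesting of the context $\seq{\ }$, following the pattern already used for the other top-down left-rule lemmas (notably Lemma~\ref{lem:botrule}) and for the right rules (Lemma~\ref{lem:rightrules}): the structure \emph{around} the hole, including all $\diabr{\cdot}$-brackets, is dispatched by the bottom-up Proposition~\ref{prop:shalloworl}, while the $\br{\cdot}$-brackets of the context are peeled off one at a time using the Lifting Lemma~\ref{lem:lifting}. Throughout, note that in $\real{\seq{\inp{A \OR B}}}$ the subformulas $A$ and $B$ occur negatively while the output sequent $\oseq$ of $\seq{\ }$ occurs positively.

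First I would treat the case where the hole is not enclosed in any $\br{\cdot}$-bracket, so $\seq{\inp{A \OR B}} = \seq[\Omega]{\inp{A \OR B}}, \oseq$ for an LHS context $\seq[\Omega]{\ }$. Restricting $r_1$ and $r_2$ to the LHS part and applying Proposition~\ref{prop:shalloworl} to $\seq[\Omega]{\ }$ gives a realisation $r'$ on $\seq[\Omega]{\inp{A \OR B}}$ and a substitution $\sub$, $\dom{\sub} \sset \negvar{\seq[\Omega]{\inp{A \OR B}}}$, with
$$\proves{\J}{\real[r']{\seq[\Omega]{\inp{A \OR B}}} \IMPLIES (\subst{\real[r_1]{\seq[\Omega]{\inp{A}}}} \OR \subst{\real[r_2]{\seq[\Omega]{\inp{B}}}})}.$$
It then remains to reconcile the two realisations of the shared output $\oseq$. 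Since any realisation maps every negative (fixed-variable) annotation to the same designated variable, $r_1$ and $r_2$ already agree on the negative modalities of $\oseq$; on its positive modalities I would let $r$ join their values, using $\jsum{}{}$ on proof annotations and $\suni{}{}$ on satisfier annotations. By $\jsumax$ and $\juniax$ together with monotonicity of $\IPL$ under positive contexts this gives both $\proves{\J}{\subst{\real[r_1]{\oseq}} \IMPLIES \real{\oseq}}$ and $\proves{\J}{\subst{\real[r_2]{\oseq}} \IMPLIES \real{\oseq}}$. Letting $r$ be $r'$ together with this assignment on $\oseq$ (a realisation function on $\seq{\inp{A \OR B}}$ by Lemma~\ref{lem:facts}), intuitionistic propositional reasoning finishes this case: from $\subst{\real[r_1]{\seq{\inp{A}}}}$, $\subst{\real[r_2]{\seq{\inp{B}}}}$ and $\real{\seq[\Omega]{\inp{A \OR B}}}$, Proposition~\ref{prop:shalloworl} supplies one of the two disjuncts, and discharging it against the matching premiss implication yields $\subst{\real[r_i]{\oseq}}$, hence $\real{\oseq}$, hence $\real{\seq{\inp{A \OR B}}}$.

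In the remaining case $\seq{\ } = \lseq, \br[4n]{\seq[\fseq']{\ }}$, I would apply the induction hypothesis to $\seq[\fseq']{\ }$ -- using the Substitution Lemma~\ref{lem:subst} on the premiss implications as in the main proof -- to obtain a realisation $r'$ with $\proves{\J}{\subst{\real[r_1]{\seq[\fseq']{\inp{A}}}} \IMPLIES \subst{\real[r_2]{\seq[\fseq']{\inp{B}}}} \IMPLIES \real[r']{\seq[\fseq']{\inp{A \OR B}}}}$, then push this implication under the box annotation $4n$ with the Lifting Lemma~\ref{lem:lifting} (producing a proof term $t$ to serve as $r(4n)$), reconcile the outer $\lseq$ via Corollary~\ref{cor:nestedmerging}, and conclude by propositional reasoning -- step for step as in the inductive case of Lemma~\ref{lem:rightrules}. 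The realisation on $\seq{\inp{A \OR B}}$ is reassembled using Lemma~\ref{lem:facts}.

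The step I expect to be the main obstacle is the reconciliation of the shared output $\oseq$. Because $\orl$ is a branching (additive) rule, a derivation uses only one of its premisses along any given branch, so the conclusion needs a \emph{single} realisation of the positive subformula $\oseq$ that is entailed by \emph{either} premiss's realisation of it. This is a disjunctive form of merging, which is not an instance of the merging Theorem~\ref{thm:merging} -- whose clause for positive subformulas is conjunctive, as needed for $\andr$ -- and instead has to be obtained directly from the $\jsumax$ and $\juniax$ axioms by joining the two candidate terms with $\jsum{}{}$, respectively $\suni{}{}$. Everything else is a routine intuitionistic rerun of the shallow-to-deep realisation pattern of~\cite{goetschi_realization_2012}.
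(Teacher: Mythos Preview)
Your overall structure matches the paper's: induction on the $\br{\cdot}$-depth of $\seq{\ }$, base case combining Proposition~\ref{prop:shalloworl} on the LHS with a merge of the two realisations of the shared output $\oseq$, and inductive case identical to that of $\andr$ in Lemma~\ref{lem:andrrule}.

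The only divergence is how you merge $\oseq$ in the base case. The paper simply invokes Corollary~\ref{cor:nestedmerging}, clause~1, and then finishes by ``propositional reasoning''. You are right that clause~1 \emph{as stated}---the conjunctive $\subst{\real[r_1]{\oseq}} \IMPLIES \subst{\real[r_2]{\oseq}} \IMPLIES \real{\oseq}$---is not literally enough: after Proposition~\ref{prop:shalloworl} hands you a disjunction, you need each $\subst{\real[r_i]{\oseq}} \IMPLIES \real{\oseq}$ separately. But this stronger form is exactly what Fitting's merging construction~\cite{fitting_realizations_2009} delivers for positive subformulas (the merged realisation is built from $\jsum{}{}$ and $\suni{}{}$ precisely so that each input implies it), and the paper is tacitly relying on that. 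Your direct $\jsum{}{}/\suni{}{}$ construction is therefore not a new device but a local re-derivation of Fitting's merging; note, though, that a bare pointwise join of annotation values together with ``monotonicity under positive contexts'' is too naive once $\oseq$ has nested positive modalities---you need the full recursive argument, lifting the inner $\jsumax$/$\juniax$ implications under the outer terms. So the obstacle you flag is real, but it is a gap between what Theorem~\ref{thm:merging} states and what its underlying construction actually yields, rather than a missing ingredient of the proof.
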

\begin{proof}\setcounter{case}{0}
    Note that the rule is equivalent formulated as
    $$
    \vliinf{\orl}{}
    {
        \seq{\seq[\lseq]{\inp{A \OR B}}, \oseq}
    }
    {
        \seq{\seq[\lseq]{\inp{A}}, \oseq}
    }
    {
        \seq{\seq[\lseq]{\inp{B}}, \oseq}
    }
    $$
    for some annotated LHS context~$\seq[\lseq]{\ }$ and RHS sequent~$\oseq$.
    
	We proceed by induction on~$\depth{\seq{\ }}$.
	
	For the base case~$\depth{\seq{\ }}=0$, we have $\seq{\seq[\lseq]{\ }, \oseq } = \seq[\lseq]{\ }, \oseq$.
	%
	By Corollary~\ref{cor:nestedmerging}, there exists a realisation $r'$ on $\oseq$ and a substitution~$\sub'$ with~$\dom{\sub'} \sset \negvar{\oseq}$ such that
	$$
		\proves{\J}
		{
			\subst[\sub']{\real[r_1]{\oseq}}
			\IMPLIES
			\subst[\sub']{\real[r_2]{\oseq}}
			\IMPLIES
			\real[r']{\oseq}
		}
	$$
	By Lemma~\ref{lem:facts}, $\restr[(\sub' \comp r_1)]{\seq[\lseq]{\inp{A}}}$ and $\restr[(\sub' \comp r_2)]{\seq[\lseq]{\inp{B}}}$ are realisation functions.
	By Proposition~\ref{prop:shalloworl}, there exists a realisation~$r''$ on $\seq[\lseq]{\inp{A \OR B}}$ and a substitution~$\sub''$ with $\dom{\sub''} \sset \negvar{\seq[\lseq]{\inp{A \OR B}}}$ such that
	$$
		\proves{\J}
		{
			\real[r'']{\seq[\lseq]{\inp{A \OR B}}} \IMPLIES (\subst[\sub'']{\real[\sub' \comp r_1]{\seq[\lseq]{\inp{A}}}} \OR \subst[\sub'']{\real[\sub' \comp r_2]{\seq[\lseq]{\inp{B}}}})
		}
	$$
	By the Substitution Lemma~\ref{lem:subst},
	$$
	\proves{\J}
	{
		\subst[(\sub'' \comp \sub')]{\real[r_1]{\oseq}}
		\IMPLIES
		\subst[(\sub'' \comp \sub')]{\real[r_2]{\oseq}}
		\IMPLIES
		\subst[\sub'']{\real[r']{\oseq}}
	}
	$$
	Equivalently by Lemma~\ref{lem:facts}
	$$
	\proves{\J}
	{
		\subst[(\sub'' \comp \sub')]{\real[r_1]{\oseq}}
		\IMPLIES
		\subst[(\sub'' \comp \sub')]{\real[r_2]{\oseq}}
		\IMPLIES
		\real[\sub'' \comp r']{\oseq}
	}
	$$
	and by propositional reasoning
	\begin{multline*}
		\proves{\J}
		{
			(\subst[(\sub'' \comp \sub')]{\real[r_2]{\seq[\lseq]{\inp{A}}}} \IMPLIES  \subst[(\sub'' \comp \sub')]{\real[r_1]{\oseq}})
			\\ \IMPLIES
			(\subst[(\sub'' \comp \sub')]{\real[r_2]{\seq[\lseq]{\inp{B}}}} \IMPLIES \subst[(\sub'' \comp \sub')]{\real[r_2]{\oseq}})
			\\ \IMPLIES
			(\real[r'']{\seq[\lseq]{\inp{A \OR B}}} \IMPLIES \real[\sub'' \comp r']{\oseq})
		}
	\end{multline*}
	Set $r \colonequals r' \UNI (\sub'' \comp r')$ and $\sub \colonequals \sub'' \comp \sub'$.
	
	The inductive case is the same as the inductive case for $\andr$ in Lemma~\ref{lem:andrrule}.
    \qed
\end{proof}

\begin{restatable}[Other left rules -- bottom-up]{proposition}{propshallowlrules}\label{prop:shallowlrules}
	Let~$\J \in \set{\JIK, \JIKt, \JIKfour, \JISfour}$.
	Let~$\seq[\lseq]{\ }$ be an annotated LHS context where~$\seq[\lseq]{\Omega'}, \oseq$ is the premiss and~$\seq[\lseq]{\Omega}, \oseq$ is the conclusion sequent of an annotated instance of a rule~$\rle \in \set{\andl, \boxl, \dial, \cont, \tl, \fourl}$ where~$\Omega'$ and~$\Omega$ are LHS sequents. 
	Then, given realisation function~$r'$ on~$\seq[\lseq]{\Omega'}$, there exists a substitution~$\sub$ and realisation function~$r$ on~$\seq[\lseq]{\Omega}$ such that:
    $\proves{\J}{\real{\seq[\lseq]{\Omega}} \IMPLIES \subst{\real[r']{\seq[\lseq]{\Omega'}} }}$
    with $\J$ appropriately containing the justification axioms corresponding to the modal rules considered.
\end{restatable}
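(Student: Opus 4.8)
The plan is to proceed by induction on $\depth{\seq[\lseq]{\ }}$, following the template of Proposition~\ref{prop:shalloworl} and Lemma~\ref{lem:rightrules}: a case analysis on $\rle$ in the base case $\depth{\seq[\lseq]{\ }}=0$, and a single uniform lifting step through the outermost $\diabr{\cdot}$ in the inductive case. The RHS sequent $\oseq$ plays no role in the asserted implication and can be ignored throughout. In the base case write $\seq[\lseq]{\ }=\lseq_0,\{\ \}$. For $\rle=\andl$ (conclusion $\inp{A\AND B}$, premiss $\inp{A},\inp{B}$) and $\rle=\dial$ (conclusion $\inp{\DIA[4k+3]A}$, premiss $\diabr[4k+3]{\inp{A}}$), $\real[r']{\seq[\lseq]{\Omega}}$ and $\real[r']{\seq[\lseq]{\Omega'}}$ are propositionally equivalent --- indeed literally equal for $\dial$ since $\fm{\inp{\DIA[4k+3]A}}=\fm{\diabr[4k+3]{\inp A}}$ --- so one takes $\sub\colonequals\Idnty$ and $r\colonequals r'$ (adjoining $(4k+3,\svar[k])$ in the $\dial$ case, legitimate by Lemma~\ref{lem:facts} because the premiss already satisfies the satisfier self-referentiality restriction). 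For $\rle=\tl$, available when $\jtaxb\in\J$, the conclusion is $\inp{\BOX[4i+1]A}$ and the premiss $\inp{A}$, and one sets $r\colonequals r'\UNI\{(4i+1,\pvar[i])\}$, $\sub\colonequals\Idnty$, and invokes the $\jtaxb$ instance $\proves{\J}{\just{\pvar[i]}{\real A}\IMPLIES\real A}$.

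The genuinely modal base cases are $\rle=\boxl$, which here denotes $\boxldia$ --- conclusion $\lseq_0,\inp{\BOX[4j+1]A},\diabr[4k+3]{\lseq}$, premiss $\lseq_0,\diabr[4k+3]{\inp A,\lseq}$ --- and $\rle=\fourl$, denoting $\fourldia$; in both, a proof-term assertion in the conclusion must be absorbed into the satisfier assertion coming from the premiss. For $\boxl$ I would lift the propositional theorem $\real A\IMPLIES(\real{\lseq}\IMPLIES\real A\AND\real{\lseq})$ along $\pvar[j]$ via Lemma~\ref{lem:lifting}(1), getting a proof term $t(\pvar[j])$ with $\proves{\J}{\just{\pvar[j]}{\real A}\IMPLIES\just{t(\pvar[j])}{(\real{\lseq}\IMPLIES\real A\AND\real{\lseq})}}$, then apply axiom $\jkax{2}$ with $\svar[k]$ to obtain $\proves{\J}{\just{t(\pvar[j])}{(\real{\lseq}\IMPLIES\real A\AND\real{\lseq})}\IMPLIES(\sat{\svar[k]}{\real{\lseq}}\IMPLIES\sat{\sappl{t(\pvar[j])}{\svar[k]}}{(\real A\AND\real{\lseq})})}$. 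Composing, and using that $\svar[k]$ occurs in neither $\real[r']{A}$ nor $\real[r']{\lseq}$ (the premiss's self-referentiality restriction for $\diabr[4k+3]{\inp A,\lseq}$), the substitution $\sub:\svar[k]\mapsto\sappl{t(\pvar[j])}{\svar[k]}$ turns $\subst{\real[r']{\seq[\lseq]{\Omega'}}}$ into exactly what is proved, and $r\colonequals r'\UNI\{(4j+1,\pvar[j])\}$ is a realisation on the conclusion by Lemma~\ref{lem:facts}. The $\fourl$ case is the same except one first rewrites $\just{\pvar[k]}{\real A}$ to $\just{\jbang{\pvar[k]}}{\just{\pvar[k]}{\real A}}$ with $\jfaxb$ (so $\J\in\set{\JIKfour,\JISfour}$) and lifts along $\jbang{\pvar[k]}$, matching the fact that the modal rule needs the $\fax$ axiom.

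The $\cont$ case is where the real work sits. The premiss is $\seq[\lseq]{\lseq_1,\lseq_2}$ and the conclusion $\seq[\lseq]{\lseq_3}$, where $\lseq_1,\lseq_2,\lseq_3$ are copies of a single unannotated sequent carrying pairwise disjoint indices. I would push the restrictions $\restr[r']{\lseq_1}$ and $\restr[r']{\lseq_2}$ forward along the canonical index-bijections $\lseq_1\to\lseq_3$ and $\lseq_2\to\lseq_3$ to realisations $\hat{r}_1,\hat{r}_2$ on $\lseq_3$, apply Corollary~\ref{cor:nestedmerging}(2) to obtain a realisation $r$ on $\lseq_3$ and a substitution $\sub_0$ with $\dom{\sub_0}\sset\negvar{\lseq_3}$ and $\proves{\J}{\real[r]{\lseq_3}\IMPLIES\subst[\sub_0]{\real[\hat{r}_i]{\lseq_3}}}$ for $i=1,2$, and then pull the substitutions back along the bijections: $\subst[\sub_0]{\real[\hat{r}_i]{\lseq_3}}=\subst[(\sub_0\comp\pi_i)]{\real[r']{\lseq_i}}$, where $\pi_i$ is the induced renaming of negative variables. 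As $\lseq_1,\lseq_2$ share no indices, $\sub_0\comp\pi_1$ and $\sub_0\comp\pi_2$ amalgamate into one substitution $\sub$ that is the identity on $\negvar{\lseq_0}$, and conjoining the two implications propositionally (with $\restr[r]{\lseq_0}=\restr[r']{\lseq_0}$) yields $\proves{\J}{\real{(\lseq_0,\lseq_3)}\IMPLIES\subst{\real[r']{(\lseq_0,\lseq_1,\lseq_2)}}}$. This is just the LHS-sequent reading of the branching merge already carried out for $\andr$ in Lemma~\ref{lem:andrrule} and for $\orl$ in Proposition~\ref{prop:shalloworl}. I expect the main obstacle of the proof to be precisely this kind of bookkeeping --- making the interaction of the renamings $\pi_i$ with $\sub_0$ rigorous so that substitution domains line up with the $\negvar{}$ and ``no new variables'' constraints, and checking that no variable leaks from inside a $\diabr{\cdot}$ into the surrounding $\lseq_0$ --- which recurs, in milder form, in the $\boxl$, $\fourl$ and inductive cases.

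For the inductive case $\seq[\lseq]{\ }=\lseq_0,\diabr[4i+3]{\lseq_1,\seq[\lseq']{\ }}$, restrict $r'$ to $\seq[\lseq']{\Omega'}$, apply the induction hypothesis to obtain a realisation $r''$ on $\seq[\lseq']{\Omega}$ and a substitution $\sub'$ with $\proves{\J}{\real[r'']{\seq[\lseq']{\Omega}}\IMPLIES\subst[\sub']{\real[r']{\seq[\lseq']{\Omega'}}}}$, then lift its propositional consequence $\real{\lseq_1}\AND\real[r'']{\seq[\lseq']{\Omega}}\IMPLIES\real{\lseq_1}\AND\subst[\sub']{\real[r']{\seq[\lseq']{\Omega'}}}$ along the satisfier $\svar[i]$ using Lemma~\ref{lem:lifting}(2) to obtain a satisfier term $\mu(\svar[i])$, and fold $\svar[i]\mapsto\mu(\svar[i])$ together with $\sub'$ into the final substitution --- exactly as in the inductive step of Proposition~\ref{prop:shalloworl} --- using the Substitution Lemma~\ref{lem:subst} to propagate $\sub'$ through the partially constructed realisation and Lemma~\ref{lem:facts} to verify that the resulting $r\colonequals\restr[r']{\lseq_0}\UNI\cdots\UNI\{(4i+3,\svar[i])\}$ is a realisation function on $\seq[\lseq]{\Omega}$. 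No new merging is needed here, since each of $\andl,\boxl,\dial,\cont,\tl,\fourl$ has a single premiss, so only one induction-hypothesis realisation enters.
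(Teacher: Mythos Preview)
Your proposal is correct and follows essentially the same approach as the paper: induction on $\depth{\seq[\lseq]{\ }}$, the same case split in the base case (with $\andl,\dial$ trivial, $\tl$ via $\jtaxb$, $\boxl$ and $\fourl$ via lifting plus $\jkax{2}$ and a substitution $\svar[k]\mapsto\sappl{t}{\svar[k]}$, and $\cont$ via index-renaming then merging), and the same satisfier-lifting in the inductive step. The only slip is that in $\boxl$, $\fourl$ and the inductive case your $r$ should use $\restr[(\sub\comp r')]{\lseq_0}$ rather than $\restr[r']{\lseq_0}$, since $\svar[k]$ may occur in $\real[r']{\lseq_0}$ and you need $\real[r]{\lseq_0}=\subst{\real[r']{\lseq_0}}$; this is exactly the bookkeeping you already flagged.
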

\begin{proof}\setcounter{case}{0}
	We proceed by induction on~$\depth{\seq[\lseq]{\ }}$.
	For the base case~$\depth{\seq[\lseq]{\ }} = 0$, we have $\seq[\lseq]{\ } = \lseq_0, \{ \ \}$ for some annotated LHS sequent $\lseq_0$.
	We consider the following cases:
	\begin{case}
		When $\rle = \boxl$.
		$\Omega = \lseq_0, \inp{\BOX[4j+1] A}, \diabr[4i+3]{\lseq_1}$ and $\Omega' = \lseq_0, \diabr[4i+3]{\inp{A}, \lseq_1}$.
		First note
		$$\proves{\J}
		{
			\real[r']{A} \IMPLIES \real[r']{\lseq_1}
			\IMPLIES
			(\real[r']{A} \AND \real[r']{\lseq_1})
		}
		$$
		By the Lifting Lemma~\ref{lem:lifting}, there exists a proof term~$t(\pvar[j])$ such that
		$$\proves{\J}
		{
			\just{\pvar[j]}{A}
			\IMPLIES
			\just{t(\pvar[j])}{
				(\real[r']{\lseq_1}
				\IMPLIES
				(\real[r']{A} \AND \real[r']{\lseq_1}))
			}
		}		
		$$
		Using the $\jkax{2}$ axiom and transitivity, we achieve
		$$\proves{\J}
		{
			\just{\pvar[j]}{\real[r']{A}}
			\IMPLIES
			\sat{\svar[i]}{\real[r']{\lseq_1}}
			\IMPLIES
			\sat{(\sappl{t(\pvar[j])}{\svar[i]})}{(\real[r']{A} \AND \real[r']{\lseq_1})}
		}
		$$
		By propositional reasoning
		$$\proves{\J}
		{
			(\just{\pvar[j]}{\real[r']{A}}
			\AND
			\sat{\svar[i]}{\real[r']{\lseq_1}})
			\IMPLIES
			\sat{(\sappl{t(\pvar[j])}{\svar[i]})}{(\real[r']{A} \AND \real[r']{\lseq_1})}
		}
		$$
		Set a substitution~$\sub$ with~$\subst{\svar[i]} \colonequals \sappl{t(\pvar[j])}{\svar[i]}$.
		Set~$r \colonequals \restr[(\sub \comp r')]{\lseq_0} \UNI \restr[r']{A} \UNI \restr[r']{\lseq_1} \UNI \set{(4j+1, \pvar[j]), (4i+3, \svar[i])}$ which is a realisation function on~$\Omega$ by Lemma~\ref{lem:facts}.
		We have
		$$\proves{\J}
		{
			\real{\Omega} = 
			(\real{\lseq_0}
			\AND
			\just{\pvar[j]}{\real{A}}
			\AND
			\sat{\svar[i]}{\real{\lseq_1}})
			\IMPLIES
			(\real[\sub \comp r']{\lseq_0}
			\AND
			\sat{(\sappl{t(\pvar[j])}{\svar[i]})}{(\real[r']{A} \AND \real[r']{\lseq_1})})
			= \subst{\real[r']{\Omega'}}
		}		
		$$
	\end{case}
	\begin{case}
		When~$\rle \in \set{\andl, \dial}$.
		We note $\real[r']{\Omega} = \real[r']{\Omega}$ so set~$r \colonequals r'$ and $\sub \colonequals \Idnty$.
	\end{case}
        For convenience in the further cases, we assume~$\seq[\lseq]{\ } = \{ \ \}$.
	\begin{case}
		When~$\rle = \cont$.
		$\Omega = \lseq_3$, and $\Omega' = \lseq_1, \lseq_2$ for distinct annotated LHS sequents~$\lseq_1, \lseq_2, \lseq_3$ for the same unannotated LHS sequent~$\lseq$.
		Set~$X = \set[(i,j,k)]{\text{Annotations $i,j,k$ occur in the same position of $\lseq_1, \lseq_2, \lseq_3$ respectively}}$.
		Note as $\seq[\lseq]{\lseq_1, \lseq_2}, \oseq$ and~ $\seq[\lseq]{\lseq_3}, \oseq$ are properly annotated, $(i, j, k) \in X$ only if $i,j,k$ are all odd or all even.
		For each: $(4i+1, 4j+1, 4k+1) \in X$, set $\subst[\sub']{\pvar[i]} = \subst[\sub']{\pvar[j]} \colonequals \pvar[k]$; $(4i+3, 4j+3, 4k+3) \in X$, set $\subst[\sub']{\svar[i]} = \subst[\sub']{\svar[j]} \colonequals \svar[k]$.
		
		Now, for each $(i, j, k) \in X$, set~$r'_1(k) = \sub' \comp r'(i)$ and $r'_2(k) = \sub' \comp r'(j)$. 
		We have two realisations $r'_1$ and $r'_2$ on $\lseq_3$ with $\real[r'_1]{\lseq_3} = \real[\sub' \comp r]{\lseq_1}$ and $\real[r'_2]{\lseq_3} = \real[\sub' \comp r]{\lseq_2}$.
		Using Corollary~\ref{cor:nestedmerging}, there exists a realisation~$\hat{r}$ on~$\lseq_3$ and a substitution~$\hat{\sub}$ with $\dom{\hat{\sub}} \sset \negvar{\lseq_3}$ and the no new variable condition such that $\real[\hat{r}]{\lseq_3} \IMPLIES \subst[\hat{\sub}]{ \real[r'_1]{\lseq_3}} = \real[\hat{\sub} \comp \sub' \comp r]{A_1}$ and $\real[\hat{r}]{\lseq_3} \IMPLIES \subst[\hat{\sub}]{ \real[r'_2]{\lseq_3}} = \real[\hat{\sub} \comp \sub' \comp r]{\lseq_2}$ are theorems of~$\J$.
		Set $r \colonequals \hat{r} \UNI \restr[r']{\lseq_0}$ which is a realisation function on $\Omega$ by Lemma~\ref{lem:facts}.
		Set  $\sub = \hat{\sub} \comp \sub'$. 
		By propostional reasoning we achieve
		$$\proves{\J}{\real{{\lseq_3}} \IMPLIES \subst{\real[r']{({\lseq_1}, {\lseq_2})}}}$$
	\end{case}
	\begin{case}
		When $\rle = \tl$.
		$\Omega = \inp{\BOX[4i+1] A}$ and $\Omega' = \inp{A}$.
		Set $r \colonequals r' \UNI \set{(4i+1, \pvar[i])}$ which is a realisation function on~$\Omega$ by Lemma~\ref{lem:facts} and~$\sub \colonequals \id$ and we have
		$$\proves{\ILP}
		{
			\real{\Omega} = \just{\pvar[i]}{\real{A}}
			\IMPLIES
			\real[r']{A} = \subst{\real[r']{\Omega'}}
		}
		$$
		using the $\jtaxb$ axiom.
	\end{case}
	\begin{case}
		When~$\rle = \fourl$.
		$\Omega = \inp{\BOX[4k+1] A}, \diabr[4i+3]{\lseq_1}$ and $\Omega' = \diabr[4i+3]{\inp{\BOX[4k+1] A, \lseq_1}}$.
		Note
		$$\proves{\ILP}
		{
			\just{\pvar[k]}{\real[r']{A}}
			\IMPLIES
			\real[r']{\lseq_1}
			\IMPLIES
			(
			\just{\pvar[k]}{\real[r']{A}}
			\AND
			\real[r']{\lseq_1}
			)
		}
		$$
		By the Lifting Lemma~\ref{lem:lifting}, there exists a proof term~$t(\jbang{\pvar[k]})$ such that
		$$\proves{\ILP}
		{
			\just{\jbang{\pvar[k]}}{\just{\pvar[k]}{\real[r']{A}}}
			\IMPLIES
			\just
			{t(\jbang{\pvar[k]})}
			{(\real[r']{\lseq_1}
				\IMPLIES
				(
				\just{\pvar[k]}{\real[r']{A}}
				\AND
				\real[r']{\lseq_1}
				))}
		}
		$$
		Using the $\jkax{2}$ axiom and transitivity, we achieve
		$$\proves{\ILP}
		{
			\just{\jbang{\pvar[k]}}{\just{\pvar[k]}{\real[r']{A}}}
			\IMPLIES
			\sat{\svar[i]}{\real[r']{\lseq_1}}
			\IMPLIES
			(
			\sat{(\sappl{t(\jbang{\pvar[k]})}{\svar[i]})}
			{
				\just{\pvar[k]}{\real[r']{A}}
				\AND
				\real[r']{\lseq_1}
			}
			)
		}
		$$
		Using the $\jfaxb$ axiom~$\just{\pvar[i]}{A} \IMPLIES \just{\jbang{\pvar[i]}}{\just{\pvar[i]}{A}}$ and transitivity
		$$\proves{\ILP}
		{
			\just{\pvar[k]}{\real[r']{A}}
			\IMPLIES
			\sat{\svar[i]}{\real[r']{\lseq_1}}
			\IMPLIES
			(
			\sat{(\sappl{t(\jbang{\pvar[k]})}{\svar[i]})}
			{
				\just{\pvar[k]}{\real[r']{A}}
				\AND
				\real[r']{\lseq_1}
			}
			)
		}$$
		By propositional reasoning
		$$\proves{\ILP}
		{
			(\just{\pvar[k]}{\real[r']{A}}
			\AND
			\sat{\svar[i]}{\real[r']{\lseq_1}})
			\IMPLIES
			(
			\sat{(\sappl{t(\jbang{\pvar[k]})}{\svar[i]})}
			{
				\just{\pvar[k]}{\real[r']{A}}
				\AND
				\real[r']{\lseq_1}
			}
			)
		}$$
		Set~$r \colonequals r'$ and $\sub$ with $\subst{\svar[i]} \colonequals \sappl{t(\jbang{\pvar[k]})}{\svar[i]}$, to achieve
		$$\proves{\ILP}
		{	
			\real{\Omega} =
			(\just{\pvar[k]}{\real[r']{A}}
			\AND
			\sat{\svar[i]}{\real[r']{\lseq_1}})
			\IMPLIES
			(
			\sat{(\sappl{t(\jbang{\pvar[k]})}{\svar[i]})}
			{
				\just{\pvar[k]}{\real[r']{A}}
				\AND
				\real[r']{\lseq_1}
			}
			)
			= \subst{\real[r']{\Omega'}}
		}$$
	\end{case}

	For the inductive case~$\seq[\lseq]{\ } = \lseq_0, \diabr[4k+3]{\lseq_1, \seq[\lseq_2]{\ }}$.
	By the inductive hypothesis. there exists a realisation~$\hat{r}$ and substitution~$\sub'$ with $\dom{\sub'} \sset \negvar{\seq[\lseq_2]{\Omega'}}$ such that
	$$\proves{\J}{\real[\hat{r}]{\seq[\lseq_2]{\Omega}} \IMPLIES \subst{\real[r']{\seq[\lseq_2]{\Omega'}}}}$$
	By propositional reasoning
	$$\proves{\J}{(\real[\sub' \comp r']{\lseq_1} \AND  \real[\hat{r}]{\seq[\lseq_2]{\Omega}}) \IMPLIES (\subst[\sub']{\real[r']{\lseq_1}} \AND \subst[\sub']{\real[r']{\seq[\lseq_2]{\Omega'}}})}$$
	By the Lifting Lemma~\ref{lem:lifting}, there exists a satisfier term~$\mu(\svar[k])$ such that
	$$\proves{\J}{\sat{\svar[k]}{(\real[\sub' \comp r]{\lseq_1} \AND \real[\hat{r}]{\seq[\lseq_2]{\Omega}})} \IMPLIES \sat{\mu(\svar[k])}{(\subst[\sub']{\real{\lseq_1}} \AND \subst[\sub']{\real{\seq[\lseq_2]{\Omega'}}})}}$$
	Set $\sub''$ to be the substitution with~$\svar[k] \mapsto \mu(\svar[k])$.
	Set $\sub \colonequals \sub'' \comp \sub'$.
	With the satisfier self-referentiality restriction, we have
	$$
		\sat{(\sub \comp r')(4k+3)}{(\subst{\real{\lseq_1}} \AND \subst{\real{\seq[\lseq_2]{\Omega'}}})}
		=
		\sat{\mu(\svar[k])}{(\subst[\sub']{\real{\lseq_1}} \AND \subst[\sub']{\real{\seq[\lseq_2]{\Omega'}}})}
	$$
	and further propositional reasoning gives
	$$\proves{\J}{(\real[\sub \comp r']{\lseq_0} \AND \sat{\svar[k]}{(\real[\sub' \comp r']{\lseq_1} \AND \real[\hat{r}]{\seq[\lseq_2]{\Omega}})}) 
		\IMPLIES 
	(\real[\sub \comp r']{\lseq_0} \AND \sat{(\sub \comp r')(4k+3)}{(\subst{\real[r']{\lseq_1}} \AND \subst{\real[r']{\seq[\lseq_2]{\Omega'}}})})}$$
	Set $r \colonequals \restr[(\sub \circ r')]{\lseq_0} \UNI \restr[(\sub' \circ r')]{\lseq_1} \UNI \hat{r} \UNI \set{(4k+3, \svar[k])}$.
	By Lemma~\ref{lem:facts}, this is a realisation function on~$\seq[\lseq]{\Omega }$, and we can conclude
	$$\proves{\J}{\real{\seq[\lseq]{\Omega }} \IMPLIES \subst{\real[r']{\seq[\lseq]{\Omega '}}}}$$
    \qed
\end{proof}
\begin{restatable}[Other left rules -- top-down]{lemma}{lemleftrules}
\label{lem:leftrules}
	Let $\J \in \set{\JIK, \JIKt, \JIKfour, \JISfour}$.
	Let~$\seq{\ }$ be an annotated context where~$\seq{\Omega'}$ is the premiss and~$\seq{\Omega}$ is the conclusion sequent of an annotated instance of a rule~$\rle \in \set{\andl, \boxl, \dial, \cont, \tl, \fourl}$. 
	Then, given realisation function~$r'$ on~$\seq{\Omega'}$, there exists a realisation function~$r$ on~$\fseq$ and a substitution~$\sub$ such that:
    $\proves{\J}{\subst{\real[r']{\seq{\Omega'}}} \IMPLIES \real{\seq{\Omega}}}$
    with $\J$ adequately containing the justification axioms corresponding to the modal rule considered.
\end{restatable}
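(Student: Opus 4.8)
The plan is to adapt the argument of Lemma~\ref{lem:orlrule}, with the simplification that every rule in $\set{\andl, \boxl, \dial, \cont, \tl, \fourl}$ is unary, so no merging of side sequents is required. As in Lemma~\ref{lem:realblacklam}, the principal formula of such a rule sits in an LHS (hence negative) position of $\seq{}$, so I first reformulate the rule instance as acting on a local sub-sequent $\seq[\lseq]{\Omega}, \oseq$ (conclusion) and $\seq[\lseq]{\Omega'}, \oseq$ (premiss), where $\seq[\lseq]{}$ is an annotated LHS context absorbing any $\diabr{\cdot}$-nesting around the principal formula and $\oseq$ is the RHS sequent at that level; this sub-sequent is embedded in an outer annotated context built from $\br{\cdot}$-brackets and LHS material, and the proof proceeds by induction on $\depth{\seq{}}$.

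For the base case $\depth{\seq{}} = 0$, the conclusion is $\seq[\lseq]{\Omega}, \oseq$ and the premiss is $\seq[\lseq]{\Omega'}, \oseq$. Splitting $r'$ as $\restr[r']{\seq[\lseq]{\Omega'}} \UNI \restr[r']{\oseq}$ and feeding the first piece to Proposition~\ref{prop:shallowlrules} gives a realisation $r''$ on $\seq[\lseq]{\Omega}$ and a substitution $\sub$ with $\dom{\sub} \sset \negvar{\seq[\lseq]{\Omega'}}$ such that $\proves{\J}{\real[r'']{\seq[\lseq]{\Omega}} \IMPLIES \subst{\real[r']{\seq[\lseq]{\Omega'}}}}$, where $\J$ is taken to contain $\jtaxb$ when $\rle = \tl$ and $\jfaxb$ when $\rle = \fourl$. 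Since $\seq[\lseq]{}$ occurs negatively in the full sequent and $\sub$ leaves the annotations of $\oseq$ untouched, intuitionistic propositional reasoning reverses this into $\proves{\J}{\subst{\real[r']{(\seq[\lseq]{\Omega'}, \oseq)}} \IMPLIES \real{(\seq[\lseq]{\Omega}, \oseq)}}$ with $r \colonequals r'' \UNI \restr[r']{\oseq}$, which is a realisation function on the conclusion by Lemma~\ref{lem:facts}.

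For the inductive case $\seq{} = \lseq_0, \br[4n]{\seq[\fseq']{}}$, I apply the induction hypothesis to the smaller context $\seq[\fseq']{}$ with the realisation $\restr[r']{\seq[\fseq']{\Omega'}}$, obtaining a realisation $\hat r$ on $\seq[\fseq']{\Omega}$ and a substitution $\sub'$ with $\proves{\J}{\subst[\sub']{\real[r']{\seq[\fseq']{\Omega'}}} \IMPLIES \real[\hat r]{\seq[\fseq']{\Omega}}}$. Lifting under the box by Lemma~\ref{lem:lifting}(1), using the proof term $(\sub' \comp r')(4n)$ for $\BOX[4n]$, yields a proof term $t$ with $\proves{\J}{\just{(\sub' \comp r')(4n)}{\subst[\sub']{\real[r']{\seq[\fseq']{\Omega'}}}} \IMPLIES \just{t}{\real[\hat r]{\seq[\fseq']{\Omega}}}}$; propositional reasoning that keeps $\real{\lseq_0}$ in the antecedent throughout then yields the claim with $r \colonequals \hat r \UNI \restr[(\sub' \comp r')]{\lseq_0} \UNI \set{(4n, t)}$ and $\sub$ the suitable extension of $\sub'$. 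This step is essentially the inductive step of $\andr$ in Lemma~\ref{lem:andrrule} specialised to a single premiss.

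The main obstacle is the direction mismatch inherited from Proposition~\ref{prop:shallowlrules}: because a left rule's principal formula lives in an antecedent position, that proposition hands back $\real[r'']{\seq[\lseq]{\Omega}} \IMPLIES \subst{\real[r']{\seq[\lseq]{\Omega'}}}$, pointing the wrong way for the top-down statement. Turning it around requires re-attaching the enclosing $\oseq$ carefully and checking that $\dom{\sub} \sset \negvar{\seq[\lseq]{\Omega'}}$ so that $\sub$ commutes past $\oseq$ (and, in the inductive step, past $\lseq_0$), after which the reversal is just the antitone behaviour of the antecedent of an intuitionistic implication. The rest --- verifying via Lemma~\ref{lem:facts} that every reassembled map is a genuine realisation function, and in particular that the satisfier self-referentiality restriction is respected whenever a $\diabr{\cdot}$-bracket or a $\dial$-instance is involved --- is routine bookkeeping.
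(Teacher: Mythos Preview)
Your overall strategy is the paper's: rewrite the rule instance as $\seq{\seq[\lseq]{\Omega},\oseq}$ versus $\seq{\seq[\lseq]{\Omega'},\oseq}$, invoke Proposition~\ref{prop:shallowlrules} in the base case to get the ``wrong-way'' implication on the LHS part, flip it by antitonicity of the antecedent, and run the inductive step via Lifting exactly as in the single-premiss specialisation of Lemma~\ref{lem:andrrule}. The paper in fact omits the inductive step with a pointer to Lemma~\ref{lem:andrrule}, so your spelled-out version is fine.

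There is, however, a concrete slip in your base case. You set $r \colonequals r'' \UNI \restr[r']{\oseq}$ and justify this by arguing that $\dom{\sub} \sset \negvar{\seq[\lseq]{\Omega'}}$ makes ``$\sub$ commute past $\oseq$''. That inference fails: $\sub$ acts on \emph{variables}, and the variables in $\dom{\sub}$ (for instance the $\svar[i]$ produced by a $\diabr[4i+3]{\cdot}$ inside $\seq[\lseq]{}$, cf.\ the $\boxl$ and $\fourl$ cases of Proposition~\ref{prop:shallowlrules}) can perfectly well occur inside the proof or satisfier \emph{terms} that $r'$ assigns to the positive annotations of $\oseq$. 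So in general $\subst{\real[r']{\oseq}} \neq \real[r']{\oseq}$, and from $\real[r'']{\seq[\lseq]{\Omega}} \IMPLIES \subst{\real[r']{\seq[\lseq]{\Omega'}}}$ together with the premiss you only reach $\real[r'']{\seq[\lseq]{\Omega}} \IMPLIES \subst{\real[r']{\oseq}}$, not your stated target $\real[r'']{\seq[\lseq]{\Omega}} \IMPLIES \real[r']{\oseq}$. The fix is exactly what the paper does: take $r \colonequals r'' \UNI \restr[(\sub \comp r')]{\oseq}$, so that the $\oseq$-component of the conclusion realises to $\real[\sub \comp r']{\oseq} = \subst{\real[r']{\oseq}}$ by Lemma~\ref{lem:facts}, and the reversal goes through. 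You already do the analogous thing correctly in the inductive step with $\restr[(\sub' \comp r')]{\lseq_0}$, so this is just a matter of applying the same discipline to $\oseq$.
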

\begin{proof}
    Note that a rule instance is of the form
    $$
    \vlinf{\rle}{}
    {
        \seq{\seq[\lseq]{\Omega'}, \oseq}
    }
    {
        \seq{\seq[\lseq]{\Omega}, \oseq}
    }
    $$
    
	We proceed by induction on $\depth{\seq{\ }}$.
	
	For the base case $\depth{\seq{\ }}=0$, we have $\seq[\lseq]{\Omega'}, \oseq$ as the premiss sequent and $\seq[\lseq]{\Omega}, \oseq$ as the conclusion sequent.
	By Proposition~\ref{prop:shallowlrules}, there exists a realisation $r''$ on $\seq[\lseq]{\Omega}$ and and a substitution $\sub$ with $\dom{\sub} \sset \negvar{\seq[\lseq]{\Omega}}$ such that
	$$
	\proves{\J}
	{
		\real[r'']{\seq[\lseq]{\Omega}} \IMPLIES \subst{\real[r']{\seq[\lseq]{\Omega'}}}
	}
	$$
	By propositional reasoning
	$$
	\proves{\J}
	{
		(\subst{\real[r']{\seq[\lseq]{\Omega'}}} \IMPLIES \subst{\real[r']{\oseq}})
		\IMPLIES
		(\real[r'']{\seq[\lseq]{\Omega}} \IMPLIES \real[\sub \comp r']{\oseq})
	}
	$$
	Set $r \colonequals r'' \UNI \restr[(\sub \comp r')]{\oseq}$ which is a realisation function on~$\seq[\lseq]{\Omega}, \oseq$ by Lemma~\ref{lem:facts}.
	
	The inductive case is similar to Lemma~\ref{lem:andrrule} and is omitted.
    \qed
\end{proof}
\begin{restatable}[$\IMP$-left rule -- bottom-up]{proposition}{propshallowcimpl}
\label{prop:shallowcimpl}
    Let $\J \in \set{\JIK, \JIKt, \JIKfour, \JISfour}$,
    Given a shallow annotated rule instance of the $\constrimpl$ rule:
	$$
	\vliinf{\constrimpl}{}
	{
		\seq[\Omega]{\inp{A \IMPLIES B}}, \lseq, \oseq
	}
	{
		\seq[\Omega^{\br{}}]{\out{A}}, \lseq
	}
	{
		\seq[\Omega]{\inp{B}}, \oseq
	}
	$$
	with realisation functions $r_1$ on $\seq[\Omega^{\br{}}]{\out{A}}, \lseq$ and $r_2$ on $\seq[\Omega]{\inp{B}}, \oseq$, there exists substitutions $\sub_1, \sub_2$ and realisation function $r$ on $\seq[\Omega]{\inp{A \IMPLIES B}}, \lseq, \oseq$ such that
	$$
	\proves{\J}
	{
		\subst[\sub_1]{\real[r_1]{(\seq[\Omega^{\br{}}]{\out{A}}, \lseq)}}
		\IMPLIES
		\subst[\sub_2]{\real[r_2]{(\seq[\Omega]{\inp{B}}, \oseq)}}
		\IMPLIES
		\real{(\seq[\Omega]{\inp{A \IMPLIES B}}, \lseq, \oseq)}
	}
	$$
\end{restatable}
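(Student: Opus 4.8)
The plan is to argue by induction on $\depth{\seq[\Omega]{\ }}$, exactly as in Propositions~\ref{prop:shalloworl} and~\ref{prop:shallowlrules}.

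In the base case $\seq[\Omega]{\ } = \lseq_0, \seq[]{\ }$ the output pruning does nothing, so $\seqprune[\Omega]{\out{A}} = \lseq_0, \out{A}$ and the rule is just the ordinary intuitionistic $\IMPLIES$-left rule with side context $\lseq_0$. The only subformulas shared by the two premisses are those of $\lseq_0$, which is negative throughout, so I would invoke Corollary~\ref{cor:nestedmerging} to obtain a realisation $r'$ on $\lseq_0$ and a substitution $\sub$ with $\dom{\sub} \sset \negvar{\lseq_0}$ such that $\proves{\J}{\real[r']{\lseq_0} \IMPLIES \subst{\real[r_i]{\lseq_0}}}$ for $i \in \set{1,2}$. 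Taking $\sub_1 \colonequals \sub_2 \colonequals \sub$ and letting $r$ agree with $r'$ on $\lseq_0$, with $r_1$ on $\lseq$ and on $A$, and with $r_2$ on $B$ and on $\oseq$ (a realisation function by Lemma~\ref{lem:facts}, since the annotation sets are pairwise disjoint and $\sub$ avoids $\lseq, A, B, \oseq$), the target implication follows by pure intuitionistic propositional reasoning: from the two substituted premiss realisations, the conjunct $\real[r_1]{A} \IMPLIES \real[r_2]{B}$ and the two merging facts for $\lseq_0$ one extracts $\real[r_1]{A}$, then $\real[r_2]{B}$, then $\real[r_2]{\oseq}$.

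For the inductive case $\seq[\Omega]{\ } = \lseq_0, \diabr[4i+3]{\lseq_1, \seq[\Omega']{\ }}$, pruning turns the diamond bracket into a box bracket, $\seqprune[\Omega]{\out{A}} = \lseq_0, \br[4l]{\lseq_1, \seqprune[\Omega']{\out{A}}}$, so premiss~1 realises to $(\real[r_1]{\lseq_0} \AND \real[r_1]{\lseq}) \IMPLIES \just{r_1(4l)}{X_1}$, premiss~2 to $(\real[r_2]{\lseq_0} \AND \sat{\svar[i]}{X_2}) \IMPLIES \real[r_2]{\oseq}$, and the conclusion to $(\real{\lseq_0} \AND \sat{\svar[i]}{X_0} \AND \real{\lseq}) \IMPLIES \real{\oseq}$, where $X_1, X_2, X_0$ are the realisations (under $r_1$, $r_2$ and a realisation $\hat r$ to be produced) of the sub-sequents $\lseq_1, \seqprune[\Omega']{\out{A}}$, $\lseq_1, \seq[\Omega']{\inp{B}}$ and $\lseq_1, \seq[\Omega']{\inp{A\IMPLIES B}}$ sitting inside the bracket. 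I would apply the induction hypothesis to this inner $\constrimpl$-instance (with $\lseq_1$ playing the role of the side LHS sequent and no side RHS), obtaining $\hat r$ and substitutions $\sub_1', \sub_2'$ with $\proves{\J}{\subst[\sub_1']{X_1} \IMPLIES \subst[\sub_2']{X_2} \IMPLIES X_0}$, then merge the outer $\lseq_0$ via Corollary~\ref{cor:nestedmerging} and propagate the merging substitution with the Substitution Lemma~\ref{lem:subst}; since its domain lies in $\negvar{\lseq_0}$ it leaves $X_0$ untouched.

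The heart of the argument is bridging the mismatch created by pruning: premiss~1 carries a \emph{proof} term $r_1(4l)$ for $X_1$, while premiss~2 and the conclusion carry a \emph{satisfier} $\svar[i]$. This is exactly what Lemma~\ref{lem:lifting}(2) delivers: applied to the substituted induction hypothesis $\proves{\J}{\subst{X_1} \IMPLIES \subst{X_2} \IMPLIES X_0}$ with $n = 1$, with $s_1$ the substituted $r_1(4l)$ and $\svar[i]$ in the role of the satisfier $\nu$, it yields a satisfier $\mu \equiv \mu(s_1, \svar[i])$ and $\proves{\J}{\just{s_1}{\subst{X_1}} \IMPLIES \sat{\svar[i]}{\subst{X_2}} \IMPLIES \sat{\mu}{X_0}}$. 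Letting $\sub'''$ be the substitution $\svar[i] \mapsto \mu$ — legitimate because the satisfier self-referentiality restriction guarantees $\svar[i] \notin X_0$, so that $\subst[\sub''']{\sat{\svar[i]}{X_0}} = \sat{\mu}{X_0}$ matches the conclusion — I would set $r \colonequals \hat r \UNI \restr[r'']{\lseq_0} \UNI \restr[r_1]{\lseq} \UNI \restr[r_2]{\oseq} \UNI \set{(4i+3, \svar[i])}$ (a realisation function by Lemma~\ref{lem:facts}) and take $\sub_1, \sub_2$ to be the appropriate compositions of $\sub'''$, the $\lseq_0$-merging substitution, and $\sub_1', \sub_2'$; the goal then closes by propositional reasoning along the same lines as the base case. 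The step I expect to cost the most is this proof-term/satisfier interface together with the attendant bookkeeping: recognising that Lemma~\ref{lem:lifting}(2) is the right tool, and then checking that the three layers of substitution keep their domains inside the prescribed $\negvar{\cdot}$ sets, introduce no new variables, and respect the satisfier self-referentiality restriction for the assembled $r$.
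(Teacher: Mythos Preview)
Your base case is fine, but the inductive step has a genuine gap in two linked places.

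First, the induction hypothesis you invoke is not an instance of the proposition you are proving. The proposition concerns a \emph{full} sequent $\seq[\Omega]{\inp{A \IMPLIES B}}, \lseq, \oseq$ with a genuine RHS $\oseq$; the ``inner $\constrimpl$-instance'' you want to feed it, namely $\seq[\Omega']{\inp{A\IMPLIES B}}, \lseq_1$, is a pure LHS-sequent with no RHS. There is nothing to play the role of $\oseq$, so the statement does not specialise to what you call $\subst[\sub_1']{X_1} \IMPLIES \subst[\sub_2']{X_2} \IMPLIES X_0$.

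Second, and more seriously, even if one grants that implication as a separate lemma, it is oriented the wrong way for the lifting step. After Lemma~\ref{lem:lifting}(2) you obtain $\just{s_1}{\subst{X_1}} \IMPLIES \sat{\svar[i]}{\subst{X_2}} \IMPLIES \sat{\mu}{X_0}$. But the conclusion sequent contributes $\sat{\svar[i]}{X_0}$ as a \emph{hypothesis}, while premiss~2 needs (a substituted) $\sat{\,\cdot\,}{X_2}$ as a hypothesis to discharge $\real{\oseq}$. Your lifted implication consumes $\sat{\svar[i]}{X_2}$ and produces $\sat{\mu}{X_0}$ --- exactly backwards. Your substitution $\sub''' \colon \svar[i] \mapsto \mu$ would have to act on the \emph{conclusion}'s realisation to make $\sat{\mu}{X_0}$ match, but the conclusion is what you are constructing with $r$ (and $r(4i+3)=\svar[i]$ is forced), not something you may substitute into.

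The paper fixes both issues at once by first isolating and proving, by induction on $\depth{\seq[\Omega]{\ }}$, the auxiliary ``three-place'' claim
\[
\proves{\J}{\subst[\sub_1]{\real[r_1]{\seq[\Omega^{\br{}}]{\out{A}}}} \IMPLIES \real{\seq[\Omega]{\inp{A\IMPLIES B}}} \IMPLIES \subst[\sub_2]{\real[r_2]{\seq[\Omega]{\inp{B}}}}}
\]
with the conclusion-context's realisation sitting in the \emph{middle} and the right-premiss-context's realisation as the final consequent. Lifting this yields $\just{s}{X_1} \IMPLIES \sat{\svar[k]}{X_0} \IMPLIES \sat{\mu}{X_2}$, which is the direction you actually need: now the substitution $\svar[k]\mapsto\mu$ lands in $\sub_2$ on premiss~2, and the self-referentiality restriction is used for $\real[r_2]{\seq[\Omega']{\inp{B}}}$, not for $X_0$. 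Only after this auxiliary claim is established does the paper attach $\lseq$ and $\oseq$ by propositional reasoning to recover the stated proposition. You should restructure your argument along these lines.
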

\begin{proof}
	We first show, the following: given realisations $r_1$ on $\seq[\Omega^{\br{}}]{\out{A}}$ and $r_2$ on $\seq[\Omega]{\inp{B}}$, there exists substitutions $\sub_1, \sub_2$ with $\dom{\sub_1} \sset \negvar{\seq[\Omega^{\br{}}]{\out{A}}}$ and $\dom{\sub_2} \sset \negvar{\real[r_2]{\seq[\Omega]{\inp{B}}}}$, and realisation $r$ such that
	$$
	\proves{\J}
	{
		\subst[\sub_1]{\real[r_1]{\seq[\Omega^{\br{}}]{\out{A}}}}
		\IMPLIES
		\real{\seq[\Omega]{\inp{A \IMPLIES B}}}
		\IMPLIES
		\subst[\sub_2]{\real[r_2]{\seq[\Omega]{\inp{B}}}}
	}
	$$
	by induction on $\depth{\seq[\Omega]{\ }}$.
	
	For the base case $\depth{\seq[\Omega]{\ }} = 0$, we have $\seq[\Omega]{\ } = \seq[\Omega^{\br{}}]{\ } = \lseq', \{\ \}$.
	Extend $r_2$ so that $\real[r_2]{A} = \real[r_1]{A}$.
	We have by Corollary~\ref{cor:nestedmerging} a realisation $r'$ and a substitution~$\sub'$ with
	\begin{align*}
		\proves{\J}{& \real[r']{\lseq'} \IMPLIES \subst[\sub']{\real[r_i]{\lseq'}}} \\
		\proves{\J}{& \real[r']{B} \IMPLIES \subst[\sub']{\real[r_i]{B}}} \\
		\proves{\J}{& \subst[\sub']{\real[r_1]{A}} \IMPLIES \real[r_1]{A}}
	\end{align*}
	By propositional reasoning we have
	$$
	\proves{\J}
	{
		(\subst[\sub']{\real[r_1]{\lseq'}} \IMPLIES \subst[\sub']{\real[r_1]{A}})
		\IMPLIES
		(\real[r']{\lseq'} \AND \real[r']{(A \IMPLIES B)})
		\IMPLIES
		(\subst[\sub']{\real[r_1]{\lseq'}} \IMPLIES \subst[\sub']{\real[r_1]{A}})
	}
	$$
	We set $r \colonequals r'$, $\sub_1 \colonequals \sub'$ and $\sub_2 \colonequals \sub'$.
	
	For the inductive case, we have $\seq[\Omega^{\br{}}]{\out{A}} = \lseq', \br[4n]{\seq[\Omega'^{\br{}}]{\out{A}}}$ and $\seq[\Omega]{\inp{B}} = \lseq', \diabr[4k+3]{\seq[\Omega']{\inp{B}}}$.
	By the inductive hypothesis, there exists a realisation~$r'$ and substitutions~$\sub_1'$, $\sub_2'$ with $\dom{\sub_1'} \sset \negvar{\seq[\Omega'^{\br{}}]{\out{A}}}$ and $\dom{\sub_2'} \sset \negvar{\seq[\Omega']{\inp{B}}}$ such that
	$$
		\proves{\J}
		{
			\subst[\sub_1']{\real[r_1]{\seq[\Omega'^{\br{}}]{\out{A}}}}
			\IMPLIES
			\real[r']{\seq[\Omega']{\inp{A \IMPLIES B}}}
			\IMPLIES
			\subst[\sub_2']{\real[r_2]{\seq[\Omega']{\inp{B}}}}
		}
	$$
	By Lemma~\ref{lem:lifting}, there exists a satisfier term $\mu \equiv \mu((\sub_1' \comp r_1)(4n), \svar[k])$ such that
	$$
	\proves{\J}
	{
		\just{(\sub_1' \comp r_1)(4n)}{(\subst[\sub_1']{\real[r_1]{\seq[\Omega'^{\br{}}]{\out{A}}}})}
		\IMPLIES
		\sat{\svar[k]}{\real[r']{\seq[\Omega']{\inp{A \IMPLIES B}}}}
		\IMPLIES
		\sat{\mu}{(\subst[\sub_2']{\real[r_2]{\seq[\Omega']{\inp{B}}}})}
	}
	$$
	Set $\sub_2''$ as the extension of $\sub_2'$ with $\svar[k] \mapsto \mu$.
	With the satisfier self-referentiality restriction, we have
	$$
	\sat{\mu}{(\subst[\sub_2']{\real[r_2]{\seq[\Omega']{\inp{B}}}})} = \sat{(\sub_2'' \comp r_2)(4k+3)}{(\subst[\sub_2'']{\real[r_2]{\seq[\Omega']{\inp{B}}}})}
	$$
	Now note when restricted to $\lseq'$, $(\sub_1' \comp r_1)$ and $(\sub_2'' \comp r_2)$ are realisations.
	By Theorem~\ref{thm:merging}, there exists a realisation $r''$ and substitution $\sub'$ with $\dom{\sub'} \sset \negvar{\lseq'}$ such that
	\begin{align*}
		\proves{\J}
		{&
			\real[r'']{\lseq'}
			\IMPLIES
			\subst[\sub']
			{
				\real[(\sub_1' \comp r_1)]{\lseq'}
			}
			= 
			\subst[(\sub' \comp \sub_1')]
			{
				\real[r_1]{\lseq'}
			}
		}
		\\
		\proves{\J}
		{&
			\real[r'']{\lseq'}
			\IMPLIES
			\subst[\sub']
			{
				\real[(\sub_2'' \comp r_2)]{\lseq'}
			}
			=
			\subst[(\sub' \comp \sub_2'')]
			{
				\real[r_2]{\lseq'}
			}
		}
	\end{align*}
	By the Substitution Lemma~\ref{lem:subst} and Lemma~\ref{lem:facts}
        \begin{multline*}
            \proves{\J}
	{
		\just{(\sub' \comp \sub_1' \comp r_1)(4n)}{(\subst[(\sub' \comp \sub_1')]{\real[r_1]{\seq[\Omega'^{\br{}}]{\out{A}}}})} \\
		\IMPLIES
		\sat{\svar[k]}{\real[(\sub' \comp r')]{\seq[\Omega']{\inp{A \IMPLIES B}}}}
		\IMPLIES
		\sat{\mu}{(\subst[(\sub' \comp \sub_2'')]{\real[r_2]{\seq[\Omega']{\inp{B}}}})}
	}
        \end{multline*}
	Set $\sub_1 \colonequals \sub' \comp \sub_1'$ and $\sub_2 \colonequals \sub' \comp \sub_2''$. 
	Equivalently,
	$$
	\proves{\J}
	{
		\subst[\sub_1]{\real[r_1]{\br[4n]{\seq[\Omega'^{\br{}}]{\out{A}}}}}
		\IMPLIES
		\sat{\pvar[k]}{\real[(\sub' \comp r')]{\seq[\Omega']{\inp{A \IMPLIES B}}}}
		\IMPLIES
		\subst[ \sub_2]{\real[r_2]{\diabr[4k+3]{\seq[\Omega']{\inp{B}}}}}
	}
	$$
	By propositional reasoning
	\begin{multline*}
		\proves{\J}
		{
			\subst[\sub_1]{\real[r_1]{(\lseq', \br[4n]{\seq[\Omega'^{\br{}}]{\out{A}}})}} = (\subst[\sub_1]{\real[r_1]{\lseq'}} \IMPLIES \subst[\sub_1]{\real[r_1]{\br[4n]{\seq[\Omega'^{\br{}}]{\out{A}}}}})
			\\
			\IMPLIES
			(\real[r'']{\lseq'} \AND \sat{\pvar[k]}{\real[(\sub' \comp r')]{\seq[\Omega']{\inp{A \IMPLIES B}}}})
			\\
			\IMPLIES
			(\subst[\sub_2]{\real[r_2]{\lseq'}} \AND \subst[ \sub_2]{\real[r_2]{\diabr[4k+3]{\seq[\Omega']{\inp{B}}}}})
			=
			\subst[\sub_2]{\real[r_2]{(\lseq', \diabr[4k+3]{\seq[\Omega']{\inp{B}}})}}
		}
	\end{multline*}
	Set $r \colonequals r'' \UNI (\sub' \comp r') \UNI \set{(4k+3, \svar[k])}$ which is a realisation function on~$\seq[\Omega]{\inp{A \IMPLIES B}}$ by Lemma~\ref{lem:facts}.
	
	Now, returning to the original problem with given realisations $r_1$ on $\seq[\Omega^{\br{}}]{\out{A}}, \lseq$ and $r_2$ on $\seq[\Omega]{\inp{B}}, \oseq$.
	We have a realisation~$r'$ on $\seq[\Omega]{\inp{A \IMPLIES B}}$ and substitutions $\sub_1$ and $\sub_2$ with $\dom{\sub_1} \sset \negvar{\seq[\Omega^{\br{}}]{\out{A}}}$ and $\dom{\sub_2} \sset \negvar{\seq[\Omega]{\inp{B}}}$ such that
	$$
	\proves{\J}
	{
		\subst[\sub_1]{\real[r_1]{\seq[\Omega^{\br{}}]{\out{A}}}}
		\IMPLIES
		\real[r']{\seq[\Omega]{\inp{A \IMPLIES B}}}
		\IMPLIES
		\subst[\sub_2]{\real[r_2]{\seq[\Omega]{\inp{B}}}}
	}
	$$
	By propositional reasoning
	$$
	\proves{\J}
	{
		(\subst[\sub_1]{\real[r_1]{\lseq}} \IMPLIES \subst[\sub_1]{\real[r_1]{\seq[\Omega^{\br{}}]{\out{A}}}})
		\IMPLIES
		(\real[r']{\seq[\Omega]{\inp{A \IMPLIES B}}} \AND \real[(\sub_1 \comp r_1)]{\lseq})
		\IMPLIES
		\subst[\sub_2]{\real[r_2]{\seq[\Omega]{\inp{B}}}}
	}
	$$
	and then
	$$
	\proves{\J}
	{
		\subst[\sub_1]{\real[r_1]{(\seq[\Omega^{\br{}}]{\out{A}}, \lseq)}}
		\IMPLIES
		\subst[\sub_2]{\real[r_2]{(\seq[\Omega]{\inp{B}}, \oseq)}}
		\IMPLIES
		((\real[r']{\seq[\Omega]{\inp{A \IMPLIES B}}} \AND \real[(\sub_1 \comp r_1)]{\lseq}) \IMPLIES \real[(\sub_2 \comp r_2)]{\oseq})
	}
	$$
	Set $r \colonequals r' \UNI \restr[(\sub_1 \comp r_1)]{\lseq} \UNI \restr[(\sub_2 \comp r_2)]{\oseq}$ which is a realisation function on~$\seq[\Omega]{\inp{A \IMPLIES B}}, \lseq, \oseq$ by Lemma 4.
    \qed
\end{proof}
\begin{restatable}[\todo{}tilde $\IMP$-left rule]{lemma}{lemimplrule}
\label{lem:implrule}
    Let $\J \in \set{\JIK, \JIKt, \JIKfour, \JISfour}$.
    Given an annotated rule instance of the $\constrimpl$ rule:
	$$
	\vliinf{\constrimpl}{}
	{
		\seq{\seq[\Omega]{\inp{A \IMPLIES B}}, \lseq, \oseq}
	}
	{
		\seq{\seq[\Omega^{\br{}}]{\out{A}}, \lseq}
	}
	{
		\seq{\seq[\Omega]{\inp{B}}, \oseq}
	}
	$$
	with realisation functions $r_1$ on $\seq{\seq[\Omega^{\br{}}]{\out{A}}, \lseq}$ and $r_2$ on $\seq{\seq[\Omega]{\inp{B}}, \oseq}$, there exists substitutions $\sub_1, \sub_2$ with $\dom{\sub_1} \sset \negvar{\seq{\seq[\Omega^{\br{}}]{\out{A}}, \lseq}}$ and $\dom{\sub_1} \sset \negvar{\seq{\seq[\Omega]{\inp{B}}, \oseq}}$, and realisation function $r$ on~$\seq{\seq[\Omega]{\inp{A \IMPLIES B}}, \lseq, \oseq}$ such that
	$$
	\proves{\J}
	{
		\subst[\sub_1]{\real[r_1]{\seq{\seq[\Omega^{\br{}}]{\out{A}}, \lseq}}}
		\IMPLIES
		\subst[\sub_2]{\real[r_2]{\seq{\seq[\Omega]{\inp{B}}, \oseq}}}
		\IMPLIES
		\real{\seq{\seq[\Omega]{\inp{A \IMPLIES B}}, \lseq, \oseq}}
	}
	$$
\end{restatable}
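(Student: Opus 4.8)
The plan is to lift the shallow statement, Proposition~\ref{prop:shallowcimpl}, through the surrounding context by induction on $\depth{\seq{\ }}$, following the template of the other ``top-down'' left-rule lemmas (Lemma~\ref{lem:leftrules}) and, for the two-premiss bookkeeping, the inductive step of Lemma~\ref{lem:andrrule}. Throughout I will use that $X \IMPLIES Y \IMPLIES Z$ and $(X \AND Y) \IMPLIES Z$ are interderivable, so the two premiss realisations can be manipulated as a single conjoined antecedent, exactly as for $\andr$ and $\orl$.

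For the base case $\depth{\seq{\ }} = 0$ the outer context contributes only a fixed LHS sequent $\lseq_0$, its output part being the $\oseq$ already shown in the rule; absorbing $\lseq_0$ into $\lseq$, the rule instance is a shallow instance of $\constrimpl$, and Proposition~\ref{prop:shallowcimpl} supplies the required $r$, $\sub_1$, $\sub_2$ (with the stated domain and no-new-variable conditions) directly. Note that all the genuinely modal content — the output pruning turning a $\diabr{\cdot}$-layer of $\Omega$ into a $\br{\cdot}$-layer of $\Omega^{\prunesymb}$, the appeal to $\jkax{4}$ to form the $\jupdt{\cdot}{\cdot}$-term, and the care around the satisfier self-referentiality restriction — is already discharged there; the present lemma merely transports that implication through a $\BOX$-context.

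For the inductive step, write $\seq{\ } = \lseq_1, \br[4n]{\seq[\fseq']{\ }}$. First I would restrict $r_1$ and $r_2$ to $\seq[\fseq']{\cdot}$ and apply the induction hypothesis, obtaining a realisation $r'$ on $\seq[\fseq']{\seq[\Omega]{\inp{A \IMPLIES B}}, \lseq, \oseq}$ and substitutions $\sub_1', \sub_2'$ relating the two pruned premisses (minus their outer box) to $\real[r']{\seq[\fseq']{\cdots}}$. Next, lift this implication through the box with the Lifting Lemma~\ref{lem:lifting} (proof-term case): this produces a proof term $t$, to be taken as $r(4n)$, with $\subst[\sub_1']{\real[r_1]{\br[4n]{\cdots}}} \IMPLIES \subst[\sub_2']{\real[r_2]{\br[4n]{\cdots}}} \IMPLIES \just{t}{\real[r']{\seq[\fseq']{\cdots}}}$. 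Since $\restr[(\sub_1'\comp r_1)]{\lseq_1}$ and $\restr[(\sub_2'\comp r_2)]{\lseq_1}$ need not coincide, I would reconcile them via Corollary~\ref{cor:nestedmerging}(2), getting a common realisation $r''$ on $\lseq_1$ and a substitution $\sub''$ with $\dom{\sub''} \sset \negvar{\lseq_1}$ introducing no new variables, apply $\sub''$ everywhere using the Substitution Lemma~\ref{lem:subst}, set $\sub_i \colonequals \sub'' \comp \sub_i'$ and $r \colonequals (\sub'' \comp r') \UNI r'' \UNI \set{(4n, t)}$ (a realisation function by Lemma~\ref{lem:facts}), and conclude by propositional reasoning, combining the lifted implication with the merging facts $\real[r'']{\lseq_1} \IMPLIES \subst[\sub'']{\real[\sub_i'\comp r_i]{\lseq_1}}$. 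The domain and no-new-variable conditions on $\sub_1$ and $\sub_2$ are inherited from those furnished by the induction hypothesis, by Proposition~\ref{prop:shallowcimpl}, and by the merging corollary.

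The hard part will not be conceptual but combinatorial: keeping the two branch realisations, the accumulating substitutions, and the freshly introduced proof and satisfier terms mutually consistent as they are threaded simultaneously through the shared context $\seq{\ }$ and the shared LHS sequent $\lseq$, and checking at each stage that the composite is still a genuine realisation function (Lemma~\ref{lem:facts}) with substitutions confined to $\negvar{\cdot}$. The logical substance is inherited from Proposition~\ref{prop:shallowcimpl} and the inductive schema from Lemma~\ref{lem:andrrule}; no ingredient beyond their combination is expected to be needed.
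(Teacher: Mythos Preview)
Your proposal is correct and follows the paper's proof: induction on $\depth{\seq{\ }}$ with base case by Proposition~\ref{prop:shallowcimpl} and inductive step via the induction hypothesis, merging on the outer LHS $\lseq_1$ (Corollary~\ref{cor:nestedmerging}), the Substitution Lemma, and lifting through $\br[4n]{\cdot}$ (Lemma~\ref{lem:lifting}). One minor bookkeeping point: the paper merges and substitutes \emph{before} lifting, so that the fresh proof term $t$ is built from the already-substituted data $(\sub_i \comp r_i)(4n)$ and can be taken directly as $r(4n)$; in your order (lift first, then apply $\sub''$ everywhere) the correct assignment would be $r(4n) = \subst[\sub'']{t}$ rather than $t$.
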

\begin{proof}
    We proceed by induction on~$\depth{\seq{\ }}$.
	
	The base case is covered by Proposition~\ref{prop:shallowcimpl}.

	For the inductive case~$\seq{\ } = \lseq, \br[4n]{\seq[\fseq']{\ }}$ for some annotated LHS sequent~$\lseq$ and context~$\seq[\fseq']{\ }$.
	Using the inductive hypothesis, there exists a realisation $r'$ on $\seq[\fseq']{\seq[\Omega]{\inp{A \IMPLIES B}}, \lseq, \oseq}$ and substitutions~$\sub_1'$ and $\sub_2'$ with~$\dom{\sub_1'} \sset \negvar{\seq[\fseq']{\seq[\Omega^{\br{}}]{\out{A}}, \lseq}}$ and $\dom{\sub_2'} \sset \negvar{\seq[\fseq']{\seq[\Omega]{\inp{B}}, \oseq}}$ such that
	$$\proves{\J}{\subst[\sub_1']{\real[r_1]{\seq[\fseq']{\seq[\Omega^{\br{}}]{\out{A}}, \lseq}}} \IMPLIES \subst[\sub_2']{\real[r_2]{\seq[\fseq']{\seq[\Omega]{\inp{B}}, \oseq}}} \IMPLIES \real[r']{\seq[\fseq']{\seq[\Omega]{\inp{A \IMPLIES B}}, \lseq, \oseq}}}$$
	By Lemma~\ref{lem:facts}, $\restr[(\sub_i' \comp r_i)]{\lseq}$ is a realisation on~$\lseq$ for each~$i \in \set{1,2}$.
	By Corollary~\ref{cor:nestedmerging}, there exists a realisation~$r''$ on~$\lseq$ and substitution~$\sub''$ with~$\dom{\sub''} \sset \negvar{\lseq}$ such that
	$$\proves{\J}{\real[r'']{\lseq} \IMPLIES \subst[\sub'']{\real[\sub_i' \comp r_i]{\lseq}}}$$
	for~$i \in \set{1, 2}$.
	By the Substitution Lemma~\ref{lem:subst}
        \begin{multline*}
            \proves{\J}{\subst[(\sub'' \comp \sub_1')]{\real[r_1]{\seq[\fseq']{\seq[\Omega^{\br{}}]{\out{A}}, \lseq}}} 
            \\
            \IMPLIES \subst[(\sub'' \comp \sub_2')]{\real[r_2]{\seq[\fseq']{\seq[\Omega]{\inp{B}}, \oseq}}} \IMPLIES \subst[\sub'']{\real[r']{\seq[\fseq']{\seq[\Omega]{\inp{A \IMPLIES B}}, \lseq, \oseq}}}}
        \end{multline*}	
	Set~$\sub_i = \sub'' \comp \sub_i'$ for each $i \in \set{1, 2}$.
	By the Lifting Lemma~\ref{lem:lifting}, there exists a proof term~$t \equiv t((\sub_1 \circ r_1)(4n), (\sub_2 \circ r_2)(4n))$ such that
        \begin{multline*}
            \proves{\J}
            {
                \just{(\sub_1 \comp r_1)(4n)}{\subst[\sub_1]{\real[r_1]{\seq[\fseq']{\seq[\Omega^{\br{}}]{\out{A}}, \lseq}}}}
                \\
                \IMPLIES
                \just{(\sub_2 \comp r_2)(4n)}{\subst[\sub_2]{\real[r_2]{\seq[\fseq']{\seq[\Omega]{\inp{B}}, \oseq}}}}
                \IMPLIES
                \just{t}{\real[\sub'' \comp r']{\seq[\fseq']{\seq[\Omega]{\inp{A \IMPLIES B}}, \lseq, \oseq}}}
            }
        \end{multline*}
	Set~$r \colonequals (\sub'' \comp r') \UNI r'' \UNI \set{(4n, t)}$ which is a realisation function on $\lseq, \br[4n]{\seq[\fseq']{\seq[\Omega]{\inp{A \IMPLIES B}}, \lseq, \oseq}}$ by Lemma~\ref{lem:facts}. 
        By propositional reasoning
        \begin{multline*}
            \proves{\J}
            {
                \subst[\sub_1]{\real[r_1]{(\lseq, \br[4n]{\seq[\fseq']{\seq[\Omega^{\br{}}]{\out{A}}, \lseq}})}}
                \\
                \IMPLIES
                \subst[\sub_2]{\real[r_2]{(\lseq, \br[4n]{\seq[\fseq']{\seq[\Omega]{\inp{B}}, \oseq}})}}
                \IMPLIES
                \real{\lseq, \br[4n]{\seq[\fseq']{\seq[\Omega]{\inp{A \IMPLIES B}}, \lseq, \oseq}}}
            }
        \end{multline*}
        \qed
\end{proof}

\begin{proof}[Proof of Theorem~\ref{thm:nestreal}]
	We proceed by induction on the structure of the proof~$\pi$.
	
	For the base case, $\fseq$ is a conclusion of the~$\id$ or $\botrule$ rule.
	Apply Lemma~\ref{lem:idrule} or~\ref{lem:botrule} to construct a realisation for~$\fseq$.
	
	For the inductive case, $\fseq$ is the conclusion of a rule~$\rle$
	$$
	\vlderivation
	{
		\vliiin{\rle}{}{\fseq}
		{
			\vlhtr{\pi_1}{\fseq_1}
		}
		{\vlhy{\dots}}
		{
			\vlhtr{\pi_i}{\fseq_i}
		}
	}
	$$
	for smaller proofs~$\pi_1, \dots, \pi_i$ and premisses~$\fseq_1, \dots, \fseq_i$ where $i \in \set{1,2}$.
	By the inductive hypothesis, we have realisations $r_i$ on $\fseq_i$ such that
	$$
	\proves{\JIL}{\real[r_i]{\fseq_i}}
	$$
	Applying Lemma~\ref{lem:andrrule}, \ref{lem:orlrule}, \ref{lem:rightrules}, \ref{lem:leftrules} or~\ref{lem:implrule} corresponding to the rule~$\rle$, there exists a realisation~$r$ on~$\fseq$ and substitutions~$\sub_1, \dots, \sub_i$ (which can be $\Idnty$ if not mentioned in the Lemmas) such that
	$$
	\proves{\JIL}
	{
		\subst[\sub_1]{\real[r_1]{\fseq_1}}
		\IMPLIES
		\dots
		\IMPLIES
		\subst[\sub_i]{\real[r_i]{\fseq_i}}
		\IMPLIES
		\real{\fseq}
	}
	$$
	By the Substitution Lemma~\ref{lem:subst}
	$$
		\proves{\JIL}{\subst[\sub_j]{\real[r_j]{\fseq_j}}}
	$$
	for each $j \in \set{1, \dots, i}$.
	Using modus ponens, we achieve
		$$
	\proves{\JIL}
	{
		\real{\fseq}
	}
	$$
    \qed
\end{proof}

%


\end{document}